\newcommand{\Z}{\mathbb{Z}}
\newcommand{\N}{\mathbb{N}}
\newtheorem{lemma}{Lemma}
\newtheorem{claim}{Claim}
\newtheorem{corollary}{Corollary}
\newtheorem{definition}{Definition}
\newtheorem{theorem}{Theorem}
\newtheorem{example}{Example}
\definecolor{deepsaffron}{rgb}{1.0, 0.6, 0.2}
\definecolor{islamicgreen}{rgb}{0.0, 0.56, 0.0}
\definecolor{orange(webcolor)}{rgb}{1.0, 0.65, 0.0}
\definecolor{babyblue}{rgb}{0.54, 0.81, 0.94}
\definecolor{tomato}{rgb}{1.0, 0.39, 0.28}
\definecolor{green(ryb)}{rgb}{0.4, 0.69, 0.2}
\definecolor{azure(colorwheel)}{rgb}{0.0, 0.5, 1.0}
\tikzset{
  double arrow/.style args={#1 colored by #2 and #3}{
    -stealth,line width=#1,#2, 
    postaction={draw,-stealth,#3,line width=(#1)-.5pt,
                shorten <=.5pt,shorten >=1pt}, 
  },
  double arrow debut/.style args={#1 colored by #2 and #3}{
    -stealth,line width=#1,#2, -, 
    postaction={draw,-stealth,#3,line width=(#1)-.5pt,
                shorten <=.3pt,shorten >=-.4pt, -}, 
  },
  double arrow milieu/.style args={#1 colored by #2 and #3}{
    -stealth,line width=#1,#2, -, 
    postaction={draw,-stealth,#3,line width=(#1)-.5pt,
                shorten <=-.1pt,shorten >=-.1pt, -}, 
  },
  double arrow fin/.style args={#1 colored by #2 and #3}{
    -stealth,line width=#1,#2, 
    postaction={draw,-stealth,#3,line width=(#1)-.5pt,
                shorten <=-.5pt,shorten >=.5pt}, 
  },
   paath/.pic = {
    \draw [ultra thick, draw = green, fill = green!50] (-2,0) -- (-2,1) -- (-1,1) -- (-1,2) 
    -- (0,2) -- (0,3) -- (1,3) -- (1,2) -- (2,2) -- (2,1) -- (3,1)-- (3,0) -- (2,0) -- (2,-1)
    -- (1,-1) -- (1,-2) -- (0,-2) -- (0,-1) -- (-1,-1) -- (-1,0)   -- cycle;     
  },
  paaath/.pic = {
    \draw [ultra thick, draw = green] (-2,0) -- (-2,1) -- (-1,1) -- (-1,2) 
    -- (0,2) -- (0,3) -- (1,3) -- (1,2) -- (2,2) -- (2,1) -- (3,1)-- (3,0) -- (2,0) -- (2,-1)
    -- (1,-1) -- (1,-2) -- (0,-2) -- (0,-1) -- (-1,-1) -- (-1,0)   -- cycle;     
  }
}
\begin{document}
\title{On Turedo Hierarchies and Intrinsic Universality}
\author{Samuel Nalin\thanks{Univ. Orléans, INSA Centre Val de Loire, LIFO EA 4022, Orléans, France} \and Guillaume Theyssier\thanks{I2M, Université Aix-Marseille, CNRS, Marseille, France}}

\maketitle
\begin{abstract}
  This paper is about turedos, which are Turing machine whose head can move in the plane (or in a higher-dimensional space) but only in a self-avoiding way, by putting marks (letters) on visited positions and moving only to unmarked, therefore unvisited, positions.
  The key parameter of turedos is their lookup radius: the distance up to which the head can look around in order to make its decision of where to move to and what mark to write.
  In this paper we study the hierarchy of turedos according to their lookup radius and the dimension of space using notions of simulation up to spatio-temporal rescaling (a standard approach in cellular automata or self-assembly systems).
  We establish that there is a rich interplay between the turedo parameters and the notion of simulation considered.
  We show in particular, for the most liberal simulations, the existence of 3D turedos of radius 1 that are intrinsically universal for all radii, but that this is impossible in dimension 2, where some radius 2 turedo are impossible to simulate at radius 1.
  Using stricter notions of simulation, intrinsic universality becomes impossible, even in dimension 3, and there is a strict radius hierarchy.
  Finally, when restricting to radius 1, universality is again possible in dimension 3, but not in dimension 2, where we show however that a radius 3 turedo can simulate all radius 1 turedos.
\end{abstract}

\section{Introduction}
\label{sec:intro}

The field of biomolecular computing has given rise to several theoretical models that describe growing process of (molecular) assemblies governed by local interaction or gluing rules.
One of the most studied one, the abstract Tile Assembly model (aTAM) \cite{Winfree_phd}, describes a process where the growth can happen anywhere asynchronously.
It was successfully implemented in vitro as DNA self-assembly \cite{RothemundPapadakisWinfree2004}.
In oritatami systems, introduced more recently \cite{GearyMeunierSchabanelSeki2019IJMS,GeMeScSe2016} and inspired from RNA origami \cite{GearyRothemundAndersen2014,GearyGMRA21}, the growth happens at a unique given point of the assembly and in a sequential manner.
Despite their obvious differences as models of biomolecular systems, they share common features as computational models.
Both are limited by the fact that they can only grow shapes (and not erase them), but both were shown capable of embedding universal computations in different ways \cite{GeMeScSe2018,GearyMSS18,CookFS11,LathropLPS11,Winfree_phd}.
Recently, it was also shown through new results on oritatami systems that the infinite limit shapes both models can generate from finite seeds have the same computational complexity and the same possible densities of occupied position in the plane \cite{STACS22}.
The key ingredients of these new results were, on one hand, the introduction of a new model, called \emph{turedos}, which abstracts away low level details of oritatami systems and is easier to program, and on the other hand, a proof that oritatami systems can simulate a large family of turedos.
This underlines the interest of turedos but also the surprising capabilities of models of sequential growth process in the plane.
The present work is entirely focused on the turedo model and aims at better understanding its limitations and its expressive power as a growth process and computational model.

\textbf{Turedos.} Intuitively, a turedo is a Turing machine whose head can move in the plane or in the 3-dimensional space but only in a self-avoiding way, \textit{i.e.} without going back to a previously visited position.
More precisely, the turedo's head can only move to empty or unmarked positions, and it must put a mark on each visited position when leaving it.
The key ingredient of a turedo, and what makes its main computational power, is its \emph{lookup radius}: when deciding a move and what mark to put on the visited position, the turedo has access to the local configuration of marks around its position, up to some finite distance.
Like any Turing machine, a turedo also has a set of internal head states.
Without entering into details, an oritatami system consists of a ``molecule'' made of ``beads'' that can attract each other. 
The molecule grows at each step following a periodic bead sequence and folds as follows: the $\delta$ most recently produced beads are free to move around to look for the position that maximizes the number of bonds they can make with each other ; then the first (oldest) bead among the $\delta$ most recent ones is fixed according to that position, a new free bead is added and the process iterates.
This behavior can be realized in a turedo of radius $\delta+1$.
The main result of \cite{STACS22} is that oritatami of delay $3$ can simulate turedos of radius $1$.
In fact, turedos also naturally capture variants of oritatami systems: for instance, one can imagine negative (repulsive) bonds in the process of maximizing gluing strength, or add local rules forbidding two bead type to be neighbor of each other.
Therefore negative results on turedos become negative results on oritatami systems, but also potential variants of them.

\textbf{Simulations and universality.} The main question we address here is how the capabilities of turedos change with their radius and what is the role of the dimension of space.
We are interested in qualitative differences and don't compare turedos move by move and cell by cell.
We rather use a notion of simulation allowing spatio-temporal rescaling, similar to the one used in cellular automata \cite{bulking2,BeckerMOT18}, in aTAM \cite{MeunierWoods2017,DLPSSW2012,PSTWW14} or in the simulation result of turedo by oritatamis \cite{STACS22}.
Intuitively a cell can be simulated by a block of cells, and a step by a finite number of steps.
We are naturally interested by hierarchy results (existence of turedos that can't be simulated by lower radius ones), or on the contrary by universality results (existence of turedos that can simulate all turedos or a large class of them).
For instance, the existence of intrinsically universal systems in the aTAM model was much studied and it was shown that it crucially depends on natural parameters of the model \cite{MeunierWoods2017,PSTWW14,DLPSSW2012,Hendricks_2016}.
However, as natural as it might seem, a formal notion of simulation is never a neutral choice when tackling these questions \cite{BeckerMOT18,bulking2,Boyer_2010}.
Thus, in addition to the parameters of the turedo model we also study the influence of the notion of simulation itself.
For this we identify various ingredients, in particular the possibility of fuzz, \textit{i.e.} the tolerance allowed for the simulating turedo to visit some regions of space close to the ongoing assembly that represent empty cells of the simulated turedo and are therefore not yet been visited by it.
Note that a similar notion of fuzz appeared in the intrinsic universality results on aTAM \cite{DLPSSW2012}.
We end up with three notions of simulation: the rigorous, the fuzzless and the liberal ones.

\textbf{Our contributions.} After formalizing all the concepts mentioned so far (Section~\ref{sec:def}), we establish a surprisingly diverse general picture of the capabilities of turedos of simulating each other. Separating the negative results (Section~\ref{sec:hier}) from the positive ones (Section~\ref{sec:univ}),
our results are the following:
\begin{enumerate}
\item under fuzzless simulation, intrinsic universality is impossible whatever the dimension, there is a radius hierarchy, and actually the impossibility strikes at radius $2$: no turedo can fuzzlessly simulate all radius $2$ turedos (Theorem~\ref{thm:nofuzznofun});
\item when restricting to radius $1$, rigorous intrinsic universality is possible in dimension $3$ (Theorem~\ref{thm:3Dunivr1rigor}), but not in dimension $2$ (Theorem~\ref{theo:noradius1univ});
\item however, we built a 2D turedo of radius 3 which is able to rigorously simulate all 2D turedo of radius 1 (Theorem~\ref{thm:heatsink});
\item for liberal simulations, we establish intrinsic universality in dimension 3 and a complete hierarchy collapse at radius 1 (Corollary~\ref{coro:fulluniversality3D});
\item finally we show that there is a 2D turedo of radius 2 which is impossible to simulate at radius 1, even under liberal simulations (Theorem~\ref{thm:radius2kolmojordan}).
\end{enumerate}

Besides the above results, our contribution also lies in the constructions and proof techniques.
For instance, the negative result 5 is based on a general lemma for 2D turedos of radius 1, which bounds the quantity of information (using Kolmogorov complexity) that can be carried from the seed to another connected component of the plane when the plane is divided by a 4-connected path.
As another example on the constructive side, result number 3 above uses a novel construction technique (called heat sink trick) that fully exploits the potential of radius 3 and could be used in any place where two unbounded streams of information have to be crossed.
Finally, in Section~\ref{sec:persp}, we discuss various questions left open and present what we believe are promising future research directions.

\section{Definitions}
\label{sec:def}

\newcommand\boule[2]{\ensuremath{B_{#1}(#2)}}
\newcommand\blanc{\ensuremath{\bot}}
\newcommand\unturedo{\ensuremath{T}}
\newcommand\globstat[1]{\mathcal{G}_{#1}}
\newcommand\globmap[1]{F_{#1}}

\newcommand\gst{\globstat{\unturedo}}
\newcommand\gma{\globmap{\unturedo}}
\newcommand\patt[3]{\ensuremath{#1[#2;#3]}}

We denote by $\N$ the set of natural numbers (including $0$), by $\N_+$ the positive ones, and by $\Z$ the set of integers.
We consider turedos on $\Z^d$ for ${d=2}$ or $3$ (and in particular, we don't use the hexagonal lattice on the plane, mostly to simplify notations and dimension change).
We fix a blank symbol \blanc used to represent empty positions and common to all turedos.
The ball of radius $r$ in dimension $d$, denoted \boule{d}{r}, is the set of positions reachable in $r$ elementary moves (moves along vector of the canonical based of $\Z^d$) from the origin.
\boule{d}{1} will always be the set of possible head moves in dimension $d$.
We denote by ${\patt{c}{z}{S}}$ the pattern of shape $S$ around position $z$ in configuration $c$, \textit{i.e.} the map ${z'\in S\mapsto c_{z+z'}}$.

\begin{definition}[Turedo]
  A \emph{turedo} of dimension $d$ and radius $r$ is a triple ${\unturedo=(A,Q,\delta)}$ where $A$ is its finite \emph{alphabet} with ${\blanc\in A}$, $Q$ is its finite set of \emph{head states} and \[\delta : Q\times A^{\boule{d}{r}}\to Q\times A\setminus\{\blanc\}\times \boule{d}{1}\] is its \emph{local transition map}. A \emph{global state} of $\unturedo$ is a triple ${(c,z,q)\in \gst = A^{\Z^d}\times\Z^d\times Q}$ specifying a configuration, a position and a head state. The \emph{global transition map} ${\gma:\gst\to\gst}$ associated to ${\unturedo}$ is defined by: 
  \[\gma(c,z,q) =
    \begin{cases}
      (c,z,q) &\text{ if $c(z)\neq\blanc$ or $c(z+\mu)\neq\blanc$}\\
      (c',z+\mu,q') &\text{ else,}
    \end{cases}
  \]
  where ${(q',a,\mu) = \delta(q,\patt{c}{z}{\boule{d}{r}})}$ and configuration ${c'}$ is defined by ${c'(z)=a}$ and ${c'(z')=c(z')}$ for all ${z'\neq z}$.
\end{definition}

\newcommand\dom[1]{\mathcal{D}(#1)}
The \emph{domain} of a global state ${(c,z,q)}$ is the set of non-blank positions of $c$ plus the head position, formally: ${\dom{c,z,q}=\{z\}\cup\bigcup\{z':c(z')\neq\blanc\}}$
A global state ${(c,z,q)}$ is finite if its domain is finite.
We are interested in orbits starting from finite initial global states, called finite seeds.

\begin{example}[The spiral-XOR turedo]\label{exa:spiralxor}
  Let ${A=\{\blanc,0,1\}}$ and ${Q = \{\leftarrow,\uparrow,\rightarrow,\downarrow\}}$.
  The \emph{spiral-XOR} turedo has the following local rule.
  The head holds a direction $d\in Q$ and tries to move in that direction and let behind, as letter of $A$, the sum modulo $2$ of the states of neighboring (non $\blanc$) positions.
  When it can do the $d$ move, it changes its internal state (counter-clockwise), when it can't it takes the first available move (clockwise) and doesn't change its state.
  Of course if there is no neighbor in state $\blanc$ then the turedo is blocked.
  The following table shows the local transition map up to rotation of $d$ (the red arrow can be rotated and all blue arrows are defined relatively to the red arrow):
  \begin{center}
    \begin{tabular}{c|c||c|c|c}
      state & 1st empty neighbor &new state &letter & move\\
      \hline
      \color{red}{$\uparrow$}&\color{blue}{$\uparrow$} (+0) &\color{blue}{$\leftarrow$} (-1) & $\sum\bmod 2$ & \color{blue}{$\uparrow$} (+0)\\
      \hline
      \color{red}{$\uparrow$}& \color{blue}{$\rightarrow$} (+1)&\color{blue}{$\uparrow$} (+0) & $\sum\bmod 2$ & \color{blue}{$\rightarrow$} (+1)\\
      \hline
      \color{red}{$\uparrow$}& \color{blue}{$\downarrow$} (+2)&\color{blue}{$\uparrow$} (+0) & $\sum\bmod 2$ & \color{blue}{$\downarrow$} (+2)\\
      \hline
      \color{red}{$\uparrow$}& \color{blue}{$\leftarrow$} (+3)&\color{blue}{$\uparrow$} (+0) & $\sum\bmod 2$ & \color{blue}{$\leftarrow$} (+3)\\
    \end{tabular}
  \end{center}
  See Figure~\ref{fig:spiralxor} for an example of orbit.
\end{example}

\newcommand\stateleft[2]{\fill[fill=black] (#1,#2)++(0,.5)--+(1,-.5)--+(1,.5)--cycle;} 
\newcommand\stateZZOO[2]{\stateZ{#1}{#2}} 
\newcommand\stateZ[2]{\fill[fill=black] (#1,#2)--+(1,.5)--+(0,1)--cycle;} 
\newcommand\stateZZOZO[2]{\fill[fill=green,fill opacity=.3] (#1,#2) rectangle +(1,1);}
\newcommand\stateOZOZO[2]{\fill[fill=yellow,fill opacity=.3] (#1,#2) rectangle +(1,1);}

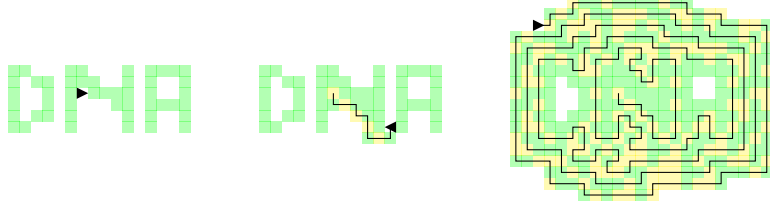
\begin{figure}
  \centering
  \begin{tikzpicture}[scale=.15]
    \begin{scope}[shift={(-47,0)}]
      \stateZZOZO{19}{22}\stateZZOZO{19}{23}\stateZZOZO{19}{24}\stateZZOZO{19}{25}\stateZZOZO{19}{26}\stateZZOZO{19}{27}\stateZZOZO{20}{22}\stateZZOZO{20}{27}\stateZZOZO{21}{23}\stateZZOZO{21}{26}\stateZZOZO{22}{23}\stateZZOZO{22}{24}\stateZZOZO{22}{25}\stateZZOZO{22}{26}\stateZZOZO{24}{22}\stateZZOZO{24}{23}\stateZZOZO{24}{24}\stateZZOZO{24}{25}\stateZZOZO{24}{26}\stateZZOZO{24}{27}\stateZ{25}{25}\stateZZOZO{25}{26}\stateZZOZO{25}{27}\stateZZOZO{26}{25}\stateZZOZO{26}{26}\stateZZOZO{27}{25}\stateZZOZO{28}{24}\stateZZOZO{28}{25}\stateZZOZO{29}{22}\stateZZOZO{29}{23}\stateZZOZO{29}{24}\stateZZOZO{29}{25}\stateZZOZO{29}{26}\stateZZOZO{29}{27}\stateZZOZO{31}{22}\stateZZOZO{31}{23}\stateZZOZO{31}{24}\stateZZOZO{31}{25}\stateZZOZO{31}{26}\stateZZOZO{31}{27}\stateZZOZO{32}{24}\stateZZOZO{32}{27}\stateZZOZO{33}{24}\stateZZOZO{33}{27}\stateZZOZO{34}{22}\stateZZOZO{34}{23}\stateZZOZO{34}{24}\stateZZOZO{34}{25}\stateZZOZO{34}{26}\stateZZOZO{34}{27}
      \draw (27,36) node {\it\small Initial global state};
    \end{scope}
    \begin{scope}[shift={(-25,0)}]
      \stateZZOZO{19}{22}\stateZZOZO{19}{23}\stateZZOZO{19}{24}\stateZZOZO{19}{25}\stateZZOZO{19}{26}\stateZZOZO{19}{27}\stateZZOZO{20}{22}\stateZZOZO{20}{27}\stateZZOZO{21}{23}\stateZZOZO{21}{26}\stateZZOZO{22}{23}\stateZZOZO{22}{24}\stateZZOZO{22}{25}\stateZZOZO{22}{26}\stateZZOZO{24}{22}\stateZZOZO{24}{23}\stateZZOZO{24}{24}\stateZZOZO{24}{25}\stateZZOZO{24}{26}\stateZZOZO{24}{27}\stateZZOZO{25}{26}\stateZZOZO{25}{27}\stateZZOZO{26}{25}\stateZZOZO{26}{26}\stateZZOZO{27}{25}\stateZZOZO{28}{24}\stateZZOZO{28}{25}\stateZZOZO{29}{22}\stateZZOZO{29}{23}\stateZZOZO{29}{24}\stateZZOZO{29}{25}\stateZZOZO{29}{26}\stateZZOZO{29}{27}\stateZZOZO{31}{22}\stateZZOZO{31}{23}\stateZZOZO{31}{24}\stateZZOZO{31}{25}\stateZZOZO{31}{26}\stateZZOZO{31}{27}\stateZZOZO{32}{24}\stateZZOZO{32}{27}\stateZZOZO{33}{24}\stateZZOZO{33}{27}\stateZZOZO{34}{22}\stateZZOZO{34}{23}\stateZZOZO{34}{24}\stateZZOZO{34}{25}\stateZZOZO{34}{26}\stateZZOZO{34}{27}
      \stateZZOZO{25}{24}\stateOZOZO{25}{25}\stateOZOZO{26}{24}\stateZZOZO{27}{24}\stateOZOZO{27}{23}\stateZZOZO{28}{23}\stateOZOZO{28}{22}\stateZZOZO{28}{21}\stateOZOZO{29}{21}\stateZZOZO{30}{21}
      \stateleft{30}{22}
      \begin{scope}[shift={(.5,.5)}]
        \draw[black] (25,25)--(25,24)--(25,24)--(26,24)--(26,24)--(27,24)--(27,24)--(27,23)--(27,23)--(28,23)--(28,23)--(28,22)--(28,22)--(28,21)--(28,21)--(29,21)--(29,21)--(30,21)--(30,21)--(30,22);
      \end{scope}
      \draw (27,36) node {\it\small Global state after 10 steps};
    \end{scope}
    \stateZZOZO{16}{19}\stateOZOZO{16}{20}\stateOZOZO{16}{21}\stateZZOZO{16}{22}\stateZZOZO{16}{23}\stateZZOZO{16}{24}\stateOZOZO{16}{25}\stateZZOZO{16}{26}\stateZZOZO{16}{27}\stateZZOZO{16}{28}\stateOZOZO{16}{29}\stateZZOZO{16}{30}\stateZZOZO{17}{19}\stateOZOZO{17}{20}\stateZZOZO{17}{21}\stateOZOZO{17}{22}\stateOZOZO{17}{23}\stateZZOZO{17}{24}\stateZZOZO{17}{25}\stateOZOZO{17}{26}\stateOZOZO{17}{27}\stateZZOZO{17}{28}\stateZZOZO{17}{29}\stateOZOZO{17}{30}\stateOZOZO{18}{19}\stateZZOZO{18}{20}\stateZZOZO{18}{21}\stateOZOZO{18}{22}\stateZZOZO{18}{23}\stateOZOZO{18}{24}\stateZZOZO{18}{25}\stateOZOZO{18}{26}\stateZZOZO{18}{27}\stateOZOZO{18}{28}\stateOZOZO{18}{29}\stateZZOZO{18}{30}\stateZZOO{18}{31}\stateOZOZO{19}{17}\stateZZOZO{19}{18}\stateOZOZO{19}{19}\stateZZOZO{19}{20}\stateOZOZO{19}{21}\stateZZOZO{19}{22}\stateZZOZO{19}{23}\stateZZOZO{19}{24}\stateZZOZO{19}{25}\stateZZOZO{19}{26}\stateZZOZO{19}{27}\stateZZOZO{19}{28}\stateOZOZO{19}{29}\stateZZOZO{19}{30}\stateOZOZO{19}{31}\stateZZOZO{19}{32}\stateZZOZO{20}{17}\stateZZOZO{20}{18}\stateOZOZO{20}{19}\stateZZOZO{20}{20}\stateZZOZO{20}{21}\stateZZOZO{20}{22}\stateZZOZO{20}{27}\stateOZOZO{20}{28}\stateZZOZO{20}{29}\stateZZOZO{20}{30}\stateZZOZO{20}{31}\stateOZOZO{20}{32}\stateZZOZO{21}{17}\stateOZOZO{21}{18}\stateZZOZO{21}{19}\stateOZOZO{21}{20}\stateOZOZO{21}{21}\stateZZOZO{21}{22}\stateZZOZO{21}{23}\stateZZOZO{21}{26}\stateZZOZO{21}{27}\stateZZOZO{21}{28}\stateZZOZO{21}{29}\stateOZOZO{21}{30}\stateOZOZO{21}{31}\stateZZOZO{21}{32}\stateOZOZO{21}{33}\stateZZOZO{22}{16}\stateOZOZO{22}{17}\stateZZOZO{22}{18}\stateZZOZO{22}{19}\stateOZOZO{22}{20}\stateZZOZO{22}{21}\stateOZOZO{22}{22}\stateZZOZO{22}{23}\stateZZOZO{22}{24}\stateZZOZO{22}{25}\stateZZOZO{22}{26}\stateOZOZO{22}{27}\stateOZOZO{22}{28}\stateZZOZO{22}{29}\stateZZOZO{22}{30}\stateOZOZO{22}{31}\stateOZOZO{22}{32}\stateZZOZO{22}{33}\stateOZOZO{23}{16}\stateZZOZO{23}{17}\stateOZOZO{23}{18}\stateOZOZO{23}{19}\stateZZOZO{23}{20}\stateOZOZO{23}{21}\stateZZOZO{23}{22}\stateOZOZO{23}{23}\stateZZOZO{23}{24}\stateOZOZO{23}{25}\stateZZOZO{23}{26}\stateOZOZO{23}{27}\stateZZOZO{23}{28}\stateOZOZO{23}{29}\stateOZOZO{23}{30}\stateOZOZO{23}{31}\stateZZOZO{23}{32}\stateZZOZO{23}{33}\stateZZOZO{24}{16}\stateZZOZO{24}{17}\stateOZOZO{24}{18}\stateZZOZO{24}{19}\stateOZOZO{24}{20}\stateZZOZO{24}{21}\stateZZOZO{24}{22}\stateZZOZO{24}{23}\stateZZOZO{24}{24}\stateZZOZO{24}{25}\stateZZOZO{24}{26}\stateZZOZO{24}{27}\stateOZOZO{24}{28}\stateZZOZO{24}{29}\stateZZOZO{24}{30}\stateZZOZO{24}{31}\stateZZOZO{24}{32}\stateOZOZO{24}{33}\stateOZOZO{25}{16}\stateZZOZO{25}{17}\stateOZOZO{25}{18}\stateOZOZO{25}{19}\stateZZOZO{25}{20}\stateOZOZO{25}{21}\stateZZOZO{25}{22}\stateOZOZO{25}{23}\stateZZOZO{25}{24}\stateOZOZO{25}{25}\stateZZOZO{25}{26}\stateZZOZO{25}{27}\stateZZOZO{25}{28}\stateZZOZO{25}{29}\stateOZOZO{25}{30}\stateOZOZO{25}{31}\stateOZOZO{25}{32}\stateZZOZO{25}{33}\stateOZOZO{26}{16}\stateOZOZO{26}{17}\stateZZOZO{26}{18}\stateOZOZO{26}{19}\stateOZOZO{26}{20}\stateZZOZO{26}{21}\stateOZOZO{26}{22}\stateZZOZO{26}{23}\stateOZOZO{26}{24}\stateZZOZO{26}{25}\stateZZOZO{26}{26}\stateZZOZO{26}{27}\stateOZOZO{26}{28}\stateOZOZO{26}{29}\stateZZOZO{26}{30}\stateOZOZO{26}{31}\stateOZOZO{26}{32}\stateZZOZO{26}{33}\stateZZOZO{27}{16}\stateZZOZO{27}{17}\stateZZOZO{27}{18}\stateOZOZO{27}{19}\stateOZOZO{27}{20}\stateOZOZO{27}{21}\stateZZOZO{27}{22}\stateOZOZO{27}{23}\stateZZOZO{27}{24}\stateZZOZO{27}{25}\stateZZOZO{27}{26}\stateOZOZO{27}{27}\stateZZOZO{27}{28}\stateOZOZO{27}{29}\stateZZOZO{27}{30}\stateZZOZO{27}{31}\stateOZOZO{27}{32}\stateZZOZO{27}{33}\stateOZOZO{28}{16}\stateOZOZO{28}{17}\stateOZOZO{28}{18}\stateZZOZO{28}{19}\stateZZOZO{28}{20}\stateZZOZO{28}{21}\stateOZOZO{28}{22}\stateZZOZO{28}{23}\stateZZOZO{28}{24}\stateZZOZO{28}{25}\stateOZOZO{28}{26}\stateZZOZO{28}{27}\stateOZOZO{28}{28}\stateOZOZO{28}{29}\stateZZOZO{28}{30}\stateOZOZO{28}{31}\stateZZOZO{28}{32}\stateZZOZO{28}{33}\stateOZOZO{29}{17}\stateOZOZO{29}{18}\stateOZOZO{29}{19}\stateZZOZO{29}{20}\stateOZOZO{29}{21}\stateZZOZO{29}{22}\stateZZOZO{29}{23}\stateZZOZO{29}{24}\stateZZOZO{29}{25}\stateZZOZO{29}{26}\stateZZOZO{29}{27}\stateZZOZO{29}{28}\stateZZOZO{29}{29}\stateOZOZO{29}{30}\stateZZOZO{29}{31}\stateZZOZO{29}{32}\stateOZOZO{29}{33}\stateOZOZO{30}{17}\stateOZOZO{30}{18}\stateOZOZO{30}{19}\stateOZOZO{30}{20}\stateZZOZO{30}{21}\stateOZOZO{30}{22}\stateZZOZO{30}{23}\stateOZOZO{30}{24}\stateZZOZO{30}{25}\stateOZOZO{30}{26}\stateZZOZO{30}{27}\stateOZOZO{30}{28}\stateOZOZO{30}{29}\stateZZOZO{30}{30}\stateZZOZO{30}{31}\stateOZOZO{30}{32}\stateZZOZO{30}{33}\stateZZOZO{31}{17}\stateZZOZO{31}{18}\stateZZOZO{31}{19}\stateZZOZO{31}{20}\stateOZOZO{31}{21}\stateZZOZO{31}{22}\stateZZOZO{31}{23}\stateZZOZO{31}{24}\stateZZOZO{31}{25}\stateZZOZO{31}{26}\stateZZOZO{31}{27}\stateZZOZO{31}{28}\stateOZOZO{31}{29}\stateZZOZO{31}{30}\stateOZOZO{31}{31}\stateOZOZO{31}{32}\stateZZOZO{31}{33}\stateOZOZO{32}{17}\stateZZOZO{32}{18}\stateOZOZO{32}{19}\stateZZOZO{32}{20}\stateZZOZO{32}{21}\stateOZOZO{32}{22}\stateZZOZO{32}{23}\stateZZOZO{32}{24}\stateZZOZO{32}{27}\stateZZOZO{32}{28}\stateOZOZO{32}{29}\stateZZOZO{32}{30}\stateZZOZO{32}{31}\stateZZOZO{32}{32}\stateOZOZO{33}{17}\stateOZOZO{33}{18}\stateZZOZO{33}{19}\stateOZOZO{33}{20}\stateZZOZO{33}{21}\stateOZOZO{33}{22}\stateOZOZO{33}{23}\stateZZOZO{33}{24}\stateZZOZO{33}{27}\stateOZOZO{33}{28}\stateOZOZO{33}{29}\stateZZOZO{33}{30}\stateOZOZO{33}{31}\stateOZOZO{33}{32}\stateZZOZO{34}{17}\stateZZOZO{34}{18}\stateZZOZO{34}{19}\stateOZOZO{34}{20}\stateOZOZO{34}{21}\stateZZOZO{34}{22}\stateZZOZO{34}{23}\stateZZOZO{34}{24}\stateZZOZO{34}{25}\stateZZOZO{34}{26}\stateZZOZO{34}{27}\stateZZOZO{34}{28}\stateOZOZO{34}{29}\stateZZOZO{34}{30}\stateZZOZO{34}{31}\stateOZOZO{34}{32}\stateZZOZO{35}{18}\stateOZOZO{35}{19}\stateZZOZO{35}{20}\stateZZOZO{35}{21}\stateOZOZO{35}{22}\stateZZOZO{35}{23}\stateOZOZO{35}{24}\stateZZOZO{35}{25}\stateOZOZO{35}{26}\stateZZOZO{35}{27}\stateOZOZO{35}{28}\stateZZOZO{35}{29}\stateZZOZO{35}{30}\stateZZOZO{35}{31}\stateZZOZO{36}{18}\stateOZOZO{36}{19}\stateOZOZO{36}{20}\stateZZOZO{36}{21}\stateZZOZO{36}{22}\stateOZOZO{36}{23}\stateOZOZO{36}{24}\stateZZOZO{36}{25}\stateZZOZO{36}{26}\stateOZOZO{36}{27}\stateOZOZO{36}{28}\stateZZOZO{36}{29}\stateOZOZO{36}{30}\stateOZOZO{36}{31}\stateOZOZO{37}{18}\stateZZOZO{37}{19}\stateZZOZO{37}{20}\stateOZOZO{37}{21}\stateZZOZO{37}{22}\stateZZOZO{37}{23}\stateZZOZO{37}{24}\stateOZOZO{37}{25}\stateZZOZO{37}{26}\stateZZOZO{37}{27}\stateZZOZO{37}{28}\stateOZOZO{37}{29}\stateZZOZO{37}{30}\stateOZOZO{37}{31}\stateZZOZO{38}{18}\stateOZOZO{38}{19}\stateZZOZO{38}{20}\stateZZOZO{38}{21}\stateOZOZO{38}{22}\stateZZOZO{38}{23}\stateOZOZO{38}{24}\stateOZOZO{38}{25}\stateZZOZO{38}{26}\stateOZOZO{38}{27}\stateZZOZO{38}{28}\stateZZOZO{38}{29}\stateOZOZO{38}{30}\stateZZOZO{38}{31}
    \begin{scope}[shift={(.5,.5)}]
      \draw[black] (25,25)--(25,24)--(25,24)--(26,24)--(26,24)--(27,24)--(27,24)--(27,23)--(27,23)--(28,23)--(28,23)--(28,22)--(28,22)--(28,21)--(28,21)--(29,21)--(29,21)--(30,21)--(30,21)--(30,22)--(30,22)--(30,23)--(30,23)--(30,24)--(30,24)--(30,25)--(30,25)--(30,26)--(30,26)--(30,27)--(30,27)--(30,28)--(30,28)--(29,28)--(29,28)--(28,28)--(28,28)--(28,27)--(28,27)--(28,26)--(28,26)--(27,26)--(27,26)--(27,27)--(27,27)--(26,27)--(26,27)--(26,28)--(26,28)--(27,28)--(27,28)--(27,29)--(27,29)--(26,29)--(26,29)--(25,29)--(25,29)--(25,28)--(25,28)--(24,28)--(24,28)--(23,28)--(23,28)--(23,27)--(23,27)--(23,26)--(23,26)--(23,25)--(23,25)--(23,24)--(23,24)--(23,23)--(23,23)--(23,22)--(23,22)--(23,21)--(23,21)--(24,21)--(24,21)--(25,21)--(25,21)--(25,22)--(25,22)--(25,23)--(25,23)--(26,23)--(26,23)--(26,22)--(26,22)--(27,22)--(27,22)--(27,21)--(27,21)--(26,21)--(26,21)--(26,20)--(26,20)--(27,20)--(27,20)--(28,20)--(28,20)--(29,20)--(29,20)--(30,20)--(30,20)--(31,20)--(31,20)--(31,21)--(31,21)--(32,21)--(32,21)--(32,22)--(32,22)--(32,23)--(32,23)--(33,23)--(33,23)--(33,22)--(33,22)--(33,21)--(33,21)--(34,21)--(34,21)--(35,21)--(35,21)--(35,22)--(35,22)--(35,23)--(35,23)--(35,24)--(35,24)--(35,25)--(35,25)--(35,26)--(35,26)--(35,27)--(35,27)--(35,28)--(35,28)--(34,28)--(34,28)--(33,28)--(33,28)--(32,28)--(32,28)--(31,28)--(31,28)--(31,29)--(31,29)--(30,29)--(30,29)--(29,29)--(29,29)--(28,29)--(28,29)--(28,30)--(28,30)--(27,30)--(27,30)--(26,30)--(26,30)--(25,30)--(25,30)--(24,30)--(24,30)--(24,29)--(24,29)--(23,29)--(23,29)--(22,29)--(22,29)--(22,28)--(22,28)--(22,27)--(22,27)--(21,27)--(21,27)--(21,28)--(21,28)--(20,28)--(20,28)--(19,28)--(19,28)--(18,28)--(18,28)--(18,27)--(18,27)--(18,26)--(18,26)--(18,25)--(18,25)--(18,24)--(18,24)--(18,23)--(18,23)--(18,22)--(18,22)--(18,21)--(18,21)--(19,21)--(19,21)--(20,21)--(20,21)--(21,21)--(21,21)--(21,22)--(21,22)--(22,22)--(22,22)--(22,21)--(22,21)--(22,20)--(22,20)--(21,20)--(21,20)--(21,19)--(21,19)--(22,19)--(22,19)--(23,19)--(23,19)--(23,20)--(23,20)--(24,20)--(24,20)--(25,20)--(25,20)--(25,19)--(25,19)--(24,19)--(24,19)--(24,18)--(24,18)--(25,18)--(25,18)--(26,18)--(26,18)--(26,19)--(26,19)--(27,19)--(27,19)--(28,19)--(28,19)--(29,19)--(29,19)--(30,19)--(30,19)--(31,19)--(31,19)--(32,19)--(32,19)--(32,20)--(32,20)--(33,20)--(33,20)--(34,20)--(34,20)--(35,20)--(35,20)--(36,20)--(36,20)--(36,21)--(36,21)--(36,22)--(36,22)--(36,23)--(36,23)--(36,24)--(36,24)--(36,25)--(36,25)--(36,26)--(36,26)--(36,27)--(36,27)--(36,28)--(36,28)--(36,29)--(36,29)--(35,29)--(35,29)--(34,29)--(34,29)--(33,29)--(33,29)--(32,29)--(32,29)--(32,30)--(32,30)--(31,30)--(31,30)--(30,30)--(30,30)--(29,30)--(29,30)--(29,31)--(29,31)--(28,31)--(28,31)--(27,31)--(27,31)--(26,31)--(26,31)--(25,31)--(25,31)--(24,31)--(24,31)--(23,31)--(23,31)--(23,30)--(23,30)--(22,30)--(22,30)--(21,30)--(21,30)--(21,29)--(21,29)--(20,29)--(20,29)--(19,29)--(19,29)--(18,29)--(18,29)--(17,29)--(17,29)--(17,28)--(17,28)--(17,27)--(17,27)--(17,26)--(17,26)--(17,25)--(17,25)--(17,24)--(17,24)--(17,23)--(17,23)--(17,22)--(17,22)--(17,21)--(17,21)--(17,20)--(17,20)--(18,20)--(18,20)--(19,20)--(19,20)--(20,20)--(20,20)--(20,19)--(20,19)--(20,18)--(20,18)--(21,18)--(21,18)--(22,18)--(22,18)--(23,18)--(23,18)--(23,17)--(23,17)--(24,17)--(24,17)--(25,17)--(25,17)--(26,17)--(26,17)--(27,17)--(27,17)--(27,18)--(27,18)--(28,18)--(28,18)--(29,18)--(29,18)--(30,18)--(30,18)--(31,18)--(31,18)--(32,18)--(32,18)--(33,18)--(33,18)--(33,19)--(33,19)--(34,19)--(34,19)--(35,19)--(35,19)--(36,19)--(36,19)--(37,19)--(37,19)--(37,20)--(37,20)--(37,21)--(37,21)--(37,22)--(37,22)--(37,23)--(37,23)--(37,24)--(37,24)--(37,25)--(37,25)--(37,26)--(37,26)--(37,27)--(37,27)--(37,28)--(37,28)--(37,29)--(37,29)--(37,30)--(37,30)--(36,30)--(36,30)--(35,30)--(35,30)--(34,30)--(34,30)--(33,30)--(33,30)--(33,31)--(33,31)--(32,31)--(32,31)--(31,31)--(31,31)--(30,31)--(30,31)--(30,32)--(30,32)--(29,32)--(29,32)--(28,32)--(28,32)--(27,32)--(27,32)--(26,32)--(26,32)--(25,32)--(25,32)--(24,32)--(24,32)--(23,32)--(23,32)--(22,32)--(22,32)--(22,31)--(22,31)--(21,31)--(21,31)--(20,31)--(20,31)--(20,30)--(20,30)--(19,30)--(19,30)--(18,30)--(18,30)--(17,30)--(17,30)--(16,30)--(16,30)--(16,29)--(16,29)--(16,28)--(16,28)--(16,27)--(16,27)--(16,26)--(16,26)--(16,25)--(16,25)--(16,24)--(16,24)--(16,23)--(16,23)--(16,22)--(16,22)--(16,21)--(16,21)--(16,20)--(16,20)--(16,19)--(16,19)--(17,19)--(17,19)--(18,19)--(18,19)--(19,19)--(19,19)--(19,18)--(19,18)--(19,17)--(19,17)--(20,17)--(20,17)--(21,17)--(21,17)--(22,17)--(22,17)--(22,16)--(22,16)--(23,16)--(23,16)--(24,16)--(24,16)--(25,16)--(25,16)--(26,16)--(26,16)--(27,16)--(27,16)--(28,16)--(28,16)--(28,17)--(28,17)--(29,17)--(29,17)--(30,17)--(30,17)--(31,17)--(31,17)--(32,17)--(32,17)--(33,17)--(33,17)--(34,17)--(34,17)--(34,18)--(34,18)--(35,18)--(35,18)--(36,18)--(36,18)--(37,18)--(37,18)--(38,18)--(38,18)--(38,19)--(38,19)--(38,20)--(38,20)--(38,21)--(38,21)--(38,22)--(38,22)--(38,23)--(38,23)--(38,24)--(38,24)--(38,25)--(38,25)--(38,26)--(38,26)--(38,27)--(38,27)--(38,28)--(38,28)--(38,29)--(38,29)--(38,30)--(38,30)--(38,31)--(38,31)--(37,31)--(37,31)--(36,31)--(36,31)--(35,31)--(35,31)--(34,31)--(34,31)--(34,32)--(34,32)--(33,32)--(33,32)--(32,32)--(32,32)--(31,32)--(31,32)--(31,33)--(31,33)--(30,33)--(30,33)--(29,33)--(29,33)--(28,33)--(28,33)--(27,33)--(27,33)--(26,33)--(26,33)--(25,33)--(25,33)--(24,33)--(24,33)--(23,33)--(23,33)--(22,33)--(22,33)--(21,33)--(21,33)--(21,32)--(21,32)--(20,32)--(20,32)--(19,32)--(19,32)--(19,31)--(19,31)--(18,31)--(18,31);
    \end{scope}
    \draw (27,36) node {\it\small Global state after 307 steps};
  \end{tikzpicture}
  \caption{\label{fig:spiralxor} Example of orbit of the spiral-XOR turedo (example~\ref{exa:spiralxor}) starting from a finite seed. Green represents $0$, yellow represents $1$ and white represents $\blanc$. The head holds a direction (where the black triangle is pointing to) and its self-avoiding trajectory since the beginning is drawn as a black path.}
\end{figure}

The main focus of this paper is to understand the role of radius and dimension in the computational complexity of turedos.
\newcommand\tur[2]{\ensuremath{\textsc{TUR}_{#1}(#2)}}
\newcommand\turdim[1]{\ensuremath{\textsc{TUR}_{#1}}}
We will denote by $\tur{d}{r}$ the set of turedos of dimension $d$ and radius $r$, and ${\turdim{d}=\bigcup_{r\geq 1}\tur{d}{r}}$.

Before formalizing simulation, we first need to define \emph{block encodings} which are ways to represent global states of a simulated turedo by blocks in the simulator.
\newcommand\bofs[2]{\ensuremath{\mu_{#1}(#2)}}
\newcommand\brnd[2]{\ensuremath{\rho_{#1}(#2)}}
\newcommand\brect[1]{\ensuremath{R_{#1}}}
Any given ${b\in\N_+^d}$ defines a rectangular block ${\brect{b} = \{z\in\N^d : 0\leq z_i<b_i\text{ for $1\leq i\leq d$}\}}$ and ${Z^d}$ can be tiled by translated copies of $\brect{b}$ in a regular way by placing them on the sublattice ${b\otimes\Z^d}$ where $\otimes$ denotes the component-wise product.
Each position ${z\in\Z^d}$ can be uniquely decomposed into ${z = \brnd{b}{z} + \bofs{b}{z}}$ where ${\brnd{b}{z}\in b\otimes\Z^d}$ is the reference point of a block and ${\bofs{b}{z}\in \brect{b}}$ is an offset inside it.

\begin{definition}[Block encoding]
  Let us fix a dimension $d$.
  Given two pairs of alphabets and state sets $(A_1,Q_1)$ and  $(A_2,Q_2)$ with ${\blanc\in A_1\cap A_2}$, a \emph{block encoding} of global states ${\mathcal{G}_1=A_1^{\Z^d}\times\Z^d\times Q_1}$ into ${\mathcal{G}_2=A_2^{\Z^d}\times\Z^d\times Q_2}$ is given by a \emph{block size} ${b\in\N_+^d}$ and two \emph{partial onto} maps:
  \begin{itemize}
  \item the \emph{headless block decoding map} ${\alpha : D_\alpha\subseteq A_2^{\brect{b}}\to A_1}$ verifying ${\blanc^{\brect{b}}\in D_\alpha}$ and ${\alpha(\blanc^{\brect{b}})=\blanc}$,
  \item the \emph{head block decoding map} ${\beta : D_\beta\subseteq A_2^{\brect{b}}\times \brect{b}\times Q_2\to Q_1\times A_1}$.
  \end{itemize}
  A global state ${(c,z,q)\in \mathcal{G}_2}$ is \emph{valid} for the encoding if it is made only of patterns from $D_\alpha$ far from the head and $D_\beta$ around the head, precisely if: ${(\patt{c}{\rho_b(z)}{\brect{b}},\bofs{b}{z},q)\in D_\beta}$ and ${\patt{c}{\brnd{b}{z'}}{\brect{b}}\in D_\alpha}$ for all ${z'\in\Z^d}$ such that ${\brnd{b}{z'}\neq\brnd{b}{z}}$.

  Finally, the \emph{global decoding map} $\Gamma$ associates to any valid global state ${(c_2,z_2,q_2)\in \mathcal{G}_2}$ a global state ${(c_1,z_1,q_1)\in \mathcal{G}_1}$ defined by application of decoding maps $\alpha$ or $\beta$ on each block according to the presence of the head in the block, \textit{i.e.} :
  \begin{itemize}
  \item ${b\otimes z_1=\brnd{b}{z_2}}$,
  \item ${(q_1,c_1(z_1)) = \beta(\patt{c_2}{\brnd{b}{z_1}}{\brect{b}},\bofs{b}{z_2},q_2)}$,
  \item ${c_1(z)=\alpha(\patt{c_2}{\brnd{b}{z_2}}{\brect{b}})}$ for all ${z\neq z_1}$.
  \end{itemize}
\end{definition}

The map $\alpha$ and $\beta$ being partial and onto intuitively means that not all global states are valid, and that any global state of ${A_1^{\Z^d}\times\Z^d\times Q_1}$ can be encoded.
\newcommand\bdom[2]{\mathcal{D}_{#1}(#2)}
Note that the headless block decoding map $\alpha$ always decodes blank blocks ${\blanc^{\brect{b}}}$ as blank state $\blanc$.
Denote by ${\bdom{b}{c_2,z_2,q_2}}$ the \emph{block domain} of global state ${(c_2,z_2,q_2)}$ which is the set of blocks that are not entirely blank, \textit{i.e.} 
${\bdom{b}{c_2,z_2,q_2} = \{z: b\otimes z=\brnd{b}{z_2}\text{ or }\patt{c_2}{b\otimes z}{\brect{b}}\neq\blanc^{\brect{b}}\}}$.

\newcommand\turun{\unturedo_1}
\newcommand\gstaun{\globstat{\turun}}
\newcommand\gmaun{\globmap{\turun}}
\newcommand\turdeux{\unturedo_2}
\newcommand\gstadeux{\globstat{\turdeux}}
\newcommand\gmadeux{\globmap{\turdeux}}

We can now define simulations precisely using block encodings.
Intuitively, we ask for the simulator to be able to reproduce any orbit of the simulated turedo starting from a finite seed, and using a (fixed) finite number of steps to simulate one step.
Since we want the initial seed of the simulator to be neutral and without any pre-computed information about the future of the simulated orbit, we ask that its block domain correspond to the domain of the simulated seed.

\begin{definition}[Simulation]
  Let $d$ be a fixed dimension. We say that a $d$-dimensional turedo $\turdeux$ \emph{simulates} a $d$-dimensional turedo $\turun$ if there is:
  \begin{itemize}
  \item a block encoding of $\gstaun$ into $\gstadeux$ of bock size $b$ and global decoding map $\Gamma$,
  \item a time scaling factor $k\in\N_+$,
  \end{itemize}
  such that for each finite global state ${(c_1,z_1,q_1)\in\gstaun}$ and each global state ${(c_2,z_2,q_2)\in\gstadeux}$ verifying:
  \begin{itemize}
  \item corresponding block domain: ${\dom{c_1,z_1,q_1}=\bdom{b}{c_2,z_2,q_2}}$,
  \item correct encoding: ${(c_1,z_1,q_1)=\Gamma(c_2,z_2,q_2)}$,
  \end{itemize}
  then it holds ${\forall t\in\N, \gmaun^t(c_1,z_1,q_1) = \Gamma\bigl(\gmadeux^{kt}(c_2,z_2,q_2))}$.
  Such simulations are the base upon which we define three variants (two restrictions and one generalization).
  
  We say that a simulation is \emph{fuzzless} if the block domain in the simulator orbit remains identical to the domain of the simulated orbit, precisely: 
  ${\forall t\in\N : \bdom{b}{\gmadeux^{kt}(c_2,z_2,q_2)}=\dom{\gmaun^t(c_1,z_1,q_1)}.}$

  We say that a simulation is \emph{rigorous} if the movements of the head of $\turdeux$ in simulating orbits strictly remains inside blocks corresponding to the simulated head position of $\turun$, even at intermediate steps, precisely: if $z_1^t$ denotes the head position of $\gmaun^t(c_1,z_1,q_1)$ and $z_2^t$ that of $\gmadeux^t(c_2,z_2,q_2)$, it holds for all $t'$ with
  ${kt\leq t'\leq k(t+1): z_2^{t'}\in (b\otimes z_1^t+\brect{b})\cup (b\otimes z_1^{t+1}+\brect{b})}$.

  Finally, a \emph{liberal simulation} is a generalized simulation where we only ask that for each finite global state ${(c_1,z_1,q_1)\in\gstaun}$ there exists a global state ${(c_2,z_2,q_2)\in\gstadeux}$ with corresponding block domain and correct encoding such that it holds ${\forall t\in\N, \gmaun^t(c_1,z_1,q_1) = \Gamma\bigl(\gmadeux^{kt}(c_2,z_2,q_2))}$.
\end{definition}

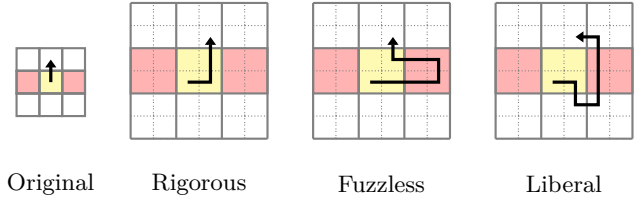
\begin{figure}
  \centering
  \begin{tikzpicture}[scale = .6,
    rrr/.style={fill = red, draw = gray, fill opacity=.3, draw opacity=1, text opacity = 1, text = black, thick},
    yyy/.style={fill = yellow, draw = gray, fill opacity=.3, draw opacity=1, text opacity = 1, text = black, thick},
    ttt/.style={fill = white, draw = gray, fill opacity=0, draw opacity=1, text opacity = 1, text = black, thin, dotted},
    uuu/.style={fill = white, draw = gray, fill opacity=0, draw opacity=1, text opacity = 1, text = black, thick},
    bbb/.style={fill = blue, draw = gray, fill opacity=.3, draw opacity=1, text opacity = 1, text = black, thick},
    arr/.style={->, >=stealth, ultra thick, black},
    >={Triangle[]},
    arr/.style={-{Triangle[length=3pt,width=4pt]},black,very thick},
    dot/.style={black, line width=1pt, line cap=round, dash pattern=on 0pt off 4}
    ]
    \begin{scope}[shift={(-6,0)}]
      \draw[rrr] (-.5,0) rectangle +(.5,.5);
      \draw[rrr] (.5,0) rectangle +(.5,.5);
      \draw[yyy] (0,0) rectangle +(.5,.5);
      \foreach \j in {-.5,0,...,.5}{
        \foreach \i in {-.5,0,...,.5}{
          \draw[uuu] (\j,\i) rectangle +(.5,.5);
        }}
        \draw[arr] (.25,.25) -- (.25,.75);
      \draw (.25,-2) node {\small Original};
    \end{scope}
    \begin{scope}[shift={(-3,0)}]
      \draw[rrr] (-1,0) rectangle +(1,1);
      \draw[rrr] (1,0) rectangle +(1,1);
      \draw[yyy] (0,0) rectangle +(1,1);
      \foreach \j in {-1,-.5,...,1.5}{
        \foreach \i in {-1,-.5,...,1.5}{
          \draw[ttt] (\j,\i) rectangle +(.5,.5);
        }}
      \draw[uuu] (-1,-1) -- +(3,0);
      \draw[uuu] (-1,2) -- +(3,0);
      \foreach \j in {-1,0,...,2}{
        \draw[uuu] (\j,-1) -- +(0,3);
      }
      \draw[arr] (.25,.25) -- ++(.5,0) -- ++(0,.5) -- ++(0,.5);
      \draw (.5,-2) node {\small Rigorous};
    \end{scope}
    \begin{scope}[shift={(1,0)}]
      \draw[rrr] (-1,0) rectangle +(1,1);
      \draw[rrr] (1,0) rectangle +(1,1);
      \draw[yyy] (0,0) rectangle +(1,1);
      \foreach \j in {-1,-.5,...,1.5}{
        \foreach \i in {-1,-.5,...,1.5}{
          \draw[ttt] (\j,\i) rectangle +(.5,.5);
        }}
      \draw[uuu] (-1,-1) -- +(3,0);
      \draw[uuu] (-1,2) -- +(3,0);
      \foreach \j in {-1,0,...,2}{
        \draw[uuu] (\j,-1) -- +(0,3);
      }
      \draw[arr] (.25,.25) -- ++(1.5,0) -- ++(0,.5) -- ++(-1,0)-- ++(0,.5);
      \draw (.5,-2) node {\small Fuzzless};
    \end{scope}
    \begin{scope}[shift={(5,0)}]
      \draw[rrr] (-1,0) rectangle +(1,1);
      \draw[rrr] (1,0) rectangle +(1,1);
      \draw[yyy] (0,0) rectangle +(1,1);
      \foreach \j in {-1,-.5,...,1.5}{
        \foreach \i in {-1,-.5,...,1.5}{
          \draw[ttt] (\j,\i) rectangle +(.5,.5);
        }}
      \draw[uuu] (-1,-1) -- +(3,0);
      \draw[uuu] (-1,2) -- +(3,0);
      \foreach \j in {-1,0,...,2}{
        \draw[uuu] (\j,-1) -- +(0,3);
      }
      \draw[arr] (.25,.25) -- ++(.5,0) -- ++(0,-.5) -- ++(.5,0) -- ++(0,1.5) -- ++(-.5,0);
      \draw (.5,-2) node {\small Liberal};
    \end{scope}
    
  \end{tikzpicture}
  \caption{\label{fig:simtypes}Differences in allowed head movements in rigorous, fuzzless and liberal simulations with ${2\times 2}$ blocks. The colors have the following meaning: in red the positions or blocks which are not empty initially, in yellow the positions or blocks coding a non-$\blanc$ letter during the orbit; in white the positions or blocks coding $\blanc$; in black the movement of the head.}
\end{figure}

Liberal simulations can have fuzz and make non-rigorous head movements.
Note that fuzzless simulations are equivalent to simulations where the headless block decoding map is such that ${\alpha(u)=\blanc\iff u=\blanc^{\brect{b}}}$, \textit{i.e.} that the only block coding $\blanc$ is $\blanc^{\brect{b}}$.
Note also that a rigorous simulation is necessarily fuzzless because the head of the simulator has no opportunity, even at intermediate time steps, to visit blocks not corresponding to the domain of the simulated configuration.
The power of fuzzless simulations compared to rigorous ones is to allow the head to go back to blocks that were previously visited.
Thus the head can potentially retrieve information from non adjacent blocks that were written a long time ago.
With rigorous simulation on the contrary, the head has only access to adjacent blocks during a simulation cycle.

\newcommand\simlib{\ensuremath{\leq}}
\newcommand\simnofuzz{\ensuremath{\leq_{FL}}}
\newcommand\simrig{\ensuremath{\leq_{R}}}
We denote by \simlib{} the liberal simulation, by \simnofuzz{} the fuzzless simulation and by \simrig{} the rigorous simulation.
\newcommand\univsim[3]{\mathcal{U}_{#1}^{#2}(#3)}
\newcommand\univsimdim[2]{\mathcal{U}_{#1}^{#2}}
Among many properties of these simulation relations, we are particularly interested in universality: the capacity of a single turedo to simulate a whole set of turedos.
Given a dimension $d$ and a radius $r$, we denote by ${\univsim{d}{\simlib}{r}}$ the set of turedos ${T\in\turdim{d}}$ such that for any ${T'\in\tur{d}{r}}$ it holds ${T'\simlib T}$.
We denote by ${\univsimdim{d}{\simlib}}$ the set of turedos ${T\in\turdim{d}}$ such that for any ${T'\in\turdim{d}}$ it holds ${T'\simlib T}$.
We use similar notations for simulation relations $\simrig$ and $\simnofuzz$.

\section{Separation Results}
\label{sec:hier}

\subsection{No Fuzz, no Fun}
\label{sec:nofuzzneg}

The fuzzless condition gives much importance to larger radii, simply because a turedo's head surrounded by blocks coding $\blanc$ cannot read information far away without moving inside these blocks and thus creating fuzz.
The following theorem exploits this obvious limitation to show two immediate consequences: first, there is a radius hierarchy (new behaviors appear at radius $r+1$ that cannot be simulated at radius $r$) and thus no general fuzzless universality; second, universality is impossible even at radius $2$: any turedo (whatever its radius) will fail to simulate some radius-2 turedo.
The dimension plays no role in these results.

\begin{theorem}\label{thm:nofuzznofun} For any $d\geq 2$ and $r\geq 1$, we have the following:
  \begin{itemize}
  \item there is ${T_{r+1}\in\tur{d}{r+1}}$ such that
    for all ${T_r\in\tur{d}{r}}$, ${T_{r+1}\not\simnofuzz T_r}$ ; in
    particular, ${\univsimdim{d}{\simnofuzz}=\emptyset}$.
  \item for any ${T_r\in\tur{d}{r}}$ there exists ${T_2\in\tur{d}{2}}$ such that ${T_2\not\simnofuzz T_r}$ ; in particular, ${\univsim{d}{\simnofuzz}{r}=\emptyset}$ for any ${r\geq 2}$.
  \end{itemize}
\end{theorem}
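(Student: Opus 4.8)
The plan is to base both items on a single rigidity property of fuzzless simulations. Suppose a $d$-dimensional radius-$r$ turedo $T_2$ fuzzlessly simulates $T_1$ with block size $b$, time factor $k$ and decoding $\Gamma$. Since $T_2$'s head writes only non-blank letters and is self-avoiding, every block it ever enters is non-blank at all subsequent macro-steps, hence — and this is where fuzzlessness enters — belongs to the simulated domain; tracing the head's walk, at every micro-step its block is connected, through non-blank blocks, to the block of $T_1$'s current head, and the only fresh block it may visit inside a macro-step is the one coding the single new cell of $T_1$. Consequently, if a non-blank block $B^\star$ has all its block-neighbours blank forever, then $T_2$'s head never enters $B^\star$; and if moreover $B^\star$ lies at $\Z^d$-distance $>r$ from every block that $T_2$'s head may occupy, then $T_2$ never reads a cell of $B^\star$, so its orbit — and its $\Gamma$-image — cannot depend on the contents of $B^\star$ (note that $k$ plays no role here). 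I would then exhibit simulated turedos whose behaviour nevertheless does depend on such a shielded, unreadable block.

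For the first item, take $T_{r+1}$ with alphabet $\{\blanc,0,1\}$ which, from a seed with a single non-blank cell $c(u)=\beta\in\{0,1\}$ and head at $u'=u+(r+1)e_2$, reads $\boule{d}{r+1}$ around $u'$ (which contains $u$ at distance exactly $r+1$) and then grows an infinite horizontal ray from $u'$, towards $+e_1$ if $\beta=0$ and towards $-e_1$ if $\beta=1$; the orbits for $\beta=0$ and $\beta=1$ differ on the ray. In any fuzzless simulation by some $T_r\in\tur{d}{r}$ with block size $b$: the whole orbit of $T_{r+1}$ lives only at coordinate-$2$ values $u_2$ and $u_2+r+1$, so the cells $u\pm e_j$ stay blank and $B_u$ is shielded; and $T_r$'s head, confined to the slab of blocks at coordinate-$2$ index $u_2+r+1$, stays at coordinate-$2$ value $\ge b_2(u_2+r+1)$, whereas $B_u$ has coordinate-$2$ value $\le b_2u_2+b_2-1$, a gap of $b_2r+1>r$. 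Hence $B_u$ is unreadable, the $\Gamma$-image of the simulating orbit is independent of $\beta$, a contradiction. As $r$ and $T_r$ were arbitrary and every turedo has a finite radius, this also yields ${\univsimdim{d}{\simnofuzz}=\emptyset}$.

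For the second item, fix $T_r=(A_r,Q_r,\delta_r)$ of radius $r$, choose an alphabet $A^\star$ with $|A^\star|>|A_r|^{r^{d}}$, and build $T_2\in\tur{d}{2}$ over $A^\star$ from a seed with $d$ pairwise far ``data cells'' $u^{(1)},\dots,u^{(d)}$ carrying letters of $A^\star\setminus\{\blanc\}$ and a head performing a fixed tour: it stays at $\Z^d$-distance $\ge r+2$ from all data cells, except that near $u^{(i)}$ it approaches $u^{(i)}+2e_i$ only from the $+e_i$ side (visiting solely cells whose $i$-th coordinate is $\ge(u^{(i)})_i+2$), reads $u^{(i)}$ at distance $2$, and records its letter (e.g.\ writing it on leaving $u^{(i)}+2e_i$). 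Thus the orbit of $T_2$ depends on each of the $d$ data letters. If $T_r$ fuzzlessly simulates $T_2$ with block size $b$, onto-ness of $\alpha$ gives $|A_r|^{\prod_ib_i}\ge|A^\star|>|A_r|^{r^{d}}$, hence $\prod_ib_i>r^{d}$ and $b_{i^\star}>r$ for some $i^\star$. Then $B^\star=B_{u^{(i^\star)}}$ is shielded (the tour never comes within distance $1$ of $u^{(i^\star)}$), and since every block $B_w$ occupied by $T_r$'s head has $\|w-u^{(i^\star)}\|_1\ge r+1$ or $w_{i^\star}\ge(u^{(i^\star)})_{i^\star}+2$, a coordinate-wise estimate using $b_j\ge 1$ and $b_{i^\star}>r$ shows $B_w$ is at $\Z^d$-distance $>r$ from $B^\star$. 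So $B^\star$ is unreadable, the simulating orbit is independent of the $i^\star$-th data letter, a contradiction; hence ${T_2\not\simnofuzz T_r}$. Finally, a candidate universal turedo has some finite radius $\rho$, and for $r\ge2$ any radius-$2$ turedo coincides as a global map with a radius-$r$ turedo obtained by padding its neighbourhood, so such a turedo would have to simulate it; thus ${\univsim{d}{\simnofuzz}{r}=\emptyset}$ for all $r\ge2$.

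The step I expect to be the main obstacle is making the shielding argument uniform over the simulator's block size. For the first item the straight ``distance-$(r+1)$'' gap handles this automatically, since the intervening blank slab always has width $b_2r\ge r$. For the second item one must simultaneously (i) inflate the alphabet so that at least one block side is forced above $r$, and (ii) route the simulated head so that the relevant data block is ever approached only along the unique axis whose oversized block side makes it unreadable — which is precisely why $d$ separate data cells, one per axis, are needed, rather than a single cell probed from several sides (from a thin side the simulator could scan it and read it).
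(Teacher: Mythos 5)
Your proposal is correct and takes essentially the same route as the paper: fuzzlessness confines the simulator's head to blocks of the simulated domain (plus the one new block per macro-step), the blank gaps between blocks scale with the block size so a radius-$r$ head can never read across them, and for the second item an alphabet larger than $|A_r|^{r^d}$ forces one block dimension above $r$ via the onto-ness of the decoding map. The differences are only in packaging — your explicit ``shielded block'' rigidity property is stated inline in the paper, and your gadgets (a ray-direction turedo for item one, and $d$ letter-recording data cells each approached along its own axis for item two) replace the paper's single north--south comparison turedo reading two cells at distance $r+1$ (resp.\ $2$) — so both arguments succeed for the same underlying reason.
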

\begin{proof}
  We prove the result for ${d=2}$, the argument can be generalized to higher dimension straightforwardly (by completing 2D configurations by $\blanc$ everywhere else).

  For the first item, simply consider a turedo ${T_{r+1}\in\tur{2}{r+1}}$ that has the following behavior when the head in position ${(0,0)}$ has only $\blanc$ letters at the north and at the south of its current position: read the letters at positions ${(r+1,0)}$ and ${(0,r+1)}$ and move to the north is they are equal and to the south otherwise.
  Consider any turedo ${T_r\in\tur{2}{r}}$ and any block size ${b\in\N_+^2}$. To simulate $T_{r+1}$ fuzzlessly, ${T_r}$ has to move either to block ${b\otimes(0,1)}$ or block ${b\otimes(0,-1)}$ depending on blocks ${b\otimes(r+1,0)}$ and ${b\otimes (0,r+1)}$ without entering into any other neighboring block: this is impossible, because with radius $r$ turedo $T_r$ can't have any information about either ${b\otimes(r+1,0)}$ or ${b\otimes(0,r+1)}$ before making a decisive move (by entering inside either ${b\otimes(0,1)}$ or ${b\otimes(0,-1)}$) so it will fail to correctly simulate the orbit of at least one seed.

  The second item can be proved similarly, using a pumping trick on the alphabet: for any fixed $T_r\in\tur{2}{r}$ with alphabet of cardinal $k$, choose ${T_2\in\tur{2}{2}}$ with an alphabet strictly larger than ${k^{r^2}}$ so that at least on dimension of the block size $b$ of any potential simulation of ${T_2}$ by $T_r$ has to be at least ${r+1}$ (otherwise there is simply no way to code all letters of $T_2$ on different blocks of size $b$).
  Then, choosing $T_2$ to have the same behavior as $T_{r+1}$ above, we get the same contradiction: there is a direction of $b$, let's say the vertical one, which overwhelms the radius $r$ of $T_r$ so $T_r$ won't be able to read the content of block ${b\otimes (0,2)}$ before making a decisive move and will therefore fail to correctly simulate at least one orbit.
\end{proof}

\subsection{Dimension $2$ and Radius $1$: the Jordan Curve Burden}
\label{sec:1dr1neg}

A turedo's head in dimension 2 always moves drawing a $4$-connected path.
When the turedo has radius $1$, it has no way to read information across such a path (while it could with a larger radius).
Therefore head movements for turedos of radius $1$ turns into potential information barriers.
The precise way in which this simple observation affects the simulation power of such turedos depends on the type of simulation considered.

\begin{figure}
  \centering
  \begin{tikzpicture}[scale = .5,
    rrr/.style={fill = red, draw = gray, fill opacity=.3, draw opacity=1, text opacity = 1, text = black, thick},
    yyy/.style={fill = yellow, draw = gray, fill opacity=.3, draw opacity=1, text opacity = 1, text = black, thick},
    ttt/.style={fill = yellow, draw = gray, fill opacity=0, draw opacity=1, text opacity = 1, text = black, dotted},
    uuu/.style={fill = white, draw = gray, fill opacity=0, draw opacity=1, text opacity = 1, text = black, thick},
    bbb/.style={fill = blue, draw = gray, fill opacity=.3, draw opacity=1, text opacity = 1, text = black, thick},
    arr/.style={->, >=stealth, ultra thick, gray},
    >={Triangle[]},
    arr/.style={-{Triangle[length=4pt,width=6pt]},gray},
    dot/.style={black, line width=1pt, line cap=round, dash pattern=on 0pt off 4}
    ]
    \begin{scope}[shift={(-7,1)}]
      \draw[yyy, opacity=1] (-1,0) rectangle +(1,1) node[pos=.5] {$.$};
      \draw[yyy] (0,0) rectangle +(1,1) node[pos=.5] {$a$};
      \draw[rrr] (1,0) rectangle +(1,1) node[pos=.5] {$a$};
      \draw [arr] (0,.5) -- (0.13,.5);      
      \draw [arr] (.5,1) -- (.5,1.13);
    \end{scope}
    \draw[yyy, opacity=1] (-1,0) rectangle +(1,1) node[pos=.5] {$.$};
    \draw[yyy] (-1,1) rectangle +(1,1) node[pos=.5] {$.$};
    \draw[yyy] (0,1) rectangle +(1,1) node[pos=.5] {$.$};
    \draw[yyy] (1,1) rectangle +(1,1) node[pos=.5] {$.$};
    \draw[yyy] (1,0) rectangle +(1,1) node[pos=.5] {$.$};
    \draw[yyy] (2,0) rectangle +(1,1) node[pos=.5] {$X$};
    \draw[yyy] (2,1) rectangle +(1,1) node[pos=.5] {$.$};
    \draw[yyy] (2,2) rectangle +(1,1) node[pos=.5] {$.$};
    \draw [arr] (-.5,1) -- (-.5,1.13);    
    \draw [arr] (0,1.5) -- (.13,1.5);    
    \draw [arr] (1,1.5) -- (1.13,1.5);    
    \draw [arr] (1.5,1) -- (1.5,0.87);    
    \draw [arr] (2,.5) -- (2.13,.5);
    \draw [arr] (2.5,1) -- (2.5,1.13);    
    \draw [arr] (2.5,2) -- (2.5,2.13);    
    \draw [arr] (2.5,3) -- (2.5,3.13);

    \draw[rrr] (3,0) rectangle +(1,1) node[pos=.5] {$a_1$};
    \draw[rrr] (4,1) rectangle +(1,1) node[pos=.5] {$a_2$};
    \draw[rrr] (3,2) rectangle +(1,1) node[pos=.5] {$a_3$};

    \foreach \j in {-2,-1,...,5}{
      \draw[ttt] (\j,0) -- +(0,3);
    }
    \draw[ttt] (-3,1) -- +(9,0);
    \draw[ttt] (-3,2) -- +(9,0);
    \draw[uuu] (0,0) rectangle +(3,3);
    \draw[uuu] (-3,0) rectangle +(3,3);
    \draw[uuu] (3,0) rectangle +(3,3);
      
  \end{tikzpicture}  
  \caption{Example of a copy-and-move operation (on the left) and its rigorous simulation by a turedo of radius $1$ with ${3\times 3}$ block size (on the right). The color convention is as follows: in red the letters present in the seed, in dark yellow the initial position of the head, and in light yellow, the positions visited by the head during the orbit. The only position on the south side of the middle block that can depend on $a$ is the lower right corner, marked with a $X$.\label{fig:copyandmoverigorous}}
\end{figure}

Let us first consider rigorous simulations.
Any turedo $T$ (whatever its radius) can obviously do the following elementary copy-and-move operation (see Figure~\ref{fig:copyandmoverigorous}):
\begin{itemize}
\item move to the right to some position $z$;
\item read the letter $a$ present at position ${z+(1,0)}$;
\item then move to position ${z+(0,1)}$ and leave behind letter $a$ at position $z$.
\end{itemize}
In particular, if the head continues its way and later arrives at position ${z-(0,1)}$ from the south, it can read the information $a$ copied at position $z$.

However, if we suppose that some ${T_1\in\tur{2}{1}}$ simulates $T$ under rigorous simulations with block size $b$, the movement of its head inside block ${b\otimes z}$ to simulate the above copy-and-move step must be the following (see Figure~\ref{fig:copyandmoverigorous}):
\begin{itemize}
\item coming from the left side of the block, it draws some path inside it until it reaches the right border (if not it cannot read any information from the adjacent block to the right);
\item then it must move north, otherwise it would be trapped in the south part of the blocks by the $4$-connected path drawn so far that connects the left and right sides of the block;
\item it must finally escape through the north side.
\end{itemize}
This head movement is such that at most $1$ letter of $T_1$ is written on the south side of block ${b\otimes z}$ after having had the opportunity to read some information from the adjacent block that encodes letter $a$.
In particular, if $T_1$ has smaller alphabet than $T$ this is not enough to completely encode $a$ on the south side of block ${b\otimes z}$.
So, this is a limitation that has to be dealt with if later in the simulation the head arrives from the south and has to read from the south side of block ${b\otimes z}$. 
A single copy-and-move is not enough to get a contradiction because the simulation of $T_1$ could be organized so as to transport the complete information about $a$ along the way and have it on hands already when arriving at the south of block ${b\otimes z}$.
However, by repeating such copy-and-move steps, one can saturate the simulator and show the following theorem that states that there is no universal turedo of radius $1$ among turedos of radius $1$ for rigorous simulations.

\newcommand\seed{\sigma}

\begin{figure}
  \centering
  \begin{tikzpicture}[scale = .6,
    rrr/.style={fill = red, draw = gray, fill opacity=.3, draw opacity=1, text opacity = 1, text = black, thick},
    yyy/.style={fill = yellow, draw = gray, fill opacity=.3, draw opacity=1, text opacity = 1, text = black, thick},
    ttt/.style={fill = yellow, draw = gray, fill opacity=0, draw opacity=0, text opacity = 1, text = black, thick},
    bbb/.style={fill = blue, draw = gray, fill opacity=.3, draw opacity=1, text opacity = 1, text = black, thick},
    arr/.style={->, >=stealth, ultra thick, gray},
    >={Triangle[]},
    arr/.style={-{Triangle[length=4pt,width=6pt]},gray},
    dot/.style={black, line width=1pt, line cap=round, dash pattern=on 0pt off 4}
    ]

    \draw[rrr] (0,0) rectangle +(1,1) node[pos=.5] {\footnotesize $a_0$};
    \draw[rrr] (3,0) rectangle +(1,1) node[pos=.5] {\footnotesize $a_1$};
    \draw[rrr] (6,0) rectangle +(1,1) node[pos=.5] {\footnotesize $a_2$};
    \draw[rrr] (13,0) rectangle +(1,1) node[pos=.5] {\footnotesize $a_n$};

    \draw[rrr] (5,-3) rectangle +(1,1) node[pos=.5] {\footnotesize $a$};

    \draw[bbb] (16,0) rectangle +(1,1) node[pos=.5] {\tiny $\downarrow$};
    \draw[bbb] (16,-1) rectangle +(1,1) node[pos=.5] {\tiny $\downarrow$};
    \draw[bbb] (16,-2) rectangle +(1,1) node[pos=.5] {\tiny $\leftarrow$};
    \draw[bbb] (4,-2) rectangle +(1,1) node[pos=.5] {\tiny $\uparrow$};

    \draw[yyy, opacity=1] (-2,0) rectangle +(1,1) node[pos=.5] {$.$};
    \draw[yyy] (-1,1) rectangle +(1,1) node[pos=.5] {$.$};
    \draw[yyy] (0,1) rectangle +(1,1) node[pos=.5] {$.$};
    \draw[yyy] (1,1) rectangle +(1,1) node[pos=.5] {$.$};
    \draw[yyy] (1,0) rectangle +(1,1) node[pos=.5] {$.$};
    \draw[yyy] (-1,0) rectangle +(1,1) node[pos=.5] {\footnotesize $a_0$};
    \draw[yyy] (2,0) rectangle +(1,1) node[pos=.5] {\footnotesize $a_1$};
    \draw[yyy] (4,0) rectangle +(1,1) node[pos=.5] {$.$};
    \draw[yyy] (2,1) rectangle +(1,1) node[pos=.5] {$.$};
    \draw[yyy] (3,1) rectangle +(1,1) node[pos=.5] {$.$};
    \draw[yyy] (4,1) rectangle +(1,1) node[pos=.5] {$.$};
    \draw[yyy] (5,0) rectangle +(1,1) node[pos=.5] {\footnotesize $a_2$};
    \draw[yyy] (5,1) rectangle +(1,1) node[pos=.5] {$.$};
    \draw[yyy] (6,1) rectangle +(1,1) node[pos=.5] {$.$};
    \draw[yyy] (7,1) rectangle +(1,1) node[pos=.5] {$.$};
    \draw[yyy] (7,0) rectangle +(1,1) node[pos=.5] {$.$};
    \draw[yyy] (8,0) rectangle +(1,1) node[pos=.5] {\footnotesize $a_3$};
    \draw[ttt] (8.8,0) rectangle +(1,1) node[pos=.5] {$...$};


    \draw[ttt] (10.2,0) rectangle +(1,1) node[pos=.5] {$...$};
    \draw[yyy] (11,0) rectangle +(1,1) node[pos=.5] {$.$};
    \draw[yyy] (12,0) rectangle +(1,1) node[pos=.5] {\footnotesize $a_n$};
    \draw[yyy] (12,1) rectangle +(1,1) node[pos=.5] {$.$};
    \draw[yyy] (13,1) rectangle +(1,1) node[pos=.5] {$.$};
    \draw[yyy] (14,1) rectangle +(1,1) node[pos=.5] {$.$};
    \draw[yyy] (14,0) rectangle +(1,1) node[pos=.5] {$.$};
    \draw[yyy] (15,0) rectangle +(1,1) node[pos=.5] {$.$};
    \draw[yyy] (15,-1) rectangle +(1,1) node[pos=.5] {$.$};

    \foreach \i in {15,...,11}{
      \draw[yyy] (\i,-2) rectangle +(1,1) node[pos=.5] {$.$};
    }
    \foreach \i in {8,...,6}{
      \draw[yyy] (\i,-2) rectangle +(1,1) node[pos=.5] {$.$};
    }

    \draw[yyy] (5,-2) rectangle +(1,1) node[pos=.5] {\footnotesize $a$};
    \draw[yyy, fill opacity=1, fill=deepsaffron] (5,-1) rectangle +(1,1) node[pos=.5] {\footnotesize $=$?};

    \draw [arr] (-1,.5) -- (-.87,.5);
    \draw [arr] (2,.5) -- (2.13,.5);
    \draw [arr] (5,.5) -- (5.13,.5);
    \draw [arr] (8,.5) -- (8.13,.5);
    \draw [arr] (12,.5) -- (12.13,.5);
    \draw [arr] (15,.5) -- (15.13,.5);

    \draw [arr] (0,1.5) -- (.13,1.5);
    \draw [arr] (1,1.5) -- (1.13,1.5);
    \draw [arr] (3,1.5) -- (3.13,1.5);
    \draw [arr] (4,1.5) -- (4.13,1.5);
    \draw [arr] (6,1.5) -- (6.13,1.5);
    \draw [arr] (7,1.5) -- (7.13,1.5);
    \draw [arr] (13,1.5) -- (13.13,1.5);
    \draw [arr] (14,1.5) -- (14.13,1.5);

    \draw [arr] (-.5,1) -- (-.5,1.13);
    \draw [arr] (2.5,1) -- (2.5,1.13);
    \draw [arr] (5.5,1) -- (5.5,1.13);
    \draw [arr] (8.5,1) -- (8.5,1.13);
    \draw [arr] (12.5,1) -- (12.5,1.13);

    \draw [arr] (1.5,1) -- (1.5,.87);
    \draw [arr] (4.5,1) -- (4.5,.87);
    \draw [arr] (7.5,1) -- (7.5,.87);
    \draw [arr] (11.5,1) -- (11.5,.87);
    \draw [arr] (14.5,1) -- (14.5,.87);
    \draw [arr] (15.5,0) -- (15.5,-.13);
    \draw [arr] (15.5,-1) -- (15.5,-1.13);

    \foreach \i in {15,...,11}{
      \draw [arr] (\i,-1.5) -- (\i-.13,-1.5);
    }
    \foreach \i in {9,...,6}{
      \draw [arr] (\i,-1.5) -- (\i-.13,-1.5);
    }


    \draw[dot] (9.25,-1.5) -- (10.75,-1.5);


    \draw [arr] (5.5,-1) -- (5.5,-.87);

    \draw [arr] (6,-.5) -- (6.13,-.5);
    \draw [arr] (5,-.5) -- (4.87,-.5);
    \draw[ttt] (6,-1) rectangle +(1,1) node[pos=.5] {\color{gray}\textbf{\tiny YES}};
    \draw[ttt] (4,-1) rectangle +(1,1) node[pos=.5] {\color{gray}\textbf{\tiny NO}};

  \end{tikzpicture}
  \caption{Behavior of turedo $T'$ on the seed ${\seed(\vec{a},2,a)}$. The red and blue colors indicate letters present in the seed. The dark yellow color indicates the initial position of the head and light yellow cells represent the path of the head until the last step of the orbit. The orange cell correspond to the position of the head before moving left or right according to the result of the test 
    .\label{fig:nounivradius1}}  
\end{figure}
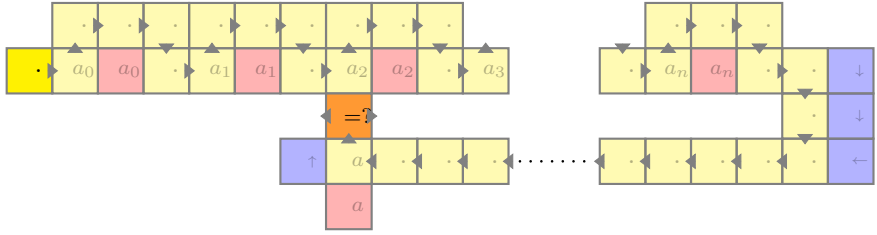

\begin{theorem}\label{theo:noradius1univ}
  For any ${T\in\tur{2}{1}}$ there is ${T'\in\tur{2}{1}}$ such that ${T'\not\simrig T}$.
  In particular ${\univsim{2}{\simrig}{1}\cap\tur{2}{1}=\emptyset}$.
\end{theorem}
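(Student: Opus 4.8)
The plan is to fix an arbitrary candidate universal turedo $T\in\tur{2}{1}$, say with alphabet $A_T$ of size $k$, and then engineer a turedo $T'\in\tur{2}{1}$ whose orbits require transporting more information across a $4$-connected barrier than $T$ can afford under a \emph{rigorous} simulation with any fixed block size $b$ and time scaling $k$. The key quantitative obstacle identified in the text is that each copy-and-move step, when simulated rigorously, leaves at most one letter of $T'$ on the south side of the relevant block after the head has had access to the adjacent block carrying the copied symbol: the $4$-connected path the head must draw to read the right neighbour and then escape north necessarily cuts off the south border. So one copy-and-move transmits at most $\log|A_{T'}|$ bits ``downward'' through a fresh block, whereas the simulator could in principle be carrying an unbounded amount of already-decided information along its trajectory.

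First I would design $T'$ as sketched in Figure~\ref{fig:nounivradius1}: on a seed $\seed(\vec a, i, a)$ it reads a row of $n+1$ letters $a_0,\dots,a_n$ from left to right (performing a copy-and-move at each), records the $i$-th one $a_i$ into its head state, continues to the end of the row, turns around, travels back south of the appropriate column, reads the block that (should) hold the copy of $a_i$, compares it with the separately-supplied letter $a$, and branches left on equality, right on inequality. The point of the index $i$ and the separate letter $a$ is that the \emph{final} branching of $T'$ depends on the content $a_i$ as recovered from the south side of the $i$-th block — a content that the rigorous simulation of $T$ was forced to under-encode there. Choosing the alphabet of $T'$ large enough that some coordinate of any block size $b$ admitting an encoding of $T'$ exceeds what one letter of $T$ can carry (a pumping argument on alphabet size exactly as in the proof of Theorem~\ref{thm:nofuzznofun}, second item) guarantees that after the copy-and-move of column $i$, the south border of block $b\otimes z_i$ does not determine $a_i$.

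The main step is then the counting/adversary argument. Suppose $T$ rigorously simulates $T'$ with block size $b$, scaling $k$, decoding $\Gamma$. Run $T'$ on the family of seeds obtained by varying $\vec a$ while keeping $i$ and the length $n$ fixed, with $n$ large. By rigidity, during the simulation of the $j$-th copy-and-move the head of $T$ stays within blocks $b\otimes z_{j-1}$, $b\otimes z_j$, $b\otimes z_{j+1}$, and when it permanently leaves column $j$ heading east the south border of block $b\otimes z_j$ carries at most one $T$-letter that could depend on $a_j$; moreover the head state of $T$ at that moment is a bounded quantity. When $T'$ later comes back from the east along the bottom row and re-enters column $i$ from the south, \emph{rigidity again} forbids the head of $T$ from straying outside blocks adjacent to the simulated position, so the only information about $a_i$ available to decide the final left/right branch is (i) the $\le 1$ letter on the south border of $b\otimes z_i$ and (ii) whatever was folded into the head state — but the head state budget is fixed independent of $n$, and along the homeward journey the head passes through $\Theta(n)$ blocks, so by a pigeonhole/fooling-set argument there are two seeds differing only in $a_i$ that are indistinguishable to the simulation at the decisive moment, yet on which $T'$ must branch in opposite directions (pick $a$ equal to the $a_i$ of one seed). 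This contradicts $\Gamma\circ\gmadeux^{k\cdot}=\gmaun^{\cdot}$.

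The hardest part to make rigorous is pinning down exactly ``what the simulation of $T$ knows about $a_i$ when the decisive move is made'': one must argue that under rigorous simulation the information flow is genuinely channelled through (a) the one-letter south border of the critical block and (b) the $O(1)$-size head state, with no leakage through other blocks — here the $4$-connected Jordan-curve structure of the head's path inside each block, together with rigidity of the simulation (the head never visits a block not adjacent to the simulated position, even at intermediate times), is doing all the work, and the argument should be organized as a clean ``cut'' lemma: the simulator's configuration south of the barrier, plus its head state at crossing time, is a bounded-capacity channel, so $T'$ can be chosen to demand strictly more than that capacity. I would isolate that cut statement first and then instantiate $T'$ and the seed family to violate it.
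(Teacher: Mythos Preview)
Your overall strategy is the paper's: build $T'$ that performs a row of copy-and-move steps on $a_0,\dots,a_n$, then loops back underneath to test $a_i$ against a separately supplied letter; exploit the fact that in a rigorous simulation only the lower-right corner cell of the south border of the $i$-th block can depend on $a_i$; and derive a contradiction by a pigeonhole/fooling-set argument. That is exactly the route the paper takes.

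However, your counting argument has a genuine gap. You claim that at the decisive moment the only information about $a_i$ available to $T$ is (i) the single cell on the south border of block $b\otimes z_i$ and (ii) the head state. This is not true: on the homeward journey the head of $T$ first reads the \emph{right boundary} of the forward pass (the column the paper calls $A_n$, adjacent to the last block of the copy-and-move sequence), and that content can depend on the entire vector $\vec a$, not just on $a_i$. Likewise the seed-encoded arrow/$a'$ positions ($C_{i,n}$) must be controlled. So the ``cut'' is not just (corner cell)$+$(head state); it is (content on $A_n$)$+$(head state at end of forward pass)$+$(corner cell of $B_{i,n}$)$+$(content on $C_{i,n}$). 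Your remark about ``$\Theta(n)$ blocks on the homeward journey'' does not address this: those blocks are fresh, but $A_n$ is not.

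The paper closes this gap with a two-stage pigeonhole. First, at time $t_0(n)=5k(n+1)$ (end of the forward pass), pigeonhole over the pair (content on $A_n$, head state): both are bounded in $n$, so some set $X_n\subseteq A_+^{n+1}$ of size $\Omega(m_+^{\,n})$ gives the same values. Second, since $|X_n|>m^n$ for large $n$ (here one only needs $m_+=|A_+|>|A|=m$, not any block-size lower bound as you suggest), a prefix-tree argument yields some depth $i$ and prefix $a_0,\dots,a_{i-1}$ with at least $m+1$ distinct continuations $a_i$ inside $X_n$. For these $m+1$ seeds the south borders $B_{i,n}$ agree (determined by the common prefix) except at the single corner cell, which takes at most $m$ values; hence two seeds agree on $B_{i,n}$ as well. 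Fixing a single block-encoding per letter makes $C_{i,n}$ identical across the family. Now the orbits of $T$ from time $t_0(n)$ onward are forced to coincide, yet the two seeds require opposite final moves. This is the precise argument you should supply in place of the vague ``head-state budget'' sentence.

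One minor slip: $T'$ does \emph{not} ``record $a_i$ into its head state'' on the forward pass (it does not know $i$ then); it copies every $a_j$ to the tape and later, coming up from below at column $3i+1$, reads that copy with radius $1$ while carrying $a'$ in its head. Your test is the right one, but the mechanism should be stated this way.
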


\begin{proof}
  Let $Q$ be the state set of $T$, $A$ be the alphabet of $T$ and consider any alphabet $A_+$ with ${m=|A|<|A_+|=m_+}$.
  Then it is straightforward to construct a turedo ${T'\in\tur{2}{1}}$ of alphabet $A'=A_+\cup\{\downarrow,\leftarrow,\uparrow\}$ that has the following behavior (see Figure~\ref{fig:nounivradius1}):
  \begin{itemize}
  \item for any ${n\in\N}$, any ${\vec{a}=(a_0,\ldots, a_n)\in A_+^{n+1}}$, ${a'\in A_+}$ and ${0\leq i\leq n}$, consider the finite seed ${\seed(\vec{a},i,a')}$ with head in position ${(0,0)}$, $a_j$ in position ${(3j+2,0)}$ for ${0\leq j\leq n}$, $a'$ in position ${(3i+1,-3)}$ and ${\downarrow}$ in positions ${(3(n+1)+2,0)}$ and ${(3(n+1)+2,-1)}$, $\leftarrow$ in position ${(3(n+1)+2,-2)}$ and finally $\uparrow$ at position ${(3i,-2)}$;
  \item from such a seed, $T'$ starts a sequence of ${n+1}$ copy-and-move steps that results in having a copy of ${a_j}$ at position ${(3j+1,0)}$ for ${0\leq j\leq n}$; the end of this phase occurs at time step ${5(n+1)}$ and the head reaches position ${(3(n+1),0)}$;
  \item then $T'$ reaches the first ${\downarrow}$ and follows the move indications of arrows (down, down, left), until it reaches the up arrow, and moves from position ${(3i+1,-2)}$ to ${(3i+1,-1)}$;
  \item finally, at position ${(3i+1,-1)}$ it moves right if ${a_i=a'}$ and left otherwise (it can do so because it has copied the value of $a'$ when leaving position ${(3i+1,-2)}$).
  \end{itemize}
  Let's call $t_{n,i}$ the time step at which occurs this final left or right move ($t_{n,i}$ only depends on $i$ and $n$): at time $t_{n,i}$, the head of $T'$ must be either at position ${(3i,-1)}$ or ${(3i+2,-1)}$.
  Thus $T'$ implements on seed ${\seed(\vec{a},i,a')}$ the test of whether ${a_i=a'}$.
  We are going to show that $T$ cannot simulate $T'$ under rigorous simulations.
  Suppose by contradiction that ${T'\simrig T}$ with block size $b$ and time scaling factor $k$.
  Given ${n\in\N}$ and ${0\leq i\leq n}$, denote by ${A_{n}\subseteq\Z^2}$ the set of positions that are on the right side of block ${b\otimes (3(n+1),0)}$ (the block corresponding to the position reached by $T'$ at the end of the copy-and-move sequence as detailed above).
  Denote by ${B_{i,n}\subseteq\Z^2}$ the set of positions that are on the south side of block ${b\otimes(3i+1,0)}$.
  Finally, denote by ${C_{i,n}\subseteq\Z^2}$ the set of positions made of the union of blocks ${(3i+1,-3)}$,  ${(3i,-2)}$, ${(3(n+1)+2,0)}$, ${(3(n+1)+2,-1)}$, ${(3(n+1)+2,-2)}$ (\textit{i.e.} those corresponding to position $a'$ or an arrow ${\{\downarrow,\leftarrow,\uparrow\}}$ in the seed ${\seed(\vec{a},i,a')}$).
  Consider now ${n\in\N}$, ${\vec{a},\vec{c}\in A_+^{n+1}}$ ${a'\in A_+}$ and ${0\leq i\leq n}$, and take any two global states ${g_1}$ and ${g_2}$ of $T$ that correctly simulate the orbits of $T'$ on seed ${\seed(\vec{a},i,a')}$ and ${\seed(\vec{c},i,a')}$ respectively and that are identical on $C_{i,n}$
  Considering time step ${t_0(n)=5k(n+1)}$ corresponding to the end of the copy-and-move sequence, if global states ${T^{t_0(n)}(g_1)}$ and ${T^{t_0(n)}(g_2)}$ are identical on domains ${A_n}$ and ${B_{i,n}}$ and have the same head state, then both orbits must make the same final decision to move to the left block or the right block at the final time step ${kt_{n,i}}$, precisely: the head in global state ${T^{kt_{n,i}}(g_1)}$ is in the same block as the head in global state ${T^{kt_{n,i}}(g_2)}$ (and it must be either ${b\otimes (3i,-1)}$ or ${b\otimes (3i+2,-1)}$).
  Indeed, by the property of rigorous simulations and the behavior of $T'$, the only positions with content written before ${t_0(n)}$ that the head of $T$ can possibly read between time step ${t_0(n)}$ and ${kt_{n,i}}$ are positions in ${A_n\cup B_{i,n}\cup C_{i,n}}$, so the orbit starting from step $t_0(n)$ is completely determined by the content of the configuration in that domain and the internal state of $T$ at time $t_0(n)$.
  \begin{claim}
    There must exist ${n\in\N}$, ${0\leq i\leq n}$, ${\vec{a}\in A_+^{n+1}}$, ${a'\in A_+}$ and ${\vec{c}\in A_+^{n+1}}$ with ${a_j=c_j}$ for all ${j<i}$ and ${a_i\neq c_i}$, and two global states ${g_1}$ and ${g_2}$ of $T$ that correctly simulate seeds ${\seed(\vec{a},i,a')}$ and ${\seed(\vec{c},i,a')}$ respectively, and also such that ${T^{t_0(n)}(g_1)}$ and ${T^{t_0(n)}(g_2)}$ have same head state and are identical on domain ${A_n\cup B_{i,n}\cup C_{i,n}}$.
  \end{claim}
  \begin{proof}[Proof of the claim]
    In this proof, we fix for each ${a\in A_+}$ a unique block of ${A^{\brect{b}}}$ that encodes it, and for any seed of $T'$ we only consider a unique global state of $T$ that simulates it.
    First, there are only a bounded number (bound in $n$) of possible content of a configuration on domain $A_n$ and state of $T$, so for each $n$ there must exist ${u\in A^{A_n}}$ and ${q\in Q}$, a set ${X_n\subseteq A_+^n}$ of size ${\Omega(m_+^n)}$ such that for each ${0\leq i\leq n}$ and each $\vec{a}\in X_n$, the corresponding global state $g$ of $T$ simulating $T'$ on seed ${\seed(\vec{a},i,n)}$, is such that ${T^{t_0(n)}(g)}$ is equal to $u$ on domain $A_n$ and with head state $q$.
    
    Second, we claim that for large enough $n$ there must be some ${i}$ and a prefix ${a_0,\ldots, a_{i-1}\in A_+^i}$ such that there are at least ${m+1}$ choices of ${a_i\in A_+}$ such that ${a_0,\ldots, a_i}$ can be completed into an element ${\vec{a}\in X_n}$.
    Indeed, otherwise we would have ${|X|\leq m^n}$ which would contradict the fact that ${|X|\in\Omega(m_+^n)}$ for large enough $n$ since ${m<m_+}$.

    Now consider the set of global states that simulates the seeds ${\seed(\vec{a},i,a')}$ where $\vec{a}\in X_n$ are the ${m+1}$ completed vectors from the common prefix ${a_0,\ldots,a_{i-1}}$, and $a'\in A_+$.
    They are identical on the blocks corresponding to the common prefix ${a_0,\ldots, a_{i-1}}$ of the seed they simulate.
    As already said, these global states at step ${t_0(n)}$ are also identical on domain $A_n$ and have same head state.
    Moreover, on domain $B_{i,n}$ and still at step ${t_0(n)}$, they agree because of the common prefix ${a_0,\ldots,a_{i-1}}$, except possibly on the lower-right corner where they can take at most $m$ different values (see Figure~\ref{fig:copyandmoverigorous} and discussion at the beginning of this section).
    We deduce that among the $m+1$ choices for $a_i$, at least $2$ must correspond to global states that completely agree on $B_{i,n}$.
    Denote by $a'$ and $c'$ these two choices and consider $\vec{a}$ and $\vec{c}$ to be the vectors completing the prefixes ${a_0,\ldots,a_{i-1},a'}$ and ${a_0,\ldots, a_{i-1},c'}$ respectively.
    The claim follows by choosing seeds ${\seed(\vec{a},i,a')}$ and ${\seed(\vec{c},i,a')}$.
  \end{proof}
  The theorem follows from the claim by contradiction: as shown above, global states ${g_1}$ and ${g_2}$ force the same behavior of $T$ starting from time $t_0(n)$, but at the same time their orbits should not end up in the same block because they simulate seeds of $T'$ that do not have the same answer to the final equality test.
\end{proof}

We will now establish a strong separation between ${\tur{2}{1}}$ and ${\tur{2}{2}}$ even under liberal simulations.
We first establish a lemma expressing bounds on information leakage between too regions separated by a $4$-connected path.
It is formulated using Kolmogorov complexity.
Recall that the (plain) Kolmogorov complexity of a string ${u\in\{0,1\}^*}$ is the length of the shortest program that outputs $u$, more precisely the length of the shortest ${v\in\{0,1\}^*}$ such that a suitable fixed universal Turing machine outputs $u$ on input $v$ (see \cite{Li_2008}).
\newcommand\kolmo[1]{\ensuremath{K(#1)}}
For any ${X\subseteq\Z^2}$ and any (partial) configuration ${c\in Q^X}$ of finite domain, we denote by ${\kolmo{c}}$ its Kolmogorov complexity, which is the Kolmogorov complexity of the finite binary string $u$
that encodes $c$ as a list of pairs ${(z,c(z))}$ such that ${c(z)\neq\blanc}$ given in lexicographical order.

\begin{lemma}\label{lem:kolmojordan}
  Let $C_0\in\N$ be some constant and ${T\in\tur{2}{1}}$.
  Then there is another constant $C\in\N$ with the following property.
  Consider any $4$-connected path $\rho$ of $\Z^2$ that divides $\Z^2$ in $2$ or more connected components, and any finite global state ${s\in\globstat{T}}$ with head at position ${(0,0)}$, and whose domain ${\dom{s}}$ lies entirely in one of the connected components defined by $\rho$, denoted $A_0$.
  Suppose moreover that for some ${n\in\N}$, the orbit from global state $s$ to global state ${(c,z,q)=\globmap{T}^n(s)}$ is such that the head visits path $\rho$ at most $C_0$ times.
  Then, the restriction of $c$ to the complement of $A_0$ has 'small' kolmogorov complexity: ${\kolmo{c_{|\Z^2\setminus A_0}}\leq C\log(n)}$.
\end{lemma}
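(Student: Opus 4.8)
The plan is to show that the configuration $c$ restricted to $\Z^2 \setminus A_0$ can be reconstructed from a small amount of data, namely the number $n$ (given in $O(\log n)$ bits) together with a constant amount of machine-dependent information (the description of $T$, the constant $C_0$, and a bounded log of what happens at the boundary crossings). The key structural observation is that because $T$ has radius $1$, the head cannot read any letter across the $4$-connected path $\rho$; so whatever ends up written outside $A_0$ is written \emph{by the head itself} after it has crossed $\rho$. Since the head is self-avoiding and crosses $\rho$ at most $C_0$ times, the portion of the orbit that takes place outside $A_0$ consists of at most $C_0$ contiguous \emph{excursions}, each of which is a self-avoiding walk whose entire behavior is determined by (i) the head state at the moment it crosses $\rho$, (ii) the local pattern it sees, which inside $\Z^2\setminus A_0$ it has written entirely by itself, and (iii) the positions of $\rho$ it may bump into (which only block moves, never supply letters).

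First I would set up the excursion decomposition: partition the time interval $\{0,\dots,n\}$ according to whether the head is in $A_0$ or in its complement, obtaining at most $C_0$ maximal sub-intervals during which the head lies outside $A_0$. For each such excursion, record as \emph{advice} the entry point (a lattice vector) and the entry head state; this is a bounded number of items, each of size $O(\log n)$ since all relevant coordinates are bounded by $n$ in absolute value (the head moves one step at a time). Then I would argue that, given this advice, a fixed decoder can replay each excursion: the head only ever queries $B_2(1)$-patterns of a configuration that, within $\Z^2\setminus A_0$, has been produced solely by the current and previous excursions — never by anything in $\dom{s}$ — because radius $1$ means the head at a position adjacent to $\rho$ reads only its $4$ neighbors and the diagonal ones within distance $1$, and to read a letter across $\rho$ it would have to step onto a position of the other component, which by definition of ``visiting $\rho$ at most $C_0$ times'' and the self-avoiding constraint we are already accounting for. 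The subtlety here is the cells of $\rho$ itself: $\rho$ is the head's own past trajectory, so cells of $\rho$ carry letters, but those letters were written while the head was in $A_0$ (or on $\rho$), and I need to check they do not leak unbounded information into the outside region — they don't, because the head reading a $\rho$-cell from outside still only learns one letter at a time, and those $\rho$-cells adjacent to the excursion are part of a bounded-width strip that can be folded into the $O(\log n)$ advice, or alternatively one argues the outside excursions only ever touch $\rho$-cells whose content is itself computable from the advice.

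The main obstacle I anticipate is precisely this last point: making rigorous that the head, during an outside excursion, never extracts more than $O(\log n)$ total bits of information from the boundary path $\rho$ and from $A_0$. The clean way to handle it is to bound the total number of $\rho$-cells that are ever adjacent to an outside excursion by $O(C_0 \cdot n)$ trivially, which is too weak, so instead I would use that each excursion is a connected self-avoiding walk that can only ``see'' $\rho$-cells along its own boundary, and crucially that whenever the head is about to move it already knows (from the replay so far) the contents of all non-$\rho$ cells it sees; the only genuinely external information is whether a neighboring cell belongs to $\rho$ (blocked) or is free, and this is a single bit per step — but over $n$ steps that is again $n$ bits, not $O(\log n)$. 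The resolution is to observe that the \emph{geometry} of $\rho$ restricted to a neighborhood of the excursions is not arbitrary: it is itself the trajectory of $T$ from the finite seed $s$, and $s$ — being the thing we are \emph{not} allowed to charge for — determines $\rho$ completely; so the decoder is given $s$'s contribution implicitly through a short program that simulates $T$ from a canonical encoding, and what must be $O(\log n)$-bounded is only the extra data distinguishing the outside behavior, namely $n$ and the at-most-$C_0$ crossing states. I would therefore phrase the final bound as: $c_{|\Z^2\setminus A_0}$ is computed by a fixed algorithm from the pair $(n, \text{crossing data})$ of total length $C_0 \cdot O(\log n) + O(1) \le C\log n$, absorbing $T$, $C_0$, and the decoder into the constant $C$, which is exactly the statement. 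The one calculation I would not belabor is the verification that $|z| \le n$ for every position $z$ touched by the head in $n$ steps, hence every coordinate fits in $O(\log n)$ bits.
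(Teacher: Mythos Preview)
Your overall strategy --- reconstruct $c_{|\Z^2\setminus A_0}$ from a short description of what happens at the boundary crossings --- is the paper's strategy. But your execution has two real problems.

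\textbf{Misreading of $\rho$.} You write that ``$\rho$ is the head's own past trajectory'' and later that ``the geometry of $\rho$ \ldots\ is itself the trajectory of $T$ from the finite seed $s$''. This is false: in the lemma, $\rho$ is an \emph{arbitrary} $4$-connected path given as a hypothesis, independent of $T$ and $s$. (In the application, $\rho$ is built from a path of \emph{empty} positions, not from the trajectory.) The whole paragraph where you try to argue that $\rho$ is determined by $s$ and therefore costs nothing is based on this misreading, and collapses.

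\textbf{Insufficient advice.} Recording only the entry point and entry head state for each excursion is not enough to replay it. The gap is precisely on $\rho$: when the head sits on a cell of $\rho$, its radius-$1$ neighborhood can include cells of $A_0$ (since $\rho$ borders $A_0$), so the local transition at that step depends on the content of $A_0$, which your decoder does not have. Your claim that during an excursion the head ``never'' reads anything from $\dom{s}$ is therefore wrong exactly at the $\rho$-steps. You half-notice this (``the subtlety here is the cells of $\rho$ itself'') but then resolve it via the incorrect identification of $\rho$ with the trajectory.

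The clean fix, which is what the paper does, is to record for each of the at most $C_0$ visits to $\rho$ the full local event: the position, the time step, the letter written, and the move made. This is $O(C_0\log n)$ bits. The decoder then maintains the configuration on $\Z^2\setminus A_0$ and a partial head position (``undefined'' when in $A_0$), stepping through time: when the head is on $\rho$, it uses the recorded event verbatim (no need to simulate, hence no dependence on $A_0$); when the head is in $\Z^2\setminus(A_0\cup\rho)$, all four neighbors lie in $\Z^2\setminus A_0$ by $4$-connectedness of $\rho$, so it simulates; when in $A_0$, it idles until the recorded times signal a return to $\rho$. This removes both obstacles you identified without any appeal to the structure of $\rho$.
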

\begin{proof}
  We show that ${c_{|\Z^2\setminus A_0}}$ can be computed from the following description ${\mathcal{D}}$:
  \begin{itemize}
  \item the finite list of positions on $\rho$ that are visited by the head during the $n$ first steps of the run starting from $s$;
  \item the list of events corresponding to each such position $z$ given as a triple: time at which the head leaves position $z$, letter written at that step, and move made by the head at that step.
  \end{itemize}
  This description is of size ${O(\log(n))}$ because both positions $z$ and time steps occurring in the above lists are bounded by $n$ by definition (recall that the head is initially at ${(0,0)}$). 

  Because $T$ is of radius $1$ and $\rho$ is $4$-connected, each time the head of the turedo is neither in $A_0$ nor on $\rho$, the local transition does not depend on the current configuration on domain $A_0$.
  A Turing machine can therefore compute ${c_{|\Z^2\setminus A_0}}$ from this description by maintaining the following partial information step by step:
  \begin{itemize}
  \item the current configuration restricted to domain ${\Z^2\setminus A_0}$,
  \item the partial information on the head position $z$: the exact position if ${z\not\in A_0}$ or the state ``undefined'' else.
  \end{itemize}
  This information is straightforward at the initial step since ${\dom{s}\subseteq A_0}$ so the head is in $A_0$ and the configuration is $\blanc$ everywhere outside $A_0$.
  The partial information at step $n$ is enough to give ${c_{|\Z^2\setminus A_0}}$ and it is updated from one step $i$ to the next $i+$ as follows:
  \begin{itemize}
  \item if the partial information on the head at step $i$ is undefined and time step $i+2$ does not appear in the lists of $\mathcal{D}$, then don't change the partial information (the head is in $A_0$ and won't move to $\rho$ at step $i+1$);
  \item if the head information is undefined but step $i+2$ appears in $\mathcal{D}$, then updates the head position to the position on $\rho$ that corresponds to the item stamped by time steps $i+2$ in $\mathcal{D}$;
  \item if the head position is on $\rho$ then some item in the list of $\mathcal{D}$ must be stamped by time step $i+1$ and gives all the information to update both the head position and the configuration on $\rho$;
  \item finally if the head position is neither in $A_0$ nor on $\rho$, then the knowledge of the current configuration restricted to domain ${\Z^2\setminus A_0}$ is enough to update the partial information (position and partial configuration).
  \end{itemize}
\end{proof}

Note that if the seed $s$ has 'large' kolmogorov complexity, for instance $\Omega(n)$, then $n$ steps are far from enough to transmit all the information about $s$ to another connected component under the hypothesis of the above lemma.
The power of this lemma lies in the fact that constant $C$ does not depend on the path $\rho$ nor on the seed $s$.
In particular, one can choose $\rho$ depending on $s$ to apply the lemma.
Turedos of radius $2$ can overcome the limitation of Lemma~\ref{lem:kolmojordan} because they can transmit information over a path without writing on it.
It turns out that this is enough to separate $\tur{2}{2}$ from $\tur{2}{1}$ even under liberal simulations.

\begin{theorem}\label{thm:radius2kolmojordan}
  There is ${T_2\in\tur{2}{2}}$ such that for any ${T_1\in\tur{2}{1}}$: ${T_2\not\simlib T_1}$.
\end{theorem}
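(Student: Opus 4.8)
The plan is to exhibit a specific turedo $T_2 \in \tur{2}{2}$ whose orbits, on a suitable family of seeds, transport an amount of information into a freshly-visited connected component that is linear in the number of steps, thereby violating the $C\log n$ bound of Lemma~\ref{lem:kolmojordan} for any purported radius-$1$ simulator $T_1$. Concretely, $T_2$ should first read a high-Kolmogorov-complexity string encoded in its seed (a string $w$ of length $\sim n$ with $\kolmo{w}\geq n - O(1)$, placed along a short segment), then build a $4$-connected ``wall'' path $\rho$ that separates the seed region $A_0$ from the region where it will re-deposit the information, and finally move to the other side of $\rho$ and write down a copy of $w$ there. The key design point is that a radius-$2$ turedo can cross its own trail: by looking two cells ahead it can read a bit sitting just beyond a one-cell-thick wall, so $T_2$ can physically carry $w$ across $\rho$ without ever stepping on $\rho$ after building it — something a radius-$1$ turedo provably cannot do by the lemma.

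The argument against any $T_1 \in \tur{2}{1}$ with $T_1 \not\simlib T_2$ then proceeds as follows. Suppose for contradiction that $T_2 \simlib T_1$ with block size $b$, block decoding $\Gamma$, and time scaling factor $k$. For each complexity parameter $n$, pick $w$ of length $n$ with $\kolmo{w} \geq n - O(1)$ and form the corresponding seed $\sigma_n$ of $T_2$ as above; the $T_2$-orbit from $\sigma_n$ halts (or reaches the relevant configuration) after some number of steps $N(n)$ that is polynomial in $n$ — linear suffices if the construction is careful — and by that time $w$ (or its block-encoding) appears in the second connected component. By liberality there is a seed $g_n$ of $T_1$ with the matching block domain and $\Gamma(g_n) = \sigma_n$, and $\Gamma(\globmap{T_1}^{kN(n)}(g_n))$ equals the $T_2$-configuration containing the decoded copy of $w$. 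The crucial geometric step is to pull back the separating wall $\rho$ from $T_2$'s grid to $T_1$'s grid: the union of blocks $b\otimes z$ over $z$ on $\rho$ (suitably padded) contains a $4$-connected path $\rho'$ of $\Z^2$ separating the pulled-back seed region from the pulled-back ``output'' region, and one argues that the head of $T_1$ can cross $\rho'$ only a bounded number of times $C_0$ — because crossing $\rho'$ corresponds, after decoding, to the head of $T_2$ visiting $\rho$, which $T_2$ does only $O(1)$ times by construction, and each $T_2$ step is simulated by exactly $k$ $T_1$-steps whose head stays near the relevant blocks. Then Lemma~\ref{lem:kolmojordan} applied to $T_1$, $\rho'$, $g_n$ gives $\kolmo{c_{|\Z^2\setminus A_0'}} \leq C\log(kN(n)) = O(\log n)$, while on the other hand that restricted configuration block-decodes (via the fixed finite map $\alpha$, a computable operation of constant description size) to a configuration containing $w$, so $\kolmo{w} \leq O(\log n)$; for $n$ large this contradicts $\kolmo{w}\geq n - O(1)$.

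I expect two steps to carry the real difficulty. The first is making precise and correct the claim that the head of $T_1$ crosses the pulled-back path $\rho'$ only boundedly often: liberal simulations allow fuzz and non-rigorous head motion, so one must be careful that the simulator does not repeatedly wander across the wall region for bookkeeping reasons while the simulated head sits still; the fix is to choose $\rho$ so that the blocks it passes through correspond to cells that $T_2$ visits exactly once (on its single wall-building pass) and are $\blanc$ in the seed, and to use the block-domain correspondence to show that $T_1$'s head can enter those blocks only during the $O(1)$ simulated steps where $T_2$ is on $\rho$, each contributing at most $k$ crossings, so $C_0 := O(k)$ works. The second delicate point is ensuring $N(n)$ is only polynomial (so that $\log(kN(n)) = O(\log n)$): this just requires that $T_2$'s wall-building and copying routines run in time linear in $|w|$, which is easy to arrange with a straightforward ``ferry the bits one at a time'' construction, but it must be stated explicitly. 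The remaining ingredients — existence of incompressible strings, the constant-size overhead of composing the fixed decoding map $\alpha$ with the Turing machine of Lemma~\ref{lem:kolmojordan}, and the explicit $\tur{2}{2}$ transition table implementing read/wall/ferry — are routine and I would relegate them to the construction and a figure.
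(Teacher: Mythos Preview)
Your overall strategy matches the paper's: build a radius-$2$ turedo that deposits $\Omega(n)$ bits of Kolmogorov complexity on the far side of a separating path in polynomially many steps, then apply Lemma~\ref{lem:kolmojordan} to any putative radius-$1$ simulator. The gap is in your crossing bound. You take $\rho$ to be the wall that $T_2$ writes and then pull it back to a path $\rho'$ inside the corresponding blocks of $T_1$. But for that wall to separate the seed (which carries a word of length~$n$) from the output region, it must have length $\Theta(n)$; hence $T_2$'s head sits on $\rho$ for $\Theta(n)$ steps while building it, not $O(1)$. Your sentence ``the $O(1)$ simulated steps where $T_2$ is on $\rho$'' conflates ``each cell of $\rho$ is visited exactly once'' with ``$T_2$ spends $O(1)$ total steps on $\rho$''. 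During those $\Theta(n)$ simulated steps the head of $T_1$ is within distance $k$ of the wall blocks and can touch $\rho'$ up to $\Theta(kn)$ times; Lemma~\ref{lem:kolmojordan} then yields only $O(n\log n)$, which is no contradiction. (Liberal simulation makes this worse, not better: the block-domain correspondence constrains only the \emph{initial} block domain, and fuzz lets $T_1$'s head wander through blocks adjacent to $\rho$ at many other moments as well.)

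The paper avoids this by a genuinely different choice of both $T_2$ and $\rho$. Its $T_2$ does not build a wall to read across; instead it makes $n$ zigzag copies of the seed word $u$, then skips one column leaving it \emph{empty}, makes $n$ further copies (using radius $2$ to read across that one-cell gap), and only at the very end goes around the second block of copies and descends through the empty column from a position $z_N$ to a position $z_S$. The separating path is then constructed \emph{directly in $T_1$'s space}, not pulled back: at the moment $kt_N$ just before $T_1$ must simulate this final descent, the $T_1$-head sits in block $b\otimes z_N$ and there must exist a $4$-connected path $p_1,\dots,p_m$ of \emph{still-empty} cells in $T_1$'s configuration down to block $b\otimes z_S$ (because $T_1$'s head is about to traverse it). One takes $\rho$ to be this path extended vertically to infinity. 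The middle portion has been visited zero times by construction; the infinite extensions lie near positions $z_N, z_S$ that $T_2$'s run approaches only $O(1)$ times before $t_N$, and the distance-$k$ constraint on $T_1$'s head then gives a bound $C_0$ independent of $n$. The idea you are missing is that the separator should be chosen as a \emph{future} path of the simulator's head, guaranteed empty at the instant Lemma~\ref{lem:kolmojordan} is applied, rather than as a region the simulated turedo has already filled.
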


\begin{figure}
\centering
\begin{tikzpicture}[scale = .25,
rrr/.style={fill = red, draw = gray, fill opacity=.3, draw opacity=1, text opacity = 1, text = black, thick},
yyy/.style={fill = yellow, draw = gray, fill opacity=.3, draw opacity=0, text opacity = 1, text = black, thick},
ttt/.style={fill = yellow, draw = gray, fill opacity=0, draw opacity=0, text opacity = 1, text = black, thick, font = \tiny},
ddd/.style={fill = deepsaffron, draw = gray, fill opacity=.7, draw opacity=0, text opacity = 1, text = black, thick, font = \tiny},
gg/.style={gray, thick},
arr/.style={->, >=stealth, very thick, black, rounded corners=.5ex, opacity = .2},
]

\def\x{12} 
\def\y{11} 
\def\z{10} 

\draw[rrr] (-1,0) rectangle +(1,\x) ;

\draw[yyy] (0,0) rectangle +(\x,\x) ;
\draw[yyy, fill opacity=1] (0,0) rectangle +(1,1);
\draw[yyy] (\x,0) rectangle +(1,1);
\draw[yyy] (\x+1,0) rectangle +(\x,\x) ;

\draw[ddd] (\x,2) -- (\x+1,2) -- (\x+1,\x) -- (2*\x+1,\x) -- (2*\x+1,0) -- (2*\x+2,0) -- (2*\x+2,\x+1) -- (\x,\x+1) -- cycle;

\draw[ddd, fill opacity =1] (\x,1) rectangle +(1,1) node[pos=.5] {};
\draw[ddd, fill opacity =1] (\x,\x) rectangle +(1,1) node[pos=.5] {};

\foreach \i in {0,1,...,\x}{\draw[ttt] (\i-1,0) rectangle +(1,\x) node[pos=.5] { $u$};}
\foreach \i in {1,...,\x}{\draw[ttt] (\x+\i,0) rectangle +(1,\x) node[pos=.5] { $u$};}

\draw [decorate,decoration={brace,amplitude=10pt},xshift=-6pt,yshift=0pt, thick]
(-1,0) -- (-1,\x) node [black,midway,xshift=-0.6cm] 
{\scriptsize $n$};

\draw [decorate,decoration={brace,mirror,amplitude=10pt},xshift=0pt,yshift=-6pt, thick]
(0,-.1) -- (\x,-.1) node [black,midway,below,yshift=-9pt] 
{\scriptsize $n$};

\begin{scope}[shift={(\x+1,0)}]
\draw [decorate,decoration={brace,mirror,amplitude=10pt},xshift=0pt,yshift=-6pt, thick]
(0,-.1) -- (\x,-.1) node [black,midway,below,yshift=-9pt] 
{\scriptsize $n$};
\end{scope}

\foreach \i in {1,3,...,\z}{
\draw[arr] (\i-.5,\x-1.5) -- (\i-.5,\x-.5) -- (\i+.5,\x-.5) -- (\i+.5,\x-1.5);
\draw[arr] (\i-.5+\x+1,\x-1.5) -- (\i-.5+\x+1,\x-.5) -- (\i+.5+\x+1,\x-.5) -- (\i+.5+\x+1,\x-1.5);
\draw[arr] (\i+.5,1.5) -- (\i+.5,.5) -- (\i+1.5,.5) -- (\i+1.5,1.5);
\draw[arr] (\i+.5+\x+1,1.5) -- (\i+.5+\x+1,.5) -- (\i+1.5+\x+1,.5) -- (\i+1.5+\x+1,1.5);
}
\draw[arr] (\x-1.5,\x-1.5) -- (\x-1.5,\x-.5) -- (\x-.5,\x-.5) -- (\x-.5,\x-1.5);
\draw[arr] (\x-1.5+\x+1,\x-1.5) -- (\x-1.5+\x+1,\x-.5) -- (\x-.5+\x+1,\x-.5) -- (\x-.5+\x+1,\x-1.5);
\draw[arr] (\x-.5,1.5) -- (\x-.5,.5) -- (\x+1.5,.5) -- (\x+1.5,1.5); 
\draw[arr] (\x-.5+\x+1,1.5) -- (\x-.5+\x+1,.5) -- (\x+.5+\x+1,.5) -- (\x+.5+\x+1,1.5);
\draw[arr] (\x+.5+\x+1,\x-.5) -- (\x+.5+\x+1,\x+.5) -- (\x-.5+\x+1,\x+.5);
\draw[arr] (\x+.5,.5*\x+1) -- (\x+.5,.5*\x-1);

\draw[gg] (0,0) -- (2*\x+2,0) -- (2*\x+2,\x+1) -- (\x,\x+1) -- (\x,\x) -- (0,\x) -- cycle;
\foreach \i in {1,3,...,\x}{
\draw[gg] (\i,0) -- (\i,\x-1);
\draw[gg] (\i+1,\x) -- (\i+1,1);
}

\draw[gg] (\x,1) -- (\x+1,1) -- (\x+1,\x);

\foreach \i in {1,3,...,\x}{
\draw[gg] (\x+1+\i,0) -- (\x+1+\i,\x-1);
\draw[gg] (\x+1+\i+1,\x) -- (\x+1+\i+1,1);
}

\draw[gg] (\x+1,\x) -- (2*\x+1,\x);

\draw[ddd, fill opacity =0] (\x,1) rectangle +(1,1) node[pos=.5] { $z_S$};
\draw[ddd, fill opacity =0] (\x,\x) rectangle +(1,1) node[pos=.5] { $z_N$};

\end{tikzpicture}
\caption{\label{fig:tur2kolmo}Orbit of turedo $T_2$ starting from a seed $u$ of length $n$ (in red) with the head initially in the position shown in dark yellow. The last part of the orbit is shown in orange.}
\end{figure}
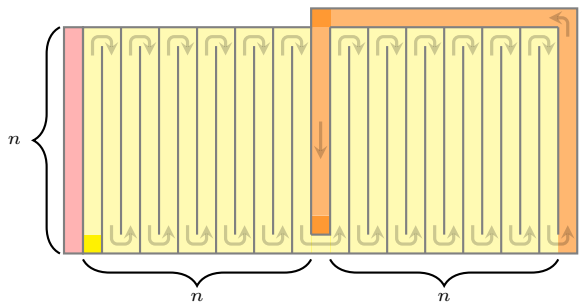

\begin{proof}
  Let's consider the turedo ${T_2\in\tur{2}{2}}$ that behaves as follows on a seed made of a vertical word $u$ of length $n$ (see Figure~\ref{fig:tur2kolmo}):
  \begin{itemize}
  \item it copies $u$ to the right by making zigzags and does this $n$ times (it implements a unary counter initialized to the length of $u$ while making copies);
  \item it then moves one cell to the right without making copies (and thus leaving an empty column above);
  \item it then do again $n$ copies of $u$ by zigzag while moving to the right (note that the first copy can be done because $T$ has radius $2$);
  \item at the end of the last copy it goes around the last bloc of $n$ copies by the north side until it encounters the empty column and then goes down into it until it is blocked.
  \end{itemize}
  Denote by $z_N$ and $z_S$ the northmost and southmost positions of the last sequence of $n$ south moves of the head (see Figure~\ref{fig:tur2kolmo}), and by ${t_N}$ and ${t_S}$ the respective time steps at which the head is at position $z_N$ and $z_S$. Note that $t_S$ is ${O(n^2)}$ and it is the final step of the orbit considered here.

  Now suppose by contradiction that there is some ${T_1\in\tur{2}{1}}$ such that ${T_2\simlib T_1}$.
  Denote by $k$ the time rescaling factor and $b$ the block size involved in this simulation.
  Let's suppose that $n$ is large enough (to be precised later) and that $u$ has large kolmogorov complexity, let's say ${\Omega(n)}$.
  Consider a global initial state $s_1$ for $T_1$ from which starts a correct simulation of the run of $T_2$.
  When the simulation reaches step ${kt_N}$, the head of $T_1$ is inside bloc ${b\otimes z_N}$ and there must be a finite $4$-connected path ${p_1,\ldots,p_m}$ of empty positions from this position to some position inside bloc ${b\otimes z_S}$ because $T_2$ has to simulate the state changes made by $T_1$ between steps $t_N$ and $t_S$ along the vertical segment of positions from ${z_N}$ to ${z_S}$.
  Let $\rho$ denote the infinite path that extends ${p_1,\ldots,p_m}$ infinitely to the north from $p_1$ and infinitely to the south from $p_m$.

  We claim that there is a bound $C_0$ depending only on $T_1$, $b$ and $k$, but not on $n$ and neither on $u$, such that the run of $T_1$ starting from global state $s_1$ until time step ${kt_N}$ crosses at most $C_0$ times path $\rho$.
  First, by choice of $p_1,\cdots, p_m$, such crossings can only happen at positions of $\rho$ that are either at the north of $p_1$ or at the south of $p_m$.
  The simulation is liberal, so the head of $T_1$ has some freedom of move but it must always remain at a bounded distance from the block corresponding to the simulated head position of $T_2$ during intermediate steps, precisely: if the head of $T_2$ is at position $z$ at time step $t$, then the head of $T_1$ must be inside block ${b\otimes z}$ at time step ${kt}$ and therefore at distance at most $k$ of block ${b\otimes z}$ during time steps between ${kt}$ and ${k(t+1)}$.
  A position is therefore potentially reachable by $T_1$ before time step ${kt_N}$ only if it is at distance at most $k$ from a block ${b\otimes z}$ such that position $z$ in the run of $T_2$ is visited before time step $t_N$.
  The key observation is that in the run of $T_2$, there are only finitely many positions that are visited before time $t_N$ and at distance less than $k$ from either ${z_N}$ or any position at the north of it, or from ${z_S}$ or any position at the south of it.
  From this we deduce that ${\rho\setminus\{p_1,\ldots,p_m\}}$ is crossed a bounded number of times $C_0$ by the head of $T_1$ before time step ${kt_N}$.
  The claim that $\rho$ is crossed at most $C_0$ times before time step ${kt_N}$ follows since ${\{p_1,\ldots,p_m\}}$ are by definition empty before this time step.

  Finally, note that path ${p_1,\ldots,p_m}$ cannot move away more than distance $k$ from blocks ${b\otimes z_N}$ to ${b\otimes z_S}$, so if $n$ is large enough, the domain of $s_2$ is guaranteed to lie entirely inside the left connected component $A_0$ of ${\Z^2\setminus \rho}$.
  Similarly, the blocks containing the encoding of the rightmost copy of $u$ must lie entirely inside ${\Z^2\setminus A_0}$.
  In particular, the configuration $c$ of $T_1$ reached at step $kt_N$ must be such that ${\kolmo{c_{|\Z^2\setminus A_0}}\in\Omega(n)}$ by choice of $u$.
  However, Lemma~\ref{lem:kolmojordan} applied at step $kt_N$ to $T_1$ and $\rho$ gives: ${\kolmo{c_{|\Z^2\setminus A_0}}\leq C\log(kt_S)\in O(\log(n))}$ which is a contradiction for large enough $n$.
\end{proof}

\section{Universality Results}
\label{sec:univ}

\subsection{Radius 3 in 2D under Rigorous Simulations: the Heat Sink Trick}
\label{sec:2Dr3univ}

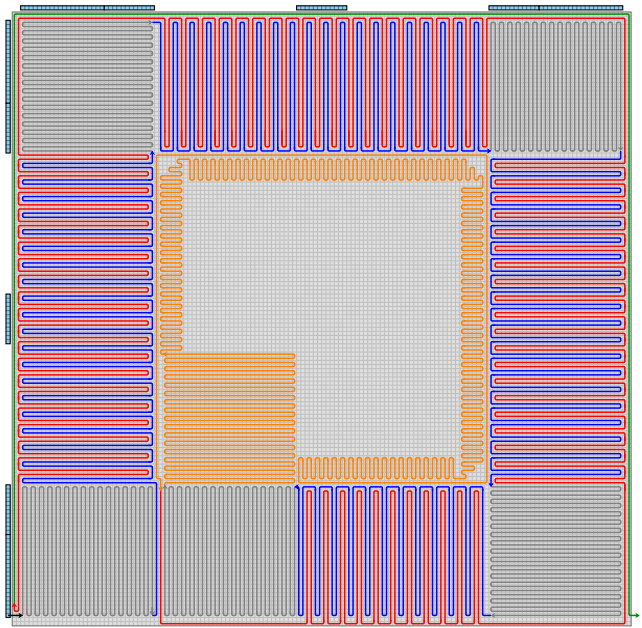
\begin{figure}[!ht]
\centering
\newcommand{\fullLineT}{.2mm}
\newcommand{\fullCorner}{.1ex}

\begin{tikzpicture}[scale = .055, 
arr/.style={->, -{Stealth[length=.5mm, width=.6mm]}},
backgroundLines/.style={lightgray, line width = .1mm},
cube/.style={fill=gray, fill opacity=1, draw opacity = 0, text = white},
cubeo/.style={fill=babyblue, fill opacity=1, draw opacity = 1, text = black},
cubet/.style={opacity = 0, text opacity = 1, text = black, font = \fontsize{4}{12} \selectfont},
write/.style={red, line width = \fullLineT, rounded corners=\fullCorner},
read/.style={blue, line width = \fullLineT, rounded corners=\fullCorner},
sortie/.style={islamicgreen, line width = \fullLineT, rounded corners=\fullCorner},
calcul/.style={orange, line width = \fullLineT, rounded corners=\fullCorner},
carr/.style={gray, line width = \fullLineT, rounded corners=\fullCorner, opacity = 1},
lettre/.style={fill=orange, fill opacity = .5, draw opacity = 0, text opacity = 1},
sepp/.style={black, opacity = .4}
]

\draw  [fill=lightgray ,fill opacity=.5, draw opacity = 0 ] (0,4) rectangle +(148,-148);
\foreach \x in {0,...,148} {
\draw [backgroundLines] (\x,4) -- (\x,-144);}
\foreach \x in {4,...,-144} {
\draw [backgroundLines] (0,\x) -- (148,\x);}
\draw  [gray ,fill opacity=0, draw opacity = 1 ] (0,4) -- (148,4) -- (148,-144) -- (0,-144) -- cycle;

\draw[cubeo] (2,4.5) rectangle +(20,1); \foreach \x in {3,...,21} {\draw [sepp] (\x,4.5) -- (\x,5.5);}
\draw[cubeo] (22,4.5) rectangle +(12,1); \foreach \x in {23,...,33} {\draw [sepp] (\x,4.5) -- (\x,5.5);}
\draw[cubeo] (68,4.5) rectangle +(12,1); \foreach \x in {69,...,79} {\draw [sepp] (\x,4.5) -- (\x,5.5);}
\draw[cubeo] (114,4.5) rectangle +(12,1); \foreach \x in {115,...,125} {\draw [sepp] (\x,4.5) -- (\x,5.5);}
\draw[cubeo] (126,4.5) rectangle +(20,1); \foreach \x in {127,...,145} {\draw [sepp] (\x,4.5) -- (\x,5.5);}

\draw[cubet] (2,6.5) rectangle +(20,1) node[pos= .5)] { transition table};
\draw[cubet] (22,6.5) rectangle +(12,1) node[pos=.5] { buffer};
\draw[cubet] (68,6.5) rectangle +(12,1) node[pos=.5] { $a_1 \in A_1$};
\draw[cubet] (114,6.5) rectangle +(12,1) node[pos=.5] { buffer};
\draw[cubet] (126,6.5) rectangle +(20,1) node[pos=.5] { transition table};

\draw[cubeo] (-.5,2) rectangle +(-1,-20); \foreach \y in {1,...,-17} {\draw [sepp] (-.5,\y) -- (-1.5,\y);}
\draw[cubeo] (-.5,-18) rectangle +(-1,-12); \foreach \y in {-19,...,-29} {\draw [sepp] (-.5,\y) -- (-1.5,\y);}
\draw[cubeo] (-.5,-64) rectangle +(-1,-12); \foreach \y in {-65,...,-75} {\draw [sepp] (-.5,\y) -- (-1.5,\y);}
\draw[cubeo] (-.5,-110) rectangle +(-1,-12); \foreach \y in {-111,...,-121} {\draw [sepp] (-.5,\y) -- (-1.5,\y);}
\draw[cubeo] (-.5,-122) rectangle +(-1,-20); \foreach \y in {-123,...,-141} {\draw [sepp] (-.5,\y) -- (-1.5,\y);}

\draw[cubet] (-3,-64) rectangle +(-1,-4) node[pos=.5] { $b_1$};
\draw[cubet] (-3,-68) rectangle +(-1,-4) node[pos=.5] { \rotatebox[origin=c]{-90}{$\in$}};
\draw[cubet] (-3,-72) rectangle +(-1,-4) node[pos=.5] { $A_1$};



\draw [carr] (1,-141.5) -- (2.5,-141.5) -- (2.5,-110.5) -- (3.5,-110.5) -- (3.5,-141); 
\foreach \x in {4,6,...,32}{
\draw [carr] (\x.5-1,-140.5) -- (\x.5-1,-141.5) -- (\x.5,-141.5) -- (\x.5,-110.5) -- (\x.5+1,-110.5) -- (\x.5+1,-141);}
\draw [carr, arr] (33.5,-139.5) -- (33.5,-141.5);

\foreach \x in {-29, -27, ..., -1}{
\draw [carr] (33.5,\x.5-.7) -- (33.5,\x.5) -- (2.5,\x.5) -- (2.5,\x.5+1) -- (33.5,\x.5+1) -- (33.5,\x.5+1.7);}
\draw [carr, arr] (33.5,0) -- (33.5,.5) -- (2.5,.5) -- (2.5,1.5) -- (33.5,1.5);

\foreach \x in {114,116,...,142}{
\draw [carr] (\x.5-.7,-29.5) -- (\x.5,-29.5) -- (\x.5,1.5) -- (\x.5+1,1.5) -- (\x.5+1,-29.5) -- (\x.5+1.7,-29.5);}
\draw [carr, arr] (144,-29.5) -- (144.5,-29.5) -- (144.5,1.5) -- (145.5,1.5) -- (145.5,-29.5);

\foreach \x in {-110,-112,...,-138}{
\draw [carr] (114.5,\x.5+.7) -- (114.5,\x.5) -- (145.5,\x.5) -- (145.5,\x.5-1) -- (114.5,\x.5-1) -- (114.5,\x.5-1.7);}
\draw [carr, arr] (114.5,-140) -- (114.5,-140.5) -- (145.5,-140.5) -- (145.5,-141.5) --(114.5,-141.5);

\foreach \x in {37,39,...,67}{
\draw [carr] (\x.5-1,-140.5) -- (\x.5-1,-141.5) -- (\x.5,-141.5) -- (\x.5,-110.5) -- (\x.5+1,-110.5) -- (\x.5+1,-141);}
\draw [carr, arr] (33.5,-139.5) -- (33.5,-141.5);
\draw [carr, arr] (36.5,-141.5) -- (36.5,-110);

\draw [sortie, black, arr] (-1,-141.5) -- (2.5, -141.5); 

\draw[sortie, arr] (.5,-139.5) -- (.5,3.5) -- (147.5,3.5) -- (147.5,-141.5) -- (150,-141.5);


\draw [read]  (33.5, -141.5) -- (34.5,-141.5) -- (34.5,-109.5) -- (2.5, -109.5) -- (2.5,-108.5) -- (33,-108.5);
\foreach \x in {-105, -101,...,-33} {
\draw [read] (32.5, \x.5-3) -- (33.5, \x.5-3) -- (33.5,\x.5) -- (2.5,\x.5) -- (2.5,\x.5+1) -- (33,\x.5+1);}
\draw [read, arr] (32.5,-32.5) -- (33.5,-32.5) -- (33.5,-29.5);

\draw [read]  (33.5,1.5) -- (35.5, 1.5) -- (35.5,-29);
\foreach \x in {38, 42,...,110} {
\draw [read] (\x.5-3,-28.5) -- (\x.5-3,-29.5) -- (\x.5,-29.5) -- (\x.5,1.5) -- (\x.5+1,1.5) -- (\x.5+1,-29);}
\draw [read, arr] (111.5,-28.5) -- (111.5,-29.5) -- (114.5,-29.5);

\draw [read]  (145.5, -29.5) -- (145.5,-31.5) -- (115,-31.5);
\foreach \x in {-107, -103,...,-35} {
\draw [read] (115,\x.5) -- (145.5,\x.5) -- (145.5,\x.5+1) -- (114.5,\x.5+1) -- (114.5,\x.5+4) -- (115.5,\x.5+4);}
\draw [read, arr] (115,-107.5) -- (114.5,-107.5) -- (114.5,-110.5);

\draw [read] (114.5,-141.5) -- (112.5,-141.5) -- (112.5,-111);
\foreach \x in {68, 72,...,110} {
\draw [read] (\x.5,-111) -- (\x.5,-141.5) -- (\x.5+1,-141.5) -- (\x.5+1,-110.5) -- (\x.5+4,-110.5) -- (\x.5+4,-111.5);}
\draw [read, arr] (68.5,-111.5) -- (68.5, -110.5) -- (67.5,-110.5);

\draw [write] (35.5,-110.5) -- (35.5,-143.5) -- (70.5, -143.5) -- (70.5, -142.5);

\foreach \x in {70,74,...,106}{
\draw [write] (\x.5,-143) -- (\x.5,-111.5) -- (\x.5+1, -111.5) -- (\x.5+1,-143.5) -- (\x.5+4,-143.5) -- (\x.5+4,-142.5);}
\draw [write] (110.5,-143) -- (110.5,-111.5) -- (111.5,-111.5) -- (111.5,-143.5) -- (146.5,-143.5) -- (146.5,-110);

\foreach \x in {-108,-104,...,-30}{
\draw [write] (146.5,\x.5+2) -- (146.5,\x.5) -- (115.5,\x.5) -- (115.5,\x.5-1) -- (146.5,\x.5-1) -- (146.5,\x.5-3);}
\draw [write] (112.5,-28) -- (112.5,-28.5) -- (113.5,-28.5) -- (113.5,2.5) -- (146.5,2.5) -- (146.5, -31);

\foreach \x in {37, 41,...,110} {
\draw [write] (\x.5-1, -24.5) -- (\x.5-1,-28.5) -- (\x.5,-28.5) -- (\x.5,2.5) -- (\x.5+3,2.5) -- (\x.5+3,-28);}
\draw [write] (1.5,-28.5) -- (1.5,2.5) -- (36.5,2.5) -- (36.5, -25);

\foreach \x in {-30,-34,...,-106}{
\draw [write] (1.5,\x.5+2) -- (1.5,\x.5) -- (32.5,\x.5) -- (32.5,\x.5-1) -- (1.5,\x.5-1) -- (1.5,\x.5-3);}
\draw [write, arr] (1.5,-108) -- (1.5,-140.5) -- (.5,-140.5) -- (.5,-138.5);



\foreach \x in {-109,-107,...,-81}{
\draw [calcul] (36.5,\x-1) -- (36.5,\x.5) -- (67.5,\x.5) -- (67.5,\x.5+1) -- (36.5,\x.5+1) -- (36.5,\x.5+1.7);}
\draw [calcul, arr] (36.5,-80) -- (36.5,-79.5) -- (67.5,-79.5) -- (67.5,-78.5) -- (36,-78.5);

\foreach \x in {-77,-75, ..., -37} {
\draw [calcul] (36.5,\x.5-1) -- (35.5, \x.5-1) -- (35.5, \x.5) -- (40.5,\x.5) -- (40.5,\x.5+1) -- (36,\x.5+1);} 

\draw [calcul] (36.5,-36.5) -- (35.5, -36.5) -- (35.5,-35.5) -- (40.5,-35.5) -- (40.5,-34.5) -- (37.5,-34.5) -- (37.5,-33.5) -- (40.5,-33.5) -- (40.5,-32.5) -- (39.5,-32.5) -- (39.5,-31.5) -- (40.5,-31.5) -- (42.5, -31.5) -- (42.5,-32.5);

\foreach \x in {43,45, ..., 107} {
\draw [calcul] (\x.5-1,-32) -- (\x.5-1,-36.5) -- (\x.5,-36.5) -- (\x.5, -31.5) -- (\x.5+1,-31.5) -- (\x.5+1, -32.5);} 

\draw [calcul] (108.5,-32) -- (108.5,-36.5) -- (109.5,-36.5) -- (109.5,-33.5) -- (110.5,-33.5) -- (110.5,-36.5) -- (111.5,-36.5) -- (111.5,-35.5) -- (112.5,-35.5) -- (112.5,-38.5) -- (108,-38.5);

\foreach \x in {-39,-41, ..., -103} {
\draw [calcul] (108.5, \x.5+1) -- (107.5, \x.5+1) -- (107.5,\x.5) -- (112.5,\x.5) -- (112.5, \x.5-1) -- (108,\x.5-1);}

\draw [calcul] (108.5, -104.5) -- (107.5,-104.5) -- (107.5,-105.5) -- (110.5,-105.5) -- (110.5,-106.5) -- (107.5,-106.5) -- (107.5,-107.5) -- (108.5,-107.5) -- (108.5,-108.5) -- (107.5,-108.5) -- (105.5,-108.5) -- (105.5,-104);

\foreach \x in {104,102, ..., 70} {
\draw [calcul] (\x.5+1,-104.5) -- (\x.5+1,-103.5) -- (\x.5,-103.5) -- (\x.5,-108.5) -- (\x.5-1,-108.5) -- (\x.5-1,-104);}

\draw [calcul, arr] (69.5,-104.5) -- (69.5,-103.5) -- (68.5,-103.5) -- (68.5,-109.5) -- (113.5,-109.5) -- (113.5,-30.5) -- (34.5,-30.5) -- (34.5,-108.5) -- (35.5,-108.5) -- (35.5,-111.5);

\end{tikzpicture}
\caption{\label{fig:TUR3univ_full} Behaviour, in one block, of the presented 2D radius 3 turedo $T_3 \in \univsim{2}{\simrig}{1}$.
The light blue rectangles represent the information to read of the neighbouring cells. In this example, a letter of $A_1$ is encoded by 3 letters of $A_3$. 
We can see the 2 padding cells, the transition table (here of arbitrary small size for readability of the figure), the buffer of size 12 (to contain 4 letters of $A_1$ encoded), the padding, the letter of the block encoded by 3 letters of $A_3$ repeated 4 times each and again the the padding, the buffer, the table and the last 2 padding cells.
The arrows represent the path followed by the turedo, entering the block in the bottom left of the figure.
The blue part reads the content of the adjacent blocks, the grey one allows for turning, the orange one is where the computing takes place, the red one fetchs the computed letters (and writes them on the faces that will not be visited again before exiting the block) and the green one finishes to write the letters and exits to the next block.
}
\end{figure}

Theorem~\ref{theo:noradius1univ} shows that no turedo of radius 1 can be universal for $\tur{2}{1}$ under rigorous simulations.
We show that this is however possible with radius 3.
In order to achieve universality, four key behaviours must be performed by our turedo at each simulation step. 
It needs to read all necessary information of neighbouring blocks, compute the next simulation step, write the computed letter and exit the current block (entering the correct next one).
Figure \ref{fig:TUR3univ_full} illustrate those behaviours. 
To rigorously simulate a turedo of radius 1, it seems natural to perform the 3 behaviours interacting with neighbours on the edge of the block, each on its layer, motivating a radius 3.
But as we are dealing with universality, the so called necessary information is not only the letters of neighbouring blocks but also the transition table defining the simulated turedo.
Carrying this information has a direct impact on the width of the reading and writing layer, we propose an intertwined zigzag to merge the space occupied by those two, keeping the radius 3 and taking full advantage of it.

\begin{theorem}\label{thm:heatsink}
  ${\univsim{2}{\simrig}{1}\cap\tur{2}{3}\neq\emptyset}$.
\end{theorem}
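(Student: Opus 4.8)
The plan is to fix, once and for all, a single radius-$3$ turedo $T_3$ (with some finite alphabet $A_3$ and state set $Q_3$) and then, for each target $T_1=(A_1,Q_1,\delta_1)\in\tur{2}{1}$, to exhibit a block encoding and a time scaling factor witnessing $T_1\simrig T_3$. Since $T_3$ is fixed while $|A_1|$, $|Q_1|$ and the size of $\delta_1$ are unbounded over the family, all of this data must be stored \emph{inside the blocks}, written in $A_3$, not in $T_3$'s head. One uses square blocks $\brect{b}$ with $b=(N,N)$ for $N=N(T_1)$ large enough. Fixing a small sub-alphabet of $A_3$, we encode each letter $a\in A_1$ by a word $w(a)$ of length $O(\log|A_1|)$, each head state of $Q_1$ by a word of length $O(\log|Q_1|)$, and the table $\delta_1$ by a word $\tau$ of length polynomial in $|\delta_1|$. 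A non-blank block representing letter $a$ carries, along the depth-$3$ strip adjacent to each of its four sides, padding cells, a copy of $\tau$, a buffer, the encoding $w(a)$ stored with some redundancy, another buffer, another copy of $\tau$, and padding cells (exactly as in Figure~\ref{fig:TUR3univ_full}), so that a head standing in an adjacent block recovers both $a$ and $\tau$ while walking along the shared edge — with $N$ large, radius $3$ reaches exactly the depth-$3$ strip of the neighbour and never past it. The interior is scratch space, blank in the seed; the headless map $\alpha$ decodes a well-formed block to its letter with $\alpha(\blanc^{\brect{b}})=\blanc$, and $\beta$ additionally reads the head state stored near the head. The seed of $T_3$ is obtained from that of $T_1$ by replacing each non-blank cell with the corresponding block and placing the head in the block of $T_1$'s head, so that block domains match.

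A \emph{simulation cycle} will consist of $k=k(T_1)$ steps of $T_3$ during which the head, having just entered the so-far blank block $B$ representing the current head position $z$ of $T_1$, runs four phases, all realized by one self-avoiding trajectory confined to $B$ apart from the last move. \textbf{(i) Read}: walk near the four sides of $B$, read the encoded letters of the (up to four) non-blank neighbours together with a copy of $\tau$, and store them into buffer zones of $B$. \textbf{(ii) Compute}: zig-zag through the buffers and the $\tau$-zone to perform the lookup $\delta_1(q_1,\cdot)$ on the collected pattern $\patt{c}{z}{\boule{2}{1}}$, obtaining the new head state $q_1'$, the letter $a$ to leave at $z$, and the move $\mu$. \textbf{(iii) Write}: fill the depth-$3$ boundary strips of all four sides of $B$ with $\tau$ and with $w(a)$ — since $T_1$ is self-avoiding, $T_3$'s head never revisits $B$, so this is the unique opportunity for future neighbours of $B$ to obtain $a$. \textbf{(iv) Exit}: write $q_1'$ into the still-blank block $b\otimes(z+\mu)$ and leave the head there, ready for the next cycle. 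Because a self-avoiding head cannot stall, one designs phases (i)--(iv) so that they always visit the same cell set of $B$ irrespective of the computation's outcome and of the exit direction (the four exit tails being padded to equal length); then a cycle takes a fixed number $k$ of steps, which we take as the time scaling factor.

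The place where radius $3$ — rather than a radius growing with $|A_1|$ and $|\delta_1|$ — becomes sufficient is the \emph{heat sink trick}. Naively, reading the unbounded boundary data of the neighbours and writing the unbounded boundary data of $B$ each demand their own boundary layer whose depth grows with the data, and since a self-avoiding head cannot retrace a layer, two separate layers would force an unbounded radius. Instead, the read trajectory of phase (i) and the write trajectory of phase (iii) are laid out as two interleaved combs — the ``intertwined zig-zag'' — occupying together only a constant-depth strip, and the transition table, which must be both read and rewritten, is routed through that strip so that the head always stays within distance $3$ of the cells it needs; this is the geometric content of Figure~\ref{fig:TUR3univ_full}, and the one place where radius $3$ is used to the full.

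Correctness is then an induction on $t$: if the configuration of $T_3$ after $kt$ steps is a valid encoding decoding to $F_{T_1}^t$ of $T_1$'s seed, then one cycle turns it into a valid encoding decoding to $F_{T_1}^{t+1}$, the letter, move and new state being exactly those prescribed by $\delta_1$ on $\patt{c}{z}{\boule{2}{1}}$ — faithfully recovered because blocks are large enough that radius-$3$ peeking cannot reach past an immediate neighbour. The rigorous condition holds by construction, since during cycle $t$ the head stays inside $b\otimes z_1^t\cup b\otimes z_1^{t+1}$, the only inter-block step being the exit move; and the seed block-domain condition holds because $\alpha$ sends only $\blanc^{\brect{b}}$ to $\blanc$ (the simulation is in fact fuzzless). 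Hence $T_1\simrig T_3$ for every $T_1\in\tur{2}{1}$, i.e. $T_3\in\univsim{2}{\simrig}{1}\cap\tur{2}{3}$. The main obstacle — and the real work — is the explicit design of phases (i)--(iv) as a single collision-free self-avoiding path inside one block that simultaneously collects unboundedly much neighbour data, runs a lookup over an unboundedly large table, deposits the new letter on all four sides before leaving, fits inside a constant-depth boundary strip, and covers a cell set (hence takes a number of steps) independent of the computation; interleaving the read and write combs without collisions while still leaving room for the compute zig-zag and for four equal-length exit tails is the delicate part, whereas the encodings and the verifications above are routine bookkeeping.
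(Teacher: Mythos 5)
Your proposal follows essentially the same route as the paper: square blocks carrying the transition table, buffers, and a redundantly encoded letter along each edge; a single self-avoiding trajectory per block organized into read, compute, write and exit phases of fixed length; and the radius-3 ``heat sink'' interleaving of the reading and writing zigzags (plus corner/square gadgets) so that unboundedly much boundary data fits in a constant-depth strip. The level of detail left implicit (the explicit collision-free layout inside a block) matches what the paper itself leaves to the figure, so I consider the proposal correct and equivalent in approach.
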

\begin{proof}
  We show that there is ${T_3\in\tur{2}{3}}$ such that for all ${T_1\in\tur{2}{1}}$: ${T_1\simrig T_3}$.
  Denote $T_3 = (A_3, Q_3, \delta_3)$. 
  Let's take $T_1 \in \tur{2}{1}$ some 2D turedo, $T_1 = (A_1, Q_1, \delta_1)$, a configuration $c_1 \in A_1^{\Z^2}$ and describe how $T_3$ simulates it with square blocks $\brect{(n,n)}$ and $n = 0 \bmod 4$.
	We first focus on the organisation of transmittable information in a given block $B$, \textit{i.e.} the transition table $\delta_1$ and the letter $a_1 \in A_1$ in this position in $c_1$.
	To be accessible to the neighbouring blocks, this information is present on the outside edges of $B$, repeated on each edge such that $B(0,i) = B(i,n-1) = B(n-1,n-1-i) = B(n-1-i,0)$.
	Considering a partial onto letter decoding map $\gamma : A_3^m \to A_1$ with $m \in \N$, the organisation on one edge of $B$ is the following. 
	The first two positions $B(0,0)$ and $B(0,1)$ are empty or irrelevant, then the next $3m|A_1|$ positions from $B(0,2)$ to $B(0,3m|A_1|+1)$ are the encoding of the transition table with $\gamma$ (which we assume to be a multiple of 4 without loss of generality). 
	Positions $B(0,3m|A_1|+2)$ to $B(0,m(3|A_1|+4)+1)$ are reserved for a buffer in which 4 letters will be encoded (the ones contained in the 4 neighbouring blocks). 
	Then $m(3|A_1|+4) +2$ positions are empty or irrelevant, from $B(0,m(3|A_1|+4)+2)$ to $B(0,2m(3|A_1|+4)+3)$, to allow the block to be spacious enough for the computation.
	Following that is written $u \in A_3^{4m}$ such that $u( 4i+k ) = \gamma(a_1)(i)$ for all $0 \leq k < 4$ (the redundancy is present to ensure proper reading later).
	Then again, from $B(0,2m(3|A_1|+6)+4)$ to $B(0,3m(3|A_1|+4)+2m+5)$ is some irrelevant padding followed by the $4m$ sized buffer, the $3m|A_1|$ sized transition table and 2 irrelevant position, finishing at $B(0,m(12|A_1|+20)+7)$. An illustration of this distribution is represented in figure \ref{fig:TUR3univ_full}.
					
	Let's now describe the behaviour of $T_3$ in a block to achieve universality. 
	As all necessary information to compute is contained in a $m(3|A_1|+4)$ letters long word on $A_3$, we base our construction on two types zigzags of this size : the square and the heat sink.
	A square gadget of even size $k$ is a back and forth $k/2$ times of a $k$ sized line during which the content of the original line is replicated on each four sides of the created square. 
	This copy is possible thanks to the radius of $T_3$ being greater than 2. Its main use is to keep and spread information while cornering.
	A heat sink gadget is a zigzag with each back and forth being spaced by 2. The radius 3 of $T_3$ allows the heat sink to still copy information from the previous zag during a zig.
	Its purpose is to acquire and transmit data laterally while being intertwined with another heat sink.
	Both gadget and the simulation of a block are illustrated in figure \ref{fig:TUR3univ_full} to illustrate the following explanation.
	
	$T_3$ enters a block $B$ at position $B(0,2)$ (or $B(2,n-1)$, $B(n,n-3)$, $B(n-3,0)$ up to rotation), it first continues forward by 2 (until reaching $B(2,2)$) then turns $90^{\circ}$ counter-clockwise and starts a square gadget, initialised by the copy of the transition table and empty buffer available at distance 3.
	Those squares will be performed at each corners of the block with in between a heat sink which will both transmit the transition table already collected earlier and approach the outside edge of the neighbouring block at distance 3, reading its information and filling the buffer. 
	(The heat sink has access to only half of this information but it has been taken care of by the redundancy detailed earlier).
	Once the information of all four neighbouring blocks collected, right after the last letter has been read, $T_3$ turns toward the center of the block, using the space left by the padding, performing one more square gadget.
	With a big enough transition table and a well chosen encoding, computing the next transition of $T_1$ can be performed in a square the size of the buffer and transition table encoded.
	As the only way to leave the center of the block is now a path of width 1, $T_3$ must transmit its computation before rejoining the edge of the block. 
	It does so with a zigzag that follows along the the inside of the reading heat sink.
	This information is then retrieved by the writing process, with a heat sink intertwined with the reading one. The writing process writes on the edge of the block on the faces before the exit one and at distance 1 after. 
	When following a square gadget, it copies the transition table and writes an empty buffer (which is possible because the square not only corners but copies information on all its sides), when following a reading gadget, it creates a heat sink gadget of its own, fetching the computed information and writing it on the face of the block.
	Lastly, once all the writing has been done, $T_3$ finishes filling its block by going back to the exit edge, following the writing path which had been shifted to the inside by 1 to allow for this, copying everything from the reading path to effectively write it on the edge making it attainable and finally $T_3$ exits the block at distance 2 from the corner.
\end{proof}

\subsection{The Power of  Third Dimension and Liberal Simulations}
\label{sec:3Duniv}

Having a third dimension available allows for a lot more freedom to simulate a turedo. Let us first focus on ${\tur{3}{1}}$ and rigorous simulations.
In dimension 2 we had to use the heat sink trick to gather all required information and we used a radius of 3 to accommodate for a reading, a writing and an exiting layer.
The third dimension allows us to shrink the radius to 1 thanks to its crossing capabilities.
The trick to achieve this is to have a marker to indicate if the neighbouring block is empty or not, to prevent trying to read in spaces where the turedo will have to write later on.

\begin{theorem}\label{thm:3Dunivr1rigor}
  ${\univsim{3}{\simrig}{1}\cap\tur{3}{1}\neq\emptyset}$.
\end{theorem}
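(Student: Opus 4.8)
The plan is to exhibit a single $T_3=(A_3,Q_3,\delta_3)\in\tur{3}{1}$ that rigorously simulates every $T_1=(A_1,Q_1,\delta_1)\in\tur{3}{1}$, the block size $b=(n,n,n)$, the block encoding and the time scaling $k$ being allowed to depend on $T_1$ (with $n$ large in terms of $|A_1|$ and $|Q_1|$). A non-blank block encodes one cell of $T_1$ and carries three kinds of data, written \emph{redundantly on each of its six faces} (so that each of the six radius-$1$ neighbours of the simulated cell can read it without entering the block, radius $1$ reaching across the shared face) and also in a scratch region of the interior: an encoding of the transition table $\delta_1$ in a format amenable to a self-avoiding radius-$1$ traversal (a trie-like layout), an encoding of the letter $a_1\in A_1$ currently sitting at that cell, and a small \emph{occupancy marker}. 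The occupancy marker is the new ingredient announced above: it is non-$\blanc$ exactly when the simulated cell has been visited by $T_1$ and its letter settled, so that the head of $T_3$, while staying inside the current block, can tell an occupied neighbour (whose face must be read) from an empty one (which must be left untouched, since the head of $T_1$ may visit it later). It is written on all six faces of a block precisely when $T_3$ leaves it, i.e. when the corresponding $T_1$ move is finalized; a freshly entered block carries only $\delta_1$ and a head-state marker, never the occupancy marker, which is consistent because a block is read only when $T_1$'s head stands on a neighbour of it, and by then any adjacent visited cell has long been settled.

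One step of $T_1$, with its head at $z$ in state $q_1$ producing $\delta_1(q_1,\cdot)=(q_1',a_1,\mu)$, is simulated in $k$ steps of $T_3$ during which the head of $T_3$ never leaves $(b\otimes z)\cup(b\otimes(z+\mu))$; this makes the simulation rigorous (hence fuzzless). On the block $B_z=b\otimes z$ (entered on the previous cycle, already holding $\delta_1$ and the head-state marker $q_1$) the cycle runs four phases. \textbf{Read:} the head tours the six faces of $B_z$ from inside, at each face inspecting the adjacent cells of the neighbouring block and, according to the occupancy marker, either recording $\blanc$ or copying the neighbour's encoded letter into a buffer inside $B_z$. \textbf{Compute:} with $q_1$ in its state and the six neighbour letters and $\delta_1$ all available inside $B_z$, the head performs the table lookup to obtain $(q_1',a_1,\mu)$; with the trie-like layout of $\delta_1$ and $n$ large enough this is realized by one self-avoiding radius-$1$ pass, exactly as in the computation step of the heat-sink construction of Theorem~\ref{thm:heatsink}. \textbf{Write:} the head writes the encoding of $a_1$ and the occupancy marker onto the reserved cells of all six faces of $B_z$, refreshing the copies of $\delta_1$ where needed. \textbf{Exit:} the head leaves $B_z$ through the face towards $B_{z+\mu}$, enters the blank block $B_{z+\mu}$, and there writes the encoding of $\delta_1$ and a head-state marker for $q_1'$, leaving $B_{z+\mu}$ ready for the next cycle.

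The reason radius $1$ is enough \emph{in dimension $3$} is that each of these traversals — touring the six faces for reading, touring them for writing, and each of the finitely many entrance-to-exit routes — must be a self-avoiding head path, which cannot be arranged in general inside a $2$-dimensional block (a closed $4$-connected path traps the head: this is exactly the obstruction underlying Theorems~\ref{theo:noradius1univ} and~\ref{thm:radius2kolmojordan}), but is routine in a $3$-dimensional block, since a curve never disconnects $\Z^3$: one reserves pairwise disjoint corridors through the interior for the connecting segments and for any sideways signal routing, and a block large in $n$ has room for all of them at once. These gadgets and routes are finitely many and hard-coded into $\delta_3$, triggered by the structure markers the head itself wrote during the set-up of the block, so $T_3$ never needs to count up to $n$. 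I expect the bulk of the work to be two-fold: producing an explicit collision-free $3$-dimensional layout that realizes all the read/write/exit gadgets simultaneously (the qualitative obstruction has disappeared, but a concrete packing still has to be given), and implementing the lookup in the \emph{unbounded} table $\delta_1$ with a self-avoiding radius-$1$ head, which is inherited essentially verbatim from Theorem~\ref{thm:heatsink}. It then remains to check the routine bookkeeping: $\alpha$ and $\beta$ are partial onto with $\alpha(\blanc^{\brect{(n,n,n)}})=\blanc$; for any finite seed of $T_1$ the block domain of its encoded $T_3$-seed equals the domain of the $T_1$-seed; and $\globmap{T_1}^{t}=\Gamma\circ\globmap{T_3}^{kt}$ for all $t$ on every valid encoding. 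Since the same $T_3$ works for every $T_1\in\tur{3}{1}$, with only $n$, the encoding and $k$ depending on $T_1$, we conclude $T_3\in\univsim{3}{\simrig}{1}$.
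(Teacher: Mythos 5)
Your proposal follows essentially the same route as the paper's proof: a block encoding carrying the simulated transition table, letter and an occupancy/presence marker on the block faces (the marker being exactly the paper's trick for telling occupied neighbours from empty ones that must not be entered), a read--compute--write--exit cycle whose head path stays inside the current and next blocks to ensure rigour, and the use of the third dimension to route the self-avoiding radius-$1$ traversals that are blocked by the Jordan-curve obstruction in 2D. The paper is merely more explicit about the geometric layout (crosses on faces, a 3D-cross skeleton built during the reading phase, shrinking/square/triangular zigzags, and the table written only on the exit face), where you defer the collision-free packing and borrow the table-lookup zigzag from the heat-sink construction, but the underlying argument is the same.
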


\begin{proof}
	We show that there is ${T\in\tur{3}{1}}$, $T = (A,Q,\delta)$, such that for all ${T'\in\tur{3}{1}}$: ${T'\simrig T}$.
	Denote ${T' = (A',Q',\delta')}$.
	For a block $B$, the transmittable information is organised on the faces as follow. 
	In the center of each faces, a marker is written, indicating that information is present.
	On the face of the block facing the next block (the exit face), added to the presence marker, is an empty cross of width 1 with only at its end a stop marker indicating the end the block.
	Surrounding it is a cross of width 3 in which is written the transition table and a buffer big enough to accommodate for the encoding of 6 letters of $A'$ and one state from $Q'$ (with the size of the transition table and the buffer plus 1 left before writing). 
	On the face of the block facing other empty blocks, the same crossing pattern is used to only write the presence marker and the computed letter at distance 2 of the center of the face.
	See figure \ref{fig:TUR1univ3D} for this organisation on an example.

\newcommand{\cubeBack}[9]{
	\draw [#7,fill = #8] (#1+#3*\vectX,#2+#3*\vectY) -- (#1+#3*\vectX,#5+#3*\vectY) -- (#1+#6*\vectX,#5+#6*\vectY) -- (#4+#6*\vectX,#5+#6*\vectY) -- (#4+#6*\vectX,#2+#6*\vectY) -- (#4+#3*\vectX,#2+#3*\vectY) -- cycle; 
	\draw[#9] (#1+#3*\vectX,#2+#3*\vectY) -- (#1+#6*\vectX,#2+#6*\vectY) -- (#4+#6*\vectX,#2+#4*\vectY); 
	\draw[#9] (#1+#6*\vectX,#2+#6*\vectY) -- (#1+#6*\vectX,#5+#6*\vectY); 
}
\newcommand{\cubeFront}[7]{ 
	\draw [#7]  (#4+#6*\vectX,#5+#6*\vectY) -- (#4+#6*\vectX,#2+#6*\vectY) -- (#4+#3*\vectX,#2+#3*\vectY) -- (#1+#3*\vectX,#2+#3*\vectY) -- (#1+#3*\vectX,#5+#3*\vectY); 
	\draw[#7] (#1+#3*\vectX,#5+#3*\vectY) -- (#4+#3*\vectX,#5+#3*\vectY) -- (#4+#6*\vectX,#5+#6*\vectY); 
	\draw[#7] (#4+#3*\vectX,#2+#3*\vectY) -- (#4+#3*\vectX,#5+#3*\vectY) ; 
}
\newcommand{\cubeFull}[8]{
	\draw [#7,fill = #8] (#1+#3*\vectX,#2+#3*\vectY) -- (#1+#3*\vectX,#5+#3*\vectY) -- (#1+#6*\vectX,#5+#6*\vectY) -- (#4+#6*\vectX,#5+#6*\vectY) -- (#4+#6*\vectX,#2+#6*\vectY) -- (#4+#3*\vectX,#2+#3*\vectY) -- cycle; 
	\cubeFront{#1}{#2}{#3}{#4}{#5}{#6}{#7}
}

\newcommand{\milieuCompute}[1]{
	\draw[double arrow milieu=\arrpt pt colored by black and \colorCompute, rounded corners=\cornerex ex, ] (0+#1*\vectX+.5*\vectX,-32+#1*\vectY+.5*\vectY) -- (0+#1*\vectX,-32+#1*\vectY) -- (0+#1*\vectX,-16+#1*\vectY) -- (0+#1*\vectX-.5*\vectX,-16+#1*\vectY-.5*\vectY); 
	\draw[double arrow milieu=\arrpt pt colored by black and \colorCompute, rounded corners=\cornerex ex, ] (0+#1*\vectX-.5*\vectX,-16+#1*\vectY-.5*\vectY) -- (0+#1*\vectX-1*\vectX,-16+#1*\vectY-1*\vectY) -- (0+#1*\vectX-1*\vectX,-32+#1*\vectY-1*\vectY) -- (0+#1*\vectX-1.5*\vectX,-32+#1*\vectY-1.5*\vectY) ;}

\newcommand{\writeOne}[1]{\draw[double arrow milieu=\arrpt pt colored by black and \colorWrite, rounded corners=\cornerex ex, ] (0+18.5*\vectX,#1+18.5*\vectY) -- (0+19*\vectX,#1+19*\vectY) -- (0+19*\vectX,#1-1+19*\vectY) -- (0+18*\vectX,#1-1+18*\vectY) -- (0+18*\vectX,#1-2+18*\vectY) -- (0+18.5*\vectX,#1-2+18.5*\vectY);}

\newcommand{\writeTwo}[3]{ 
\draw[double arrow milieu=\arrpt pt colored by black and \colorWrite, rounded corners=\cornerex ex, ]  (#3+#1*\vectX+.5*\vectX,#2+#1*\vectY+.5*\vectY) -- (#3+#1*\vectX,#2+#1*\vectY) --  (#3-2+#1*\vectX,#2+#1*\vectY) --(#3-2+#1*\vectX-.5*\vectX,#2+#1*\vectY-.5*\vectY) ;
\draw[double arrow milieu=\arrpt pt colored by black and \colorWrite, rounded corners=\cornerex ex, ] (#3-2+#1*\vectX-.5*\vectX,#2+#1*\vectY-.5*\vectY) -- (#3-2+#1*\vectX-1*\vectX,#2+#1*\vectY-1*\vectY) -- (#3+#1*\vectX-1*\vectX,#2+#1*\vectY-1*\vectY) -- (#3+#1*\vectX-1.5*\vectX,#2+#1*\vectY-1.5*\vectY)  ;
}

\newcommand{\writeFour}[3]{ 
	\draw[double arrow milieu=\arrpt pt colored by black and \colorWrite, rounded corners=\cornerex ex, ] (#3+#1*\vectX+.5*\vectX, #2+#1*\vectY+.5*\vectY) -- (#3+#1*\vectX, #2+#1*\vectY) -- (#3+#1*\vectX, #2-2+#1*\vectY) -- (#3+#1*\vectX-.5*\vectX, #2-2+#1*\vectY-.5*\vectY);
	\draw[double arrow milieu=\arrpt pt colored by black and \colorWrite, rounded corners=\cornerex ex, ] (#3+#1*\vectX-.5*\vectX, #2-2+#1*\vectY-.5*\vectY) -- (#3+#1*\vectX-1*\vectX, #2-2+#1*\vectY-1*\vectY) -- (#3+#1*\vectX-1*\vectX, #2+#1*\vectY-1*\vectY) -- (#3+#1*\vectX-1.5*\vectX, #2+#1*\vectY-1.5*\vectY) ;}
	
	\newcommand{\writeThree}[1]{\draw[double arrow milieu=\arrpt pt colored by black and \colorWrite, rounded corners=\cornerex ex, ] (-1+1*\vectX,#1-.5+1*\vectY) -- (-1+1*\vectX,#1+1*\vectY) -- (-1+3*\vectX,#1+3*\vectY) -- (-1+3*\vectX,#1+1+3*\vectY) -- (-1+1*\vectX,#1+1+1*\vectY) -- (-1+1*\vectX,#1+1.5+1*\vectY);}

\newcommand{\finSque}[1]{\draw[double arrow milieu=\arrpt pt colored by black and \colorSque, rounded corners=\cornerex ex, ] (#1,-32.4-#1) -- (#1,-33-#1) -- (0,-33-#1) -- (0,-34-#1) -- (#1+2,-34-#1) -- (#1+2,-34.5-#1) ;}

\newcommand{\debutSque}[1]{\draw[double arrow milieu=\arrpt pt colored by black and \colorSque, rounded corners=\cornerex ex] (0,#1+.5) -- (0,#1) -- (0+32*\vectX, #1+32*\vectY) -- (0+32*\vectX,#1-1+32*\vectY) -- (0,#1-1) -- (0,#1-1.5) ;}

\newcommand{\toRightSque}[1]{\draw[double arrow milieu=\arrpt pt colored by black and \colorSque, rounded corners=\cornerex ex, ] (32-#1+#1*\vectX-\vectX,#1*\vectY-\vectY) -- (32-#1+#1*\vectX,#1*\vectY) -- (33-#1+#1*\vectX,#1*\vectY) -- (33-#1,0) -- (34-#1,0) -- (34-#1+#1*\vectX-3*\vectX,#1*\vectY-3*\vectY);}

\newcommand{\toLeftSque}[1]{\draw[double arrow milieu=\arrpt pt colored by black and \colorSque, rounded corners=\cornerex ex, ] (-34+\x-1.5*\vectX, 0-1.5*\vectY) -- (-34+\x-0*\vectX, 0-0*\vectY) -- (-33+\x-0*\vectX, 0-0*\vectY) -- (-33+\x-\x*\vectX, 0-\x*\vectY) -- (-32+\x-\x*\vectX, 0-\x*\vectY) -- (-32+\x-1*\vectX, 0-1*\vectY);}

\newcommand{\toFrontSque}[1]{
	\draw[double arrow milieu=\arrpt pt colored by black and \colorSque, rounded corners=\cornerex ex, ] (.4-#1*\vectX, 0-#1*\vectY) -- (32-#1-#1*\vectX, 0-#1*\vectY) -- (32-#1-#1*\vectX-\vectX, 0-#1*\vectY-\vectY) -- (30-#1-#1*\vectX-\vectX, 0-#1*\vectY-\vectY);
	\draw[double arrow milieu=\arrpt pt colored by black and \colorSque, rounded corners=\cornerex ex, ] (32-#1-#1*\vectX-.5*\vectX, 0-#1*\vectY-.5*\vectY) -- (32-#1-#1*\vectX-\vectX, 0-#1*\vectY-\vectY) -- (0-#1*\vectX-\vectX, 0-#1*\vectY-\vectY) -- (0-#1*\vectX-2*\vectX, 0-#1*\vectY-2*\vectY) -- (2-#1*\vectX-2*\vectX, 0-#1*\vectY-2*\vectY) ;}

\newcommand{\exitToFace}[3]{\draw[double arrow milieu=\arrpt pt colored by black and \colorExit, rounded corners=\cornerex ex, ] (#3-.5+#1*\vectX, #2-2+#1*\vectY) -- (#3+#1*\vectX, #2-2+#1*\vectY) -- (#3+#1*\vectX, #2+#1*\vectY) -- (#3+1+#1*\vectX, #2+#1*\vectY) -- (#3+1+#1*\vectX, #2-2+#1*\vectY) -- (#3+1.5+#1*\vectX, #2-2+#1*\vectY);}

\newcommand{\bottomTriangleExit}[1]{
	\draw[double arrow milieu=\arrpt pt colored by black and \colorExit, rounded corners=\cornerex ex, ] (32+#1*\vectX+16*\vectX+.5*\vectX, -32+#1+#1*\vectY+16*\vectY+.5*\vectY) -- (32+#1*\vectX+16*\vectX, -32+#1+#1*\vectY+16*\vectY) -- (32+#1*\vectX+16*\vectX, -16+#1*\vectY+16*\vectY) -- (32+#1*\vectX+16*\vectX-.5*\vectX, -16+#1*\vectY+16*\vectY-.5*\vectY) ;
	\draw[double arrow milieu=\arrpt pt colored by black and \colorExit, rounded corners=\cornerex ex, ] (32+#1*\vectX+16*\vectX-.5*\vectX, -16+#1*\vectY+16*\vectY-.5*\vectY) -- (32+#1*\vectX+16*\vectX-1*\vectX, -16+#1*\vectY+16*\vectY-1*\vectY) -- (32+#1*\vectX+16*\vectX-1*\vectX, -34+#1+#1*\vectY+16*\vectY-1*\vectY) -- (32+#1*\vectX+16*\vectX-1.5*\vectX, -34+#1+#1*\vectY+16*\vectY-1.5*\vectY) ;}
	
\newcommand{\bottomSquareExit}[1]{ 
	\draw[double arrow milieu=\arrpt pt colored by black and \colorExit, rounded corners=\cornerex ex, ] (32+#1*\vectX+.5*\vectX, -32+#1*\vectY+.5*\vectY) -- (32+#1*\vectX, -32+#1*\vectY) -- (32+#1*\vectX, -16+#1*\vectY) -- (32+#1*\vectX-.5*\vectX, -16+#1*\vectY-.5*\vectY) ;
	\draw[double arrow milieu=\arrpt pt colored by black and \colorExit, rounded corners=\cornerex ex, ] (32+#1*\vectX-.5*\vectX, -16+#1*\vectY-.5*\vectY) -- (32+#1*\vectX-1*\vectX, -16+#1*\vectY-1*\vectY) -- (32+#1*\vectX-1*\vectX, -32+#1*\vectY-1*\vectY) -- (32+#1*\vectX-1.5*\vectX, -32+#1*\vectY-1.5*\vectY) ;}
	
\newcommand{\backSquareExit}[1]{\draw[double arrow milieu=\arrpt pt colored by black and \colorExit, rounded corners=\cornerex ex, ] (32+32*\vectX, #1-.5+32*\vectY) -- (32+32*\vectX, #1+32*\vectY) -- (32+16*\vectX, #1+16*\vectY) -- (32+16*\vectX, #1+1+16*\vectY) -- (32+32*\vectX, #1+1+32*\vectY) -- (32+32*\vectX, #1+1.5+32*\vectY);}

\newcommand{\backTriangleExit}[1]{\draw[double arrow milieu=\arrpt pt colored by black and \colorExit, rounded corners=\cornerex ex, ] (32+32*\vectX-#1*\vectX, 17+#1-.5+32*\vectY-#1*\vectY) -- (32+32*\vectX-#1*\vectX, 17+#1+32*\vectY-#1*\vectY) -- (32+16*\vectX, 17+#1+16*\vectY) -- (32+16*\vectX, 17+#1+1+16*\vectY) -- (32+30*\vectX-#1*\vectX, 17+#1+1+30*\vectY-#1*\vectY) -- (32+30*\vectX-#1*\vectX, 17+#1+1.5+30*\vectY-#1*\vectY);}

\newcommand{\topSquareExit}[1]{ 
	\draw[double arrow milieu=\arrpt pt colored by black and \colorExit, rounded corners=\cornerex ex, ] (32+#1*\vectX+.5*\vectX, 32+#1*\vectY+.5*\vectY) -- (32+#1*\vectX, 32+#1*\vectY) -- (32+#1*\vectX, 16+#1*\vectY) -- (32+#1*\vectX-.5*\vectX, 16+#1*\vectY-.5*\vectY) ;
	\draw[double arrow milieu=\arrpt pt colored by black and \colorExit, rounded corners=\cornerex ex, ] (32+#1*\vectX-.5*\vectX, 16+#1*\vectY-.5*\vectY) -- (32+#1*\vectX-1*\vectX, 16+#1*\vectY-1*\vectY) -- (32+#1*\vectX-1*\vectX, 32+#1*\vectY-1*\vectY) -- (32+#1*\vectX-1.5*\vectX, 32+#1*\vectY-1.5*\vectY) ;}

\newcommand{\toFrontSquare}[1]{
	\draw[double arrow milieu=\arrpt pt colored by black and \colorExit, rounded corners=\cornerex ex, ] (16-\x*\vectX+.5*\vectX, 1-\x*\vectY+.5*\vectY) -- (16-\x*\vectX, 1-\x*\vectY) -- (32-\x*\vectX, 1-\x*\vectY) -- (32-\x*\vectX-.5*\vectX, 1-\x*\vectY-.5*\vectY);
	\draw[double arrow milieu=\arrpt pt colored by black and \colorExit, rounded corners=\cornerex ex, ] (32-\x*\vectX-.5*\vectX, 1-\x*\vectY-.5*\vectY) -- (32-\x*\vectX-1*\vectX, 1-\x*\vectY-1*\vectY) -- (16-\x*\vectX-1*\vectX, 1-\x*\vectY-1*\vectY) -- (16-\x*\vectX-1.5*\vectX, 1-\x*\vectY-1.5*\vectY);}
	
\newcommand{\toFrontTriangle}[1]{
	\draw[double arrow milieu=\arrpt pt colored by black and \colorExit, rounded corners=\cornerex ex, ]  (32-32*\vectX+#1*\vectX-.5*\vectX, 1-32*\vectY+#1*\vectY-.5*\vectY) -- (32-32*\vectX+#1*\vectX, 1-32*\vectY+#1*\vectY) -- (32 -#1-32*\vectX+#1*\vectX, 1-32*\vectY+#1*\vectY) -- (32 -#1-32*\vectX+#1*\vectX+.5*\vectX, 1-32*\vectY+#1*\vectY+.5*\vectY);
	\draw[double arrow milieu=\arrpt pt colored by black and \colorExit, rounded corners=\cornerex ex, ]  (32-32*\vectX+#1*\vectX-.5*\vectX, 1-32*\vectY+#1*\vectY-.5*\vectY) -- (32-32*\vectX+#1*\vectX-1*\vectX, 1-32*\vectY+#1*\vectY-1*\vectY) -- (34 -#1 -32*\vectX+#1*\vectX-1*\vectX, 1-32*\vectY+#1*\vectY-1*\vectY) -- (34 -#1 -32*\vectX+#1*\vectX-1.5*\vectX, 1-32*\vectY+#1*\vectY-1.5*\vectY) ;}
	
\newcommand{\frontSquareTriangleExit}[1]{ 
	\draw[double arrow milieu=\arrpt pt colored by black and \colorExit, rounded corners=\cornerex ex, ] (32-32*\vectX, -#1+.5-32*\vectY) -- (32-32*\vectX, -#1-32*\vectY) -- (32-16*\vectX, -#1-16*\vectY) -- (32-16*\vectX, -#1-1-16*\vectY) -- (32-32*\vectX, -#1-1-32*\vectY) -- (32-32*\vectX, -#1-1.5-32*\vectY) ;
	\draw[double arrow milieu=\arrpt pt colored by black and \colorExit, rounded corners=\cornerex ex, ] (32-32*\vectX+#1*\vectX, -16-#1+.5-32*\vectY+#1*\vectY) -- (32-32*\vectX+#1*\vectX, -16-#1-32*\vectY+#1*\vectY) -- (32-16*\vectX, -16-#1-16*\vectY) -- (32-16*\vectX, -16-#1-1-16*\vectY) -- (32-30*\vectX+#1*\vectX, -16-#1-1-30*\vectY+#1*\vectY) -- (32-30*\vectX+#1*\vectX, -16-#1-1.5-30*\vectY+#1*\vectY);}
	
\newcommand{\vectX}{.3}
\newcommand{\vectY}{.3}
\newcommand{\arrpt}{1}
\newcommand{\cornerex}{.05}
\newcommand{\colorSque}{azure(colorwheel)}
\newcommand{\colorCompute}{orange}
\newcommand{\colorExit}{green(ryb)}
\newcommand{\colorWrite}{tomato}

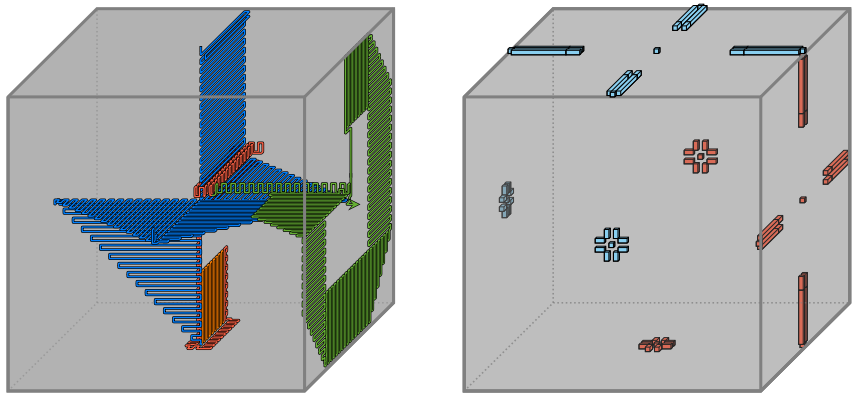
\begin{figure}
\centering
\begin{tikzpicture}[scale =.06,
arr/.style={->, >=stealth},
g/.style={gray, line width = .4mm, fill opacity = .3}, 
gprime/.style={gray, line width = .4mm, fill opacity = .6}, 
g2/.style={gray, line width = .2mm, densely dotted, opacity = .5}, 
b/.style={blue, line width = .1mm}, 
r/.style={red, line width = .1mm}, 
bl/.style={black, line width = .1mm}, 
] 

\cubeBack{-32.5}{-32.5}{-32.5}{32.5}{32.5}{32.5}{gprime}{gray}{g2}

\draw[double arrow milieu=\arrpt pt colored by black and \colorWrite, rounded corners=\cornerex ex, ]  (0+18.5*\vectX,-32+18.5*\vectY) --  (0+19*\vectX,-32+19*\vectY) --  (1+19*\vectX,-32+19*\vectY) -- (1+18*\vectX,-32+18*\vectY) -- (3+18*\vectX,-32+18*\vectY) -- (3+17.5*\vectX,-32+17.5*\vectY) ;

\foreach \y in {-30, -28, ..., -16}{\writeOne{\y}}
\draw[double arrow milieu=\arrpt pt colored by black and \colorWrite, rounded corners=\cornerex ex, ] (0+16.5*\vectX,-32+16.5*\vectY) -- (0+17*\vectX,-32+17*\vectY) -- (0+17*\vectX,-16+17*\vectY) -- (0+18.5*\vectX,-16+18.5*\vectY) ;

\foreach \z in {17,15,...,-1}{\writeTwo{\z}{-32}{3}}

\draw[double arrow milieu=\arrpt pt colored by black and \colorWrite, rounded corners=\cornerex ex, ]  (-1+1*\vectX,-30.5+1*\vectY) -- (-1+1*\vectX,-31+1*\vectY) -- (-1+3*\vectX,-31+3*\vectY) -- (-1+3*\vectX,-32+3*\vectY) -- (-1+2.5*\vectX,-32+2.5*\vectY);

\foreach \x in {-30,-28,...,-4}{\writeThree{\x}}

\draw[double arrow milieu=\arrpt pt colored by black and \colorWrite, rounded corners=\cornerex ex, ]  (-1+1*\vectX, -2.5+1*\vectY) --  (-1+1*\vectX, -2+1*\vectY) -- (-1+3*\vectX, -2+3*\vectY) -- (-1+3*\vectX, -1+3*\vectY) -- (-1+1*\vectX, -1+1*\vectY) -- (-0+1*\vectX, -1+1*\vectY) ;

\draw[double arrow milieu=\arrpt pt colored by black and \colorWrite, rounded corners=\cornerex ex, ] (1+31.5*\vectX, -1+31.5*\vectY) -- (1+32*\vectX, -1+32*\vectY) -- (3+32*\vectX, -1+32*\vectY) -- (3+32*\vectX, -2+32*\vectY) -- (1+32*\vectX, -2+32*\vectY) -- (1+32*\vectX, -3+32*\vectY) -- (1+32*\vectX, -3+32*\vectY) -- (0+32*\vectX, -3+32*\vectY) -- (0+32*\vectX, -1+32*\vectY) -- (-1+32*\vectX, -1+32*\vectY)-- (-1+32*\vectX, -3+32*\vectY) -- (-2+32*\vectX, -3+32*\vectY) -- (-3+32*\vectX, -3+32*\vectY) -- (-3+32*\vectX, -2+32*\vectY)  -- (-2+32*\vectX, -2+32*\vectY) -- (-2+32*\vectX, -1+32*\vectY) -- (-3+32*\vectX, -1+32*\vectY) -- (-3+32*\vectX, 0+32*\vectY) -- (-1+32*\vectX, 0+32*\vectY) -- (-1+32*\vectX, 1+32*\vectY) -- (-3+32*\vectX, 1+32*\vectY)-- (-3+32*\vectX, 2+32*\vectY) -- (-1+32*\vectX, 2+32*\vectY) -- (-1+32*\vectX, 3+32*\vectY) -- (-1+31.5*\vectX, 3+31.5*\vectY);
\draw[double arrow milieu=\arrpt pt colored by black and \colorWrite, rounded corners=\cornerex ex, ] (3+30.5*\vectX, -1+30.5*\vectY) -- (3+31*\vectX, -1+31*\vectY) -- (1+31*\vectX, -1+31*\vectY) -- (1+31.5*\vectX, -1+31.5*\vectY);
\foreach \z in {30,28,...,6}{\writeTwo{\z}{-1}{3}}
\draw[double arrow milieu=\arrpt pt colored by black and \colorWrite, rounded corners=\cornerex ex, ] (0+3.5*\vectX, -1+3.5*\vectY) -- (0+4*\vectX, -1+4*\vectY) -- (3+4*\vectX, -1+4*\vectY) -- (3+4.5*\vectX, -1+4.5*\vectY) ;
\draw[double arrow milieu=\arrpt pt colored by black and \colorWrite, rounded corners=\cornerex ex, ] (2+2.5*\vectX, -1+2.5*\vectY) -- (2+3*\vectX, -1+3*\vectY) -- (0+3*\vectX, -1+3*\vectY) -- (0+3.5*\vectX, -1+3.5*\vectY) ;
\draw[double arrow milieu=\arrpt pt colored by black and \colorWrite, rounded corners=\cornerex ex, ] (-0+1.5*\vectX, -1+1.5*\vectY) -- (-0+2*\vectX, -1+2*\vectY) -- (2+2*\vectX, -1+2*\vectY) -- (2+2.5*\vectX, -1+2.5*\vectY) ;
\draw[double arrow milieu=\arrpt pt colored by black and \colorWrite, rounded corners=\cornerex ex, ] (-1+1.5*\vectX, -1+1.5*\vectY) -- (-1+1*\vectX, -1+1*\vectY) -- (-0+1*\vectX, -1+1*\vectY) -- (-0+1.5*\vectX, -1+1.5*\vectY) ;

\foreach \z in {31, 29, ..., 1} {\writeFour{\z}{3}{-1}}

\draw[double arrow milieu=\arrpt pt colored by black and \colorWrite, rounded corners=\cornerex ex, ] (-3+1.5*\vectX,-32+1.5*\vectY) -- (-3+2*\vectX,-32+2*\vectY) -- (-1+2*\vectX,-32+2*\vectY) -- (-1+2.5*\vectX,-32+2.5*\vectY) ;
\draw[double arrow milieu=\arrpt pt colored by black and \colorWrite, rounded corners=\cornerex ex, ] (-3+1.5*\vectX,-32+1.5*\vectY) -- (-3+1*\vectX,-32+1*\vectY) -- (-1+1*\vectX,-32+1*\vectY) -- (-1+.5*\vectX,-32+.5*\vectY);
\draw[double arrow milieu=\arrpt pt colored by black and \colorWrite, rounded corners=\cornerex ex, ] (-3-.5*\vectX,-32-.5*\vectY) -- (-3-0*\vectX,-32-0*\vectY) -- (-1-0*\vectX,-32-0*\vectY) -- (-1+.5*\vectX,-32+.5*\vectY) ;
\draw[double arrow milieu=\arrpt pt colored by black and \colorWrite, rounded corners=\cornerex ex, ]  (-3-1*\vectX+.5*\vectX,-32-1*\vectY+.5*\vectY) -- (-3-1*\vectX,-32-1*\vectY) -- (-2-1*\vectX,-32-1*\vectY) --  (-2-3*\vectX,-32-3*\vectY) -- (-1-3*\vectX,-32-3*\vectY) --  (-1-1*\vectX,-32-1*\vectY) -- (0-1*\vectX,-32-1*\vectY) -- (0-3*\vectX,-32-3*\vectY) -- (3-3*\vectX,-32-3*\vectY) -- (3-3*\vectX+.5*\vectX,-32-3*\vectY+.5*\vectY);

\foreach \x in {16, 14, ..., 2}{\milieuCompute{\x}}


\foreach \x in {-30,-28,...,-3} { \finSque{\x}}
\draw[double arrow milieu=\arrpt pt colored by black and \colorSque, rounded corners=\cornerex ex, ] (-32+1*\vectX,-1+1*\vectY) -- (-32+3*\vectX,-1+3*\vectY) -- (-31+3*\vectX,-1+3*\vectY) -- (-31+0*\vectX,-1+0*\vectY) -- (-30+0*\vectX,-1+0*\vectY) -- (-30+1*\vectX,-1+1*\vectY) -- (-29+1*\vectX,-1+1*\vectY) -- (-29+0*\vectX,-1+0*\vectY) -- (0+0*\vectX,-1+0*\vectY)  -- (0+0*\vectX,-2+0*\vectY) -- (-30+0*\vectX,-2+0*\vectY) -- (-30+0*\vectX,-2.5+0*\vectY);
\draw[double arrow milieu=\arrpt pt colored by black and \colorSque, rounded corners=\cornerex ex, ] (-31-.5*\vectX,0-.5*\vectY) -- (-31-0*\vectX,0-0*\vectY) -- (-32-0*\vectX,0-0*\vectY)  -- (-32-0*\vectX,-1-0*\vectY) -- (-32+1*\vectX,-1+1*\vectY) ;
\draw[double arrow milieu=\arrpt pt colored by black and \colorSque, rounded corners=\cornerex ex] (-2,-30.5) -- (-2,-31) -- (0,-31) -- (0,-32) -- (0+.5*\vectX,-32+.5*\vectY);

\foreach \y in {30, 28, ..., 1} {\debutSque{\y}}
\draw[double arrow milieu=\arrpt pt colored by black and \colorSque, rounded corners=\cornerex ex] (0,.5) -- (0,0) -- (0+32*\vectX, 32*\vectY) -- (.5+32*\vectX,0+32*\vectY) ;
\draw[double arrow debut=\arrpt pt colored by black and \colorSque, rounded corners=\cornerex ex](0+28*\vectX,31+28*\vectY) -- (0+0*\vectX,31+0*\vectY) -- (0+0*\vectX,30.5+0*\vectY)  ; 
\draw[double arrow debut=\arrpt pt colored by black and \colorSque, rounded corners=\cornerex ex] (1+0*\vectX,31.5+0*\vectY) -- (1+0*\vectX,31+0*\vectY) -- (1+32*\vectX,31+32*\vectY)  -- (0+32*\vectX,31+32*\vectY)  -- (0+26*\vectX,31+26*\vectY);
\draw[double arrow debut=\arrpt pt colored by black and \colorSque, rounded corners=\cornerex ex] (0,32.5) -- (0,32)  -- (0+32*\vectX,32+32*\vectY) -- (1+32*\vectX,32+32*\vectY) -- (1+0*\vectX,32+0*\vectY) -- (1+0*\vectX,31.5+0*\vectY) ;
\draw[double arrow debut=\arrpt pt colored by black and \colorSque, rounded corners=\cornerex ex] (0,34) -- (0,32.5);

\foreach \x in {2,4,...,32}{  \toRightSque{\x}}
\draw[double arrow milieu=\arrpt pt colored by black and \colorSque, rounded corners=\cornerex ex, ] (30+1*\vectX,0+1*\vectY) -- (30+2*\vectX,0+2*\vectY) -- (31+2*\vectX,0+2*\vectY) -- (31,0) -- (32,0) -- (32-1*\vectX,0-1*\vectY) -- (29-1*\vectX,0-1*\vectY);
\draw[double arrow milieu=\arrpt pt colored by black and \colorSque, rounded corners=\cornerex ex, ] (32-.5*\vectX,0-.5*\vectY) -- (32-1*\vectX,0-1*\vectY) -- (0-1*\vectX,0-1*\vectY) -- (0-2*\vectX,0-2*\vectY) -- (1-2*\vectX,0-2*\vectY);

\foreach \x in {4,6,...,30}{  \toLeftSque{\x}}
\draw[double arrow milieu=\arrpt pt colored by black and \colorSque, rounded corners=\cornerex ex, ] (-30-1*\vectX, 0-1*\vectY) -- (-30-2*\vectX,0-2*\vectY) -- (-31-2*\vectX,0-2*\vectY) -- (-31-0*\vectX,0-0*\vectY) -- (-31.5-0*\vectX,0-0*\vectY);

\foreach \x in {2,4,...,29}{  \toFrontSque{\x}}
\draw[double arrow milieu=\arrpt pt colored by black and \colorSque, rounded corners=\cornerex ex, ] (.4-30*\vectX, 0-30*\vectY) -- (2-30*\vectX, 0-30*\vectY) -- (2-31*\vectX, 0-31*\vectY) -- (1-31*\vectX, 0-31*\vectY) ;

\draw[double arrow milieu=\arrpt pt colored by black and \colorSque, rounded corners=\cornerex ex, ](-1-28.5*\vectX, 0-28.5*\vectY) -- (-1-0*\vectX, 0-0*\vectY) -- (-2-0*\vectX, 0-0*\vectY) --(-2-1*\vectX, 0-1*\vectY);
\draw[double arrow milieu=\arrpt pt colored by black and \colorSque, rounded corners=\cornerex ex, ] (-1-29.5*\vectX, 1-29.5*\vectY) -- (-1-29*\vectX, 1-29*\vectY) -- (-1-29*\vectX, 0-29*\vectY) -- (-1-28.5*\vectX, 0-28.5*\vectY);
\draw[double arrow milieu=\arrpt pt colored by black and \colorSque, rounded corners=\cornerex ex, ] (-1-30.5*\vectX, 0-30.5*\vectY) -- (-1-30*\vectX, 0-30*\vectY) -- (-1-30*\vectX, 1-30*\vectY) -- (-1-29.5*\vectX, 1-29.5*\vectY) ;
\draw[double arrow milieu=\arrpt pt colored by black and \colorSque, rounded corners=\cornerex ex, ] (-1-31.5*\vectX, 3-31.5*\vectY)  -- (-1-31*\vectX, 3-31*\vectY) -- (-1-31*\vectX, 0-31*\vectY) -- (-1-30.5*\vectX, 0-30.5*\vectY) ;
\draw[double arrow milieu=\arrpt pt colored by black and \colorSque, rounded corners=\cornerex ex, ] (2-30.5*\vectX, 0-30.5*\vectY) -- (2-31*\vectX, 0-31*\vectY) -- (0-31*\vectX, 0-31*\vectY) -- (0-32*\vectX, 0-32*\vectY) -- (-1-32*\vectX, 0-32*\vectY) -- (-1-32*\vectX, 3-32*\vectY) -- (-1-31.5*\vectX, 3-31.5*\vectY) ;


\draw[double arrow milieu=\arrpt pt colored by black and \colorWrite, rounded corners=\cornerex ex, ] (1+31.5*\vectX, 1+31.5*\vectY) -- (1+32*\vectX, 1+32*\vectY) -- (1+32*\vectX, 3+32*\vectY) -- (2+32*\vectX, 3+32*\vectY) -- (2+32*\vectX, 1+32*\vectY) -- (3+32*\vectX, 1+32*\vectY) -- (3+32*\vectX, 3+32*\vectY) -- (4+32*\vectX, 3+32*\vectY) -- (4+32*\vectX, 1+32*\vectY) -- (4+31.5*\vectX, 1+31.5*\vectY);
\draw[double arrow milieu=\arrpt pt colored by black and \colorWrite, rounded corners=\cornerex ex, ] (1+30.5*\vectX, 3+30.5*\vectY) -- (1+31*\vectX, 3+31*\vectY) -- (1+31*\vectX, 1+31*\vectY) -- (1+31.5*\vectX, 1+31.5*\vectY) ;
\foreach \z in {30, 28, ..., 1} {\writeFour{\z}{3}{1}}
\draw[double arrow milieu=\arrpt pt colored by black and \colorWrite, rounded corners=\cornerex ex, ]  (1 -.5*\vectX, 1 -.5*\vectY) -- (1 +0*\vectX, 1 +0*\vectY) -- (1 +0*\vectX, 3 +0*\vectY) -- (1 +.5*\vectX, 3 +.5*\vectY);
\draw[double arrow milieu=\arrpt pt colored by black and \colorWrite, rounded corners=\cornerex ex, ] (-1 -.5*\vectX, 3 -.5*\vectY) -- (-1 -1*\vectX, 3 -1*\vectY) -- (-1 -1*\vectX, 1 -1*\vectY) -- (0 -1*\vectX, 1 -1*\vectY) -- (0 -1*\vectX, 3 -1*\vectY) -- (1 -1*\vectX, 3 -1*\vectY) -- (1 -1*\vectX, 1 -1*\vectY) -- (1 -.5*\vectX, 1 -.5*\vectY);

\draw[double arrow milieu=\arrpt pt colored by black and \colorWrite, rounded corners=\cornerex ex, ] (4+31.5*\vectX, 1+31.5*\vectY) -- (4+31*\vectX, 1+31*\vectY) -- (2+31*\vectX, 1+31*\vectY) -- (2+31*\vectX, 3+31*\vectY) -- (2+30.5*\vectX, 3+30.5*\vectY);

\foreach \z in {30,28, ..., 2} {\writeFour{\z}{3}{2}}

\draw[double arrow milieu=\arrpt pt colored by black and \colorWrite, rounded corners=\cornerex ex, ] (2+.5*\vectX, 3+.5*\vectY) -- (2-0*\vectX, 3-0*\vectY) -- (2-0*\vectX, 1-0*\vectY) -- (2.5-0*\vectX, 1-0*\vectY);


\draw[double arrow milieu=\arrpt pt colored by black and \colorExit, rounded corners=\cornerex ex, ] (32+3*\vectX,5+3*\vectY) -- (32+3*\vectX,0+3*\vectY) -- (32+2.5*\vectX,0+2.5*\vectY);
\draw[double arrow milieu=\arrpt pt colored by black and \colorExit, rounded corners=\cornerex ex, ] (32+2.5*\vectX,0+2.5*\vectY) -- (32+2*\vectX,0+2*\vectY) -- (32+2*\vectX,-1+2*\vectY) -- (32+1.5*\vectX,-1+1.5*\vectY);
\draw[double arrow fin=\arrpt pt colored by black and \colorExit, rounded corners=\cornerex ex, ] (32+1.5*\vectX,-1+1.5*\vectY) -- (32+0*\vectX,-1+0*\vectY) -- (35+0*\vectX,-1+0*\vectY);

\draw[double arrow milieu=\arrpt pt colored by black and \colorExit, rounded corners=\cornerex ex, ] (3+1.5*\vectX, 3+1.5*\vectY) -- (3+2*\vectX, 3+2*\vectY) -- (3+2*\vectX, 1+2*\vectY) -- (3.5+2*\vectX, 1+2*\vectY);
\draw[double arrow milieu=\arrpt pt colored by black and \colorExit, rounded corners=\cornerex ex, ] (2.5-0*\vectX, 1-0*\vectY) -- (3-0*\vectX, 1-0*\vectY) -- (3+1*\vectX, 1+1*\vectY) -- (3+1*\vectX, 3+1*\vectY) -- (3+1.5*\vectX, 3+1.5*\vectY);

\foreach \x in {4,6,...,28} {\exitToFace{2}{3}{\x}}
\draw[double arrow milieu=\arrpt pt colored by black and \colorExit, rounded corners=\cornerex ex, ] (29.5 + 2*\vectX, 1+2*\vectY) -- (30 + 2*\vectX, 1+2*\vectY) -- (30 + 2*\vectX, 3+2*\vectY) -- (31 + 2*\vectX, 3+2*\vectY) -- (31 + 2*\vectX, 1+2*\vectY) -- (32 + 2*\vectX, 1+2*\vectY) -- (32 + 2*\vectX, 3+2*\vectY) -- (32 + 1.5*\vectX, 3+1.5*\vectY);
\draw[double arrow milieu=\arrpt pt colored by black and \colorExit, rounded corners=\cornerex ex, ] (32 + 1.5*\vectX, 3+1.5*\vectY) -- (32 + 1*\vectX, 3+1*\vectY) -- (32 + 1*\vectX, 1+1*\vectY) -- (31 + 1*\vectX, 1+1*\vectY) -- (31 + 1*\vectX, 2+1*\vectY) -- (30 + 1*\vectX, 2+1*\vectY) -- (30 + 1*\vectX, 1+1*\vectY) -- (15 + 1*\vectX, 1+1*\vectY) -- (15 + .5*\vectX, 1+.5*\vectY);

\draw[double arrow milieu=\arrpt pt colored by black and \colorExit, rounded corners=\cornerex ex, ] (32+30.5*\vectX, -18+30.5*\vectY)  -- (32+31*\vectX, -18+31*\vectY) -- (32+31*\vectX, -16+31*\vectY) -- (32+32*\vectX, -16+32*\vectY) -- (32+32*\vectX, -15+32*\vectY) -- (32+31*\vectX, -15+31*\vectY);
\foreach \x in {14, 12, ...,2}{ \bottomTriangleExit{\x}}

\foreach \x in {16, 14,...,-14}{ \bottomSquareExit{\x}}


\foreach \x in {-15, -13,...,15}{ \backSquareExit{\x}}
\foreach \x in {0, 2,...,12}{ \backTriangleExit{\x}}
\draw[double arrow milieu=\arrpt pt colored by black and \colorExit, rounded corners=\cornerex ex, ] (32+18*\vectX,30.5+18*\vectY) --  (32+18*\vectX,31+18*\vectY) -- (32+16*\vectX,31+16*\vectY) -- (32+16*\vectX,32+16*\vectY) -- (32+15.5*\vectX,32+15.5*\vectY) ;

\foreach \x in {15, 13,...,0}{ \topSquareExit{\x}}

\draw[double arrow milieu=\arrpt pt colored by black and \colorExit, rounded corners=\cornerex ex, ] (32-.5*\vectX,32-.5*\vectY) -- (32-1*\vectX,32-1*\vectY) -- (32-1*\vectX,15-1*\vectY) -- (32+3*\vectX,15+3*\vectY) -- (32+3*\vectX,4+3*\vectY);

\draw[double arrow milieu=\arrpt pt colored by black and \colorExit, rounded corners=\cornerex ex, ] (15+.5*\vectX, 1+.5*\vectY) -- (15,1) -- (32,1) -- (32-.5*\vectX, 1-.5*\vectY);
\draw[double arrow milieu=\arrpt pt colored by black and \colorExit, rounded corners=\cornerex ex, ] (32-.5*\vectX, 1-.5*\vectY) -- (32-1*\vectX, 1-1*\vectY) -- (16-1*\vectX, 1-1*\vectY) -- (16-1.5*\vectX, 1-1.5*\vectY);
\foreach \x in {2,4,...,14}{\toFrontSquare{\x}}

\foreach \x in {16,14,...,4}{  \toFrontTriangle{\x}}
\draw[double arrow milieu=\arrpt pt colored by black and \colorExit, rounded corners=\cornerex ex, ] (30-29.5*\vectX,1-29.5*\vectY) --(30-30*\vectX,1-30*\vectY) -- (32-30*\vectX,1-30*\vectY) -- (32-30.5*\vectX,1-30.5*\vectY) ;
\draw[double arrow milieu=\arrpt pt colored by black and \colorExit, rounded corners=\cornerex ex, ] (32-30.5*\vectX,1-30.5*\vectY) -- (32-31*\vectX,1-31*\vectY) -- (32-32*\vectX,1-32*\vectY) -- (32-32*\vectX,.5-32*\vectY);

\foreach \x in {0, 2,...,14}{ \frontSquareTriangleExit{\x}}
\draw[double arrow milieu=\arrpt pt colored by black and \colorExit, rounded corners=\cornerex ex, ] (32-16*\vectX, -31-16*\vectY)  -- (32-16*\vectX, -32-16*\vectY) -- (32-15.5*\vectX, -32-15.5*\vectY);

\cubeFront{-32.5}{-32.5}{-32.5}{32.5}{32.5}{32.5}{g}


\begin{scope}[shift={(100,0)}]


\cubeFull{-33.5}{.5}{1.5}{-32.5}{1.5}{3.5}{bl}{babyblue}
\cubeFull{-33.5}{-1.5}{1.5}{-32.5}{-.5}{3.5}{bl}{babyblue}
\cubeFull{-33.5}{-3.5}{.5}{-32.5}{-1.5}{1.5}{bl}{babyblue}
\cubeFull{-33.5}{-3.5}{-1.5}{-32.5}{-1.5}{-.5}{bl}{babyblue}
\cubeFull{-33.5}{-.5}{-.5}{-32.5}{.5}{.5}{bl}{babyblue}
\cubeFull{-33.5}{1.5}{.5}{-32.5}{3.5}{1.5}{bl}{babyblue}
\cubeFull{-33.5}{1.5}{-1.5}{-32.5}{3.5}{-.5}{bl}{babyblue}
\cubeFull{-33.5}{.5}{-3.5}{-32.5}{1.5}{-1.5}{bl}{babyblue}
\cubeFull{-33.5}{-1.5}{-3.5}{-32.5}{-.5}{-1.5}{bl}{babyblue}

\cubeBack{-32.5}{-32.5}{-32.5}{32.5}{32.5}{32.5}{g}{gray}{g2}


\cubeFull{-3.5}{.5}{31.5}{-1.5}{1.5}{32.5}{bl}{tomato}
\cubeFull{-3.5}{-1.5}{31.5}{-1.5}{-.5}{32.5}{bl}{tomato}
\cubeFull{-1.5}{-3.5}{31.5}{-.5}{-1.5}{32.5}{bl}{tomato}
\cubeFull{.5}{-3.5}{31.5}{1.5}{-1.5}{32.5}{bl}{tomato}
\cubeFull{-.5}{-.5}{31.5}{.5}{.5}{32.5}{bl}{tomato}
\cubeFull{-1.5}{1.5}{31.5}{-.5}{3.5}{32.5}{bl}{tomato}
\cubeFull{.5}{1.5}{31.5}{1.5}{3.5}{32.5}{bl}{tomato}
\cubeFull{1.5}{.5}{31.5}{3.5}{1.5}{32.5}{bl}{tomato}
\cubeFull{1.5}{-1.5}{31.5}{3.5}{-.5}{32.5}{bl}{tomato}

\cubeFull{-1.5}{-32.5}{1.5}{-.5}{-31.5}{3.5}{bl}{tomato}
\cubeFull{.5}{-32.5}{1.5}{1.5}{-31.5}{3.5}{bl}{tomato}
\cubeFull{-3.5}{-32.5}{.5}{-1.5}{-31.5}{1.5}{bl}{tomato}
\cubeFull{-3.5}{-32.5}{-1.5}{-1.5}{-31.5}{-.5}{bl}{tomato}
\cubeFull{-.5}{-32.5}{-.5}{.5}{-31.5}{.5}{bl}{tomato}
\cubeFull{1.5}{-32.5}{.5}{3.5}{-31.5}{1.5}{bl}{tomato}
\cubeFull{1.5}{-32.5}{-1.5}{3.5}{-31.5}{-.5}{bl}{tomato}
\cubeFull{-1.5}{-32.5}{-3.5}{-.5}{-31.5}{-1.5}{bl}{tomato}
\cubeFull{.5}{-32.5}{-3.5}{1.5}{-31.5}{-1.5}{bl}{tomato}

\cubeFull{31.5}{-.5}{31.5}{32.5}{.5}{32.5}{bl}{tomato}
\cubeFull{31.5}{-32.5}{-.5}{32.5}{-31.5}{.5}{bl}{tomato}

\cubeFull{31.5}{.5}{19.5}{32.5}{1.5}{31.5}{bl}{tomato}
\cubeFull{31.5}{.5}{16.5}{32.5}{1.5}{19.5}{bl}{tomato}
\cubeFull{31.5}{-1.5}{19.5}{32.5}{-.5}{31.5}{bl}{tomato}
\cubeFull{31.5}{-1.5}{16.5}{32.5}{-.5}{19.5}{bl}{tomato}
\cubeFull{31.5}{-31.5}{.5}{32.5}{-19.5}{1.5}{bl}{tomato}
\cubeFull{31.5}{-19.5}{.5}{32.5}{-16.5}{1.5}{bl}{tomato}
\cubeFull{31.5}{-31.5}{-1.5}{32.5}{-19.5}{-.5}{bl}{tomato}
\cubeFull{31.5}{-19.5}{-1.5}{32.5}{-16.5}{-.5}{bl}{tomato}
\cubeFull{31.5}{-.5}{-.5}{32.5}{.5}{.5}{bl}{tomato}
\cubeFull{31.5}{16.5}{.5}{32.5}{19.5}{1.5}{bl}{tomato}
\cubeFull{31.5}{19.5}{.5}{32.5}{31.5}{1.5}{bl}{tomato}
\cubeFull{31.5}{16.5}{-1.5}{32.5}{19.5}{-.5}{bl}{tomato}
\cubeFull{31.5}{19.5}{-1.5}{32.5}{31.5}{-.5}{bl}{tomato}
\cubeFull{31.5}{31.5}{-.5}{32.5}{32.5}{.5}{bl}{tomato} 
\cubeFull{31.5}{.5}{-19.5}{32.5}{1.5}{-16.5}{bl}{tomato}
\cubeFull{31.5}{.5}{-31.5}{32.5}{1.5}{-19.5}{bl}{tomato}
\cubeFull{31.5}{-1.5}{-19.5}{32.5}{-.5}{-16.5}{bl}{tomato}
\cubeFull{31.5}{-1.5}{-31.5}{32.5}{-.5}{-19.5}{bl}{tomato}
\cubeFull{31.5}{-.5}{-32.5}{32.5}{.5}{-31.5}{bl}{tomato} 

\cubeBack{-32.5}{-32.5}{-32.5}{32.5}{32.5}{32.5}{g}{gray}{g2}
\cubeFront{-32.5}{-32.5}{-32.5}{32.5}{32.5}{32.5}{g}

\cubeFull{-1.5}{32.5}{19.5}{-.5}{33.5}{31.5}{bl}{babyblue}
\cubeFull{-1.5}{32.5}{16.5}{-.5}{33.5}{19.5}{bl}{babyblue}
\cubeFull{-.5}{32.5}{31.5}{.5}{33.5}{32.5}{bl}{babyblue} 
\cubeFull{.5}{32.5}{19.5}{1.5}{33.5}{31.5}{bl}{babyblue}
\cubeFull{.5}{32.5}{16.5}{1.5}{33.5}{19.5}{bl}{babyblue}
\cubeFull{-31.5}{32.5}{.5}{-19.5}{33.5}{1.5}{bl}{babyblue}
\cubeFull{-19.5}{32.5}{.5}{-16.5}{33.5}{1.5}{bl}{babyblue}
\cubeFull{-32.5}{32.5}{-.5}{-31.5}{33.5}{.5}{bl}{babyblue} 
\cubeFull{-31.5}{32.5}{-1.5}{-19.5}{33.5}{-.5}{bl}{babyblue}
\cubeFull{-19.5}{32.5}{-1.5}{-16.5}{33.5}{-.5}{bl}{babyblue}
\cubeFull{-.5}{32.5}{-.5}{.5}{33.5}{.5}{bl}{babyblue}
\cubeFull{16.5}{32.5}{.5}{19.5}{33.5}{1.5}{bl}{babyblue}
\cubeFull{19.5}{32.5}{.5}{31.5}{33.5}{1.5}{bl}{babyblue}
\cubeFull{31.5}{32.5}{-.5}{32.5}{33.5}{.5}{bl}{babyblue} 
\cubeFull{16.5}{32.5}{-1.5}{19.5}{33.5}{-.5}{bl}{babyblue}
\cubeFull{19.5}{32.5}{-1.5}{31.5}{33.5}{-.5}{bl}{babyblue}
\cubeFull{-1.5}{32.5}{-19.5}{-.5}{33.5}{-16.5}{bl}{babyblue}
\cubeFull{-1.5}{32.5}{-31.5}{-.5}{33.5}{-19.5}{bl}{babyblue}
\cubeFull{-.5}{32.5}{-32.5}{.5}{33.5}{-31.5}{bl}{babyblue} 
\cubeFull{.5}{32.5}{-19.5}{1.5}{33.5}{-16.5}{bl}{babyblue}
\cubeFull{.5}{32.5}{-31.5}{1.5}{33.5}{-19.5}{bl}{babyblue}

\cubeFull{-3.5}{.5}{-33.5}{-1.5}{1.5}{-32.5}{bl}{babyblue}
\cubeFull{-3.5}{-1.5}{-33.5}{-1.5}{-.5}{-32.5}{bl}{babyblue}
\cubeFull{-1.5}{-3.5}{-33.5}{-.5}{-1.5}{-32.5}{bl}{babyblue}
\cubeFull{.5}{-3.5}{-33.5}{1.5}{-1.5}{-32.5}{bl}{babyblue}
\cubeFull{-.5}{-.5}{-33.5}{.5}{.5}{-32.5}{bl}{babyblue}
\cubeFull{-1.5}{1.5}{-33.5}{-.5}{3.5}{-32.5}{bl}{babyblue}
\cubeFull{.5}{1.5}{-33.5}{1.5}{3.5}{-32.5}{bl}{babyblue}
\cubeFull{1.5}{.5}{-33.5}{3.5}{1.5}{-32.5}{bl}{babyblue}
\cubeFull{1.5}{-1.5}{-33.5}{3.5}{-.5}{-32.5}{bl}{babyblue}

\end{scope}

\end{tikzpicture}
\caption{\label{fig:TUR1univ3D}
One block of the presented 3D radius 1 turedo $T \in \univsim{3}{\simrig}{1}$.
At this stage $T$ comes from the block above, the front and left blocks are also non-empty and the others are empty. $T$ will exit the block by entering the next block at its right.
On the right is represented the disposition of information. The light blue cubes are information to read in adjacent blocks and the red ones are information written by $T$ in the considered block.
In this example, a letter of $A'$ is encoded by 2 letters of $A$, the transition table is of arbitrary small size for readability and they are written in a cross on the faces of blocks.
On the left is the behaviour of $T$ in the same considered block, in blue is the reading phase (building also a skeleton), in orange the computation phase, in red the writing phase and in green the exit phase.}
\end{figure}

	  $T$ has the following behaviour in a block, also illustrated in figure \ref{fig:TUR1univ3D}. First, $T$ performs a reading phase : entering a block by the middle of a face, it travels straight to the edge of the face until reading the stop marker. We assume $T$ entering by the middle of the face as it truly enters at distance one of the middle but this shift of one position in  one direction can be remembered and corrected immediately after entering.
	  This allows $T$ to turn and go back toward the center, reading the transition table and the buffer. 
	  All this information gathered, $T$ follows a fixed path composed of shrinking zigzags, staying in the planes formed by the 3D cross centered in the block (hence the peculiar way we place information, to reserve space for this path).
	  Doing so, it reaches the center of each face while carrying the transition table and filling the transition buffer when a presence marker is read.
	  As the order in which the faces are visited is fixed, $T$ can store in its head state which neighbouring blocks are empty.
	  This reading phase has the added benefit to have created a skeleton in the block, enabling easy travel for the following phases.
	  Next $T$ enters the computing phase consisting of a square zigzag, using the transition table and the now filled buffer carried previously and so present in all the branches of the centered 3D cross.
	  Once the next step of the simulation computed, $T$ can carry the letter to encode following the skeleton and write it on all previously identified faces except the exit one.
	  The last phase, the exit one, consists in carrying the encoded letter to the exit face, retrieve on the centered 3D cross the transition table and the buffer (reset with the new state and letter of the computed transition) and writing all this information as previously presented thanks to square and triangular zigzags. 
	  Finally $T$ exits to the next block, at distance one of the center of the face (to get around the reading skeleton).
\end{proof}

The combination of 3D and liberal simulations allows to shrink the radius of any turedo to $1$.
In this construction, the computation of simulated transitions is done internally in the simulating turedo's head.
The challenging part however is in acquiring the states of distant neighbors.
Thankfully the liberal nature of the simulation allows travel through empty blocks and the 3D enables crossing paths without intersection.
Still, a rigorous organisation is needed in order to prevent overlapping.

\begin{theorem}\label{thm:3Dradiusliberal}
  For any radius $r$ and any ${T_r\in\tur{3}{r}}$ there is ${T_1\in\tur{3}{1}}$ such that ${T_r\simlib T_1}$.
\end{theorem}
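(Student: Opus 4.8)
The plan is to let $T_1=(A,Q,\delta)$ simulate $T_r=(A_r,Q_r,\delta_r)$ with cubic blocks $\brect{(n,n,n)}$ for a fixed side $n$ depending only on $r$, $|A_r|$ and $|Q_r|$, using the liberal regime to let the head of $T_1$ wander through blocks encoding $\blanc$ and the third dimension to route many paths without intersection. The key simplification compared with the universal constructions of Theorems~\ref{thm:heatsink} and~\ref{thm:3Dunivr1rigor} is that $T_r$ is \emph{fixed}: its transition table $\delta_r$ can be hard-wired into $\delta$ and need not be carried from block to block. Hence at each macro-step the head of $T_1$ only has to gather the \emph{letters} encoded in the ball $\boule{3}{r}$ around the simulated head position, keep them in its (large but finite) internal state, perform one table lookup, write the result, and move to the next block.

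Each non-blank block $B=b\otimes z$ would carry two parts: a small \emph{data region} holding the encoding of $c_r(z)\in A_r$, repeated in a fixed pattern reachable from all six faces, together with presence markers that let a visiting head know the block is non-blank; and a large \emph{plumbing region} made of pre-designed, pairwise disjoint \emph{tubes} of empty cells joining prescribed points of the faces. The headless decoding $\alpha$ reads only the data region (and maps any block whose data region is empty to $\blanc$, whatever tubes have been traversed), while $\beta$ additionally reads the state of $T_1$'s head, which encodes the simulated state $q\in Q_r$, a partially filled record in $A_r^{\boule{3}{r}}$, and $O(1)$ bookkeeping (current phase, current tube index, current heading).

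A macro-step would run in four phases. \emph{Reconnaissance:} from the block of the simulated position $z$, the head follows a fixed tour visiting, in a fixed order, the block of each $z'\in z+\boule{3}{r}$; at each it reads the presence marker and, if set, the encoded letter, recording it internally, then moves on; the route between consecutive targets obeys a fixed dimension-ordered (Manhattan) discipline, so every traversed block knows the entry face, the exit face, and which tube to offer. \emph{Computation:} with the full pattern $\patt{c_r}{z}{\boule{3}{r}}$ in hand, $T_1$ computes $\delta_r(q,\patt{c_r}{z}{\boule{3}{r}})=(q',a',\mu)$ by a single lookup in the hard-wired table. \emph{Writing:} the head runs through a fresh tube of the current block touching all faces, depositing the encoding of $a'$ and the presence markers in the data region. \emph{Exit:} it leaves through the face towards $b\otimes(z+\mu)$ and enters that block, setting $q\leftarrow q'$; the time scaling $k$ is the fixed maximal length of these four phases. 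A fixed $n$ suffices because over the whole simulation a given block $b\otimes y$ is touched only during macro-steps of simulated positions $z$ with $y\in z+\boule{3}{r}$, of which at most $|\boule{3}{r}|=O(r^3)$ are ever visited since $T_r$ is self-avoiding; each contributes $O(1)$ passes, so it is enough to reserve in $B$ a separate disjoint tube bundle for each of the finitely many (role, entry face, exit face, pass index) combinations, and in dimension $3$ such bundles can be laid out inside a cube of fixed side by staggering paths along the third coordinate. The usual checks are then routine: a block acquires a non-blank data region exactly when $T_r$ writes on the corresponding cell, so the block domain equals the simulated domain; all fuzz stays inside plumbing regions ignored by $\alpha$, so $\Gamma$ tracks the simulated orbit; and liberality is used precisely when the head must reach blocks $T_r$ has not yet visited.

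\textbf{Main obstacle.} The delicate point is the plumbing: one must guarantee that the head, being self-avoiding, never gets trapped, i.e. that the union of \emph{all} tubes ever needed inside one block --- including those used when the block is first entered as fuzz (empty transit) and those used later, once $T_r$ genuinely moves onto it and its data region must be written --- can be realized as pairwise disjoint paths in a cube whose side does not depend on the orbit. This forces a \emph{global} routing discipline (dimension-ordered routing) so that per-block tube demand is predictable and bounded by $O(r^3)$, and systematic use of the third dimension to separate tubes that would be forced to cross in $2$D; the latter is exactly the resource missing in the plane, which Theorem~\ref{thm:radius2kolmojordan} shows cannot be worked around there.
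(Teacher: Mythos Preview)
Your outline follows essentially the same strategy as the paper: hard-wire $\delta_r$ into the head of $T_1$, use cubic macro-blocks, run a fixed reconnaissance tour through all blocks $b\otimes z'$ with $z'\in z+\boule{3}{r}$ while accumulating the letters in the head state, then compute, write on the faces, and exit. Your counting argument---that a block $b\otimes y$ is ever touched only from the at most $|\boule{3}{r}|$ simulated positions $z$ with $y\in z+\boule{3}{r}$---is exactly the right bound and is what underlies the paper's construction.

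The one point you leave underspecified is the ``pass index'': you reserve a tube bundle per (role, entry face, exit face, pass index), but you do not say how the head of $T_1$, arriving in block $b\otimes y$ during the macro-step at $z$, \emph{knows} which pass index to use so as to avoid the tubes consumed during the macro-steps at other $z'$ with $y\in z'+\boule{3}{r}$. This cannot be read off the block (the reading would itself need a tube) and cannot be a running counter (unbounded). The paper resolves this by a \emph{coloring}: $T_1$ maintains the simulated head position modulo $2r$ in each coordinate, giving a color $c(z)\in\{0,\dots,8r^3-1\}$, and every block carries, for each color $c$, a dedicated reserved region (tubes of width $l$ along the edges of a centred cube of side $lc+1$, plus straight extensions to the faces); the whole reconnaissance from $z$ stays inside the colour-$c(z)$ region in every visited block. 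An equivalent phrasing that fits your write-up is to index the tube in $b\otimes y$ by the relative offset $y-z\in\boule{3}{r}$, which the head can track since the whole tour stays within $z+\boule{3}{r}$; distinct macro-steps $z\neq z'$ that both reach $y$ give distinct offsets, hence distinct tubes. Once you add this selection rule, your ``dimension-ordered routing'' and the use of the third coordinate to stagger the $O(r^3)$ tube bundles complete the argument exactly as in the paper.
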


\begin{proof}
Let $r \in \N_+$ and $T_r \in \tur{3}{r}$, $T_r = (A_r, Q_r, \delta_r)$. 
We build $T_1 = (A_1, Q_1, \delta_1)$, with $Q_1$ big enough to encode in one state a position $z \in \Z^3$ modulo $2r$ of each dimension and the $|\boule{3}{r}|$ neighbouring letters.
This allows for an instant computation of $\delta_r$ once all needed information is gathered.
Let $b$ be the block size, the critical aspect of this simulation is for $T_1$ to visit all blocks ${b\otimes z'}$ with $z' \in z+\boule{3}{r}$ for each simulation step.
Therefore we have to assign non intersecting exploration paths for all positions at distance less than $2r$. 
To achieve this, we define $\mathcal{C} = \lbrace{0, ..., 8r^3 -1}\rbrace$ a set of colors and we assign the color ${(z_1 \bmod 2r)} + {2r (z_2 \bmod 2r)} + {4r^2 (z_3 \bmod 2r)}$ to the block ${b\otimes (z_1,z_2,z_3)}$.
By taking $b = 8r^3l+7$ with $l \in \N_+$, in each block, for each color $c \in \mathcal{C}$, we can reserve tubes of width $l$ following the edges of a centered cube of edges of length $lc+1$ and the direct extension of said edges to the face of the block (see figure \ref{cube_nested_cubes}). 
This creates reserved spaces for each color consisting of centered nested cubes. 
Those cubes fill a space of $8r^3l$, we add $1$ to have a proper center and $6$ to have some padding near the faces of the blocks (we will discuss its necessity later). 
Note that $l$ actually doesn't need to be large, $l=10$ is enough.
On each face of the block ${b\otimes z}$, the letter $a \in A_r \subset A_1$ is repeated in a cross pattern (see \ref{cube_cross}).

\begin{figure}[h]
    \begin{subfigure}[t]{0.35\textwidth}
    \center

\begin{tikzpicture}[scale = .2,
b/.style={blue, line width=1mm}, 
b2/.style={blue, line width=.6mm, densely dotted, opacity = .5}, 
g/.style={gray, line width = .5mm}, 
g2/.style={gray, line width = .3mm, densely dotted, opacity = .5}, 
t/.style={line width=.5mm, densely dotted, ->, >=stealth,opacity = 0}] 

\draw [g,fill = lightgray]  (0,0) -- (0,10) -- (4,13) -- (14,13) -- (14,3) -- (10,0) -- cycle ; 

\draw[g2] (0,0) -- (4,3) -- (14,3); 
\draw[g2] (4,3) -- (4,13); 

\draw[b2] (5,0) -- (9,3) -- (9,13); 
\draw[b2] (0,5) -- (4,8) -- (14,8); 
\draw[b2] (2,11.5) -- (2,1.5) -- (12,1.5); 
\draw[b] (0,5) -- (10,5) -- (14,8); 
\draw[b] (5,0) -- (5,10) -- (9,13); 
\draw[b] (2,11.5) -- (12,11.5) -- (12,1.5); 

\draw[g]  (0,0) -- (0,10) -- (4,13) -- (14,13) -- (14,3) -- (10,0) -- cycle ; 
\draw[g]   (0,10) -- (10,10) -- (14,13) ; 
\draw[g]  (10,0) -- (10,10) ; 

\end{tikzpicture}
\caption{\label{cube_cross} The letter $a \in A_r \subset A_1$ is repeated on all faces in a cross pattern.}
    \end{subfigure}
    \hfill
    \begin{subfigure}[t]{0.34\textwidth}
    \center

\begin{tikzpicture}[scale = .2,
r/.style={red, line width=2mm}, 
r2/.style={red, line width=1.5mm, densely dotted, opacity = .5}, 
v/.style={islamicgreen, line width=2mm}, 
v2/.style={islamicgreen, line width=.5mm, densely dotted, ->, >=stealth}, 
g/.style={gray, line width = .5mm}, 
g2/.style={gray, line width = .3mm, densely dotted, opacity = .5}] 

\draw [g,fill = lightgray]  (0,0) -- (0,10) -- (4,13) -- (14,13) -- (14,3) -- (10,0) -- cycle ; 

\draw[g2] (0,0) -- (4,3) -- (14,3); 
\draw[g2] (4,3) -- (4,13); 

\draw[v]   (2.5,7.5) -- (2.5+1,7.5+.75) -- (2.5+1,10+.75) ; 
\draw[v]  (2.5+1,7.5+.75) -- (0+1,7.5+.75) ; 
\draw[v2]  (2.5+1,7.5+.75) -- (-4+1,7.5+.75) ; 
\draw[v2]  (2.5+1,7.5+.75) -- (2.5-1.6,7.5-1.2) ; 

\draw[r]  (2.5+1,2.5+.75) -- (2.5+1,7.5+.75) -- (2.5+3,7.5+2.25) -- (7.5+3,7.5+2.25) -- (7.5+3,2.5+2.25) -- (7.5+1,2.5+.75) -- cycle ;
\draw[r]  (2.5+1,7.5+.75) -- (7.5+1,7.5+.75) -- (7.5+3,7.5+2.25) ; 
\draw[r]  (7.5+1,2.5+.75) -- (7.5+1,7.5+.75) ; 
\draw[r2] (2.5+1,2.5+.75) -- (2.5+3,2.5+2.25) -- (7.5+3,2.5+2.25); 
\draw[r2] (2.5+3,2.5+2.25) -- (2.5+3,2.5+7.25); 

\draw[v]   (2.5,7.5) -- (2.5+1,7.5+.75);

\draw[g]  (0,0) -- (0,10) -- (4,13) -- (14,13) -- (14,3) -- (10,0) -- cycle ; 
\draw[g]   (0,10) -- (10,10) -- (14,13) ; 
\draw[g]  (10,0) -- (10,10) ; 

\draw[v2] (2.5+1,9.5+.75) -- (2.5+1, 14+.75); 

\end{tikzpicture}
\caption{\label{cube_nested_cubes} Reserved space for the exploration path of color $c \in \mathcal{C}$.}

    \end{subfigure}
	\hfill
	\begin{subfigure}[t]{0.25\textwidth}
	\center
	
\begin{tikzpicture}[scale = .2,
r/.style={red, line width=2mm}, 
b/.style={blue, line width=1mm}, 
v/.style={islamicgreen, line width=2mm}, 
g/.style={gray, line width = .5mm}, 
t/.style={line width=.5mm, densely dotted, ->, >=stealth,opacity = 0}] 

\draw [g,fill = lightgray]  (0,0) -- (0,10) -- (10,10) -- (10,0) -- cycle ; 

\draw[v] (2.5,2.5) -- (7.5,2.5) -- (7.5,7.5) -- (2.5,7.5) -- cycle;

\draw[b] (0,5) -- (10,5);
\draw[b] (5,0) -- (5,10);

\end{tikzpicture}
\caption{\label{cube_face} One face of a block.}

	\end{subfigure}   
\caption{\label{cube} Representation of the reserved space for color $c$ in block ${b\otimes z}$. 
In red are the tubes of width $l$ on the edges of the centered cube of edges of length $lc+1$.
In green are the extensions of said tubes, allowing to reach the reserved space for color $c$ in the neighbouring blocks, completing the exploration path of color $c$ (\ref{cube_nested_cubes}). Those extensions also allow the exploration path to access the blue cross containing the letter of position $z$ in the configuration (to either read or write)(\ref{cube_face}).}

  \end{figure}
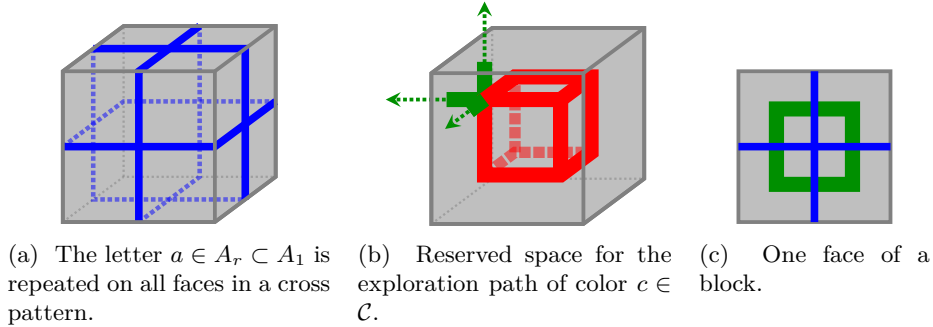

$F_{T_r}(c_r,z_r,q_r)$ is simulated as follow.
Assuming $T_1$ knows the color $c \in \mathcal{C}$ of block ${b\otimes z_r}$ (which is possible as its position in all directions modulo $2r$ is stored in its head state), we can define a reference starting position to explore the neighbouring blocks by ordering the directions of $\boule{3}{1}$.
Moreover, this ordering allows to decide a depth-first exploration of the blocks ${b\otimes z'}$ with $z' \in \boule{3}{r}$ passing through each block at most seven times.
Once the exploration done, back in block ${b\otimes z}$, the head of $T_1$ contains all necessary information to compute $\delta_r$ and all that remains to do is to write the computed letter in a cross pattern on the faces of the block. 
This is possible following a eulerian path, crossing only on the center of the faces, hence the padding of 2 defined earlier. 
The 1 padding left is for $T_1$ to align itself with its next color (which is possible since it knows its current color and has computed to in which block to go next).
\end{proof}

By combining Theorem~\ref{thm:3Dunivr1rigor} and Theorem~\ref{thm:3Dradiusliberal}, we get the existence of an intrinsically universal 3D turedo for liberal simulations as expressed in the following corollary.

\begin{corollary}\label{coro:fulluniversality3D}
  ${\univsimdim{3}{\simlib}\cap\turdim{3}\neq\emptyset}$.
\end{corollary}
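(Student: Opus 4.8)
The plan is to obtain the universal turedo by composing the two preceding constructions. Let ${U\in\tur{3}{1}}$ be the turedo provided by Theorem~\ref{thm:3Dunivr1rigor}, so that ${T_1'\simrig U}$ for every ${T_1'\in\tur{3}{1}}$. I claim that ${U\in\univsimdim{3}{\simlib}}$, which proves the corollary since ${U\in\turdim{3}}$. Indeed, take any ${T\in\turdim{3}}$; it lies in $\tur{3}{r}$ for some radius $r$, so Theorem~\ref{thm:3Dradiusliberal} yields some ${T_1\in\tur{3}{1}}$ with ${T\simlib T_1}$. On the other hand ${T_1\simrig U}$ by the choice of $U$, hence ${T_1\simlib U}$ (a rigorous simulation is in particular a simulation, and any simulation is in particular a liberal one: given a finite seed of $T_1$ one can always pick some valid encoding with corresponding block domain since the decoding maps are onto). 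Composing the two liberal simulations then gives ${T\simlib U}$, as desired.

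Thus the only real work is to establish that $\simlib$ is transitive, i.e. that liberal simulations compose. Suppose ${T\simlib T_1}$ with block size $b^{(1)}$, time scaling $k_1$ and global decoding map $\Gamma_1$, and ${T_1\simlib U}$ with block size $b^{(2)}$, time scaling $k_2$ and global decoding map $\Gamma_2$. The natural candidate is a block encoding of $\globstat{T}$ into $\globstat{U}$ with block size ${b=b^{(1)}\otimes b^{(2)}}$ and time scaling ${k=k_1k_2}$: a block of size $b$ in $U$ is tiled by $b^{(1)}_1\cdot b^{(1)}_2\cdot b^{(1)}_3$ inner blocks of size $b^{(2)}$ (using the nested decomposition ${\bofs{b}{z}=\brnd{b^{(2)}}{\bofs{b}{z}}+\bofs{b^{(2)}}{\bofs{b}{z}}}$), each of which decodes via $\Gamma_2$ into one cell of a block of size $b^{(1)}$ for $T_1$, which in turn decodes via $\Gamma_1$ into a single cell of $T$. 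One defines the composed headless and head decoding maps by applying the $U$-to-$T_1$ decoding on the inner blocks and then the $T_1$-to-$T$ decoding; their domains are the preimages of valid $T_1$-states. A short check shows these maps remain partial onto with ${\blanc^{\brect{b}}}$ decoded to $\blanc$, and that the block-domain correspondence chains through: a block of size $b$ is non-blank for $U$ iff some inner block of size $b^{(2)}$ is non-blank iff the corresponding block of size $b^{(1)}$ is non-blank for $T_1$ iff the corresponding cell is non-blank for $T$.

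For the liberal composition itself, given a finite global state $g$ of $T$, apply ${T\simlib T_1}$ to get a finite global state $g_1$ of $T_1$ (finite because its block domain equals $\dom{g}$, which is finite) with corresponding block domain and ${\Gamma_1(g_1)=g}$, such that ${\globmap{T}^t(g)=\Gamma_1(\globmap{T_1}^{k_1t}(g_1))}$ for all $t$. Then apply ${T_1\simlib U}$ to $g_1$ to get a global state $g_2$ of $U$ with corresponding block domain and ${\Gamma_2(g_2)=g_1}$, such that ${\globmap{T_1}^{s}(g_1)=\Gamma_2(\globmap{U}^{k_2s}(g_2))}$ for all $s$. Taking ${s=k_1t}$ and combining the two identities yields ${\globmap{T}^t(g)=\Gamma_1\bigl(\Gamma_2(\globmap{U}^{k_1k_2t}(g_2))\bigr)}$ for all $t$, while the two block-domain conditions chain up to ${\dom{g}=\bdom{b}{g_2}}$. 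Hence ${T\simlib U}$, and the corollary follows.

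The expected main obstacle is purely bookkeeping: one must check that the intermediate states are genuinely \emph{finite} so that each simulation definition applies, that the product block size interacts correctly with the reference-point/offset decomposition, and that validity is preserved under composition of the decoding maps. Because the liberal definition only demands the \emph{existence} of a suitable simulating initial state, this is lighter than it would be for the rigorous or fuzzless variants (where head movements and fuzz would have to be tracked through both layers), and no new idea is needed beyond what is already packaged in Theorems~\ref{thm:3Dunivr1rigor} and~\ref{thm:3Dradiusliberal}.
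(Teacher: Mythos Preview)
Your proposal is correct and follows exactly the approach of the paper, which simply states that the corollary is obtained by combining Theorem~\ref{thm:3Dunivr1rigor} and Theorem~\ref{thm:3Dradiusliberal}. You have additionally spelled out the transitivity of liberal simulation (with the product block size ${b^{(1)}\otimes b^{(2)}}$ and time factor ${k_1k_2}$), which the paper leaves implicit.
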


\section{Discussion}
\label{sec:persp}

The problem tackled in this work depends on three parameters (radius, dimension, simulation).
Our results give a rather clear picture of the simulation hierarchies in 3D, but we left several open question in the 2D case, in particular: is there a turedo in $\tur{2}{2}$ which is universal for $\tur{2}{1}$ under rigorous simulations? what if we allow liberal simulations?
Actually even Theorem~\ref{theo:noradius1univ} raises questions: the simulation impossibility makes a crucial use of non-connected seeds, does this impossibility remain if we just ask simulation of orbits starting from connected seeds?

In this work we chose the square lattice in 2D (to simplify and as we also consider the 3D case) whereas oritatami are mostly considered on the hexagonal lattice.
We don't expect any significant difference on the simulation hierarchy result by changing from square to hexagonal lattice on a given model (either turedos or oritatami).
However, it is not clear that the delay hierarchy for oritatami behaves likes the radius hierarchy for turedos.
In particular, we don't know if an analog of Theorem~\ref{thm:heatsink} holds for oritatami.
The key difference between a large radius turedo and a large delay oritatami is that the turedo can gather information locally across obstacles, while the oritatami can only probe information around that can be reached by a path of empty positions (because it can only probe by trying to position a small strand of beads).

We end this paper by suggesting the following two directions in order to better understand the gap between turedos and oritatami systems: what if we restrict turedos to 'see' only neighboring positions that can be reached through a path of $r$ empty positions? and what if we enrich oritatami systems by a more general 'magnetic' attraction law between beads where pairs of distant beads can still contribute to the total amount of attraction that the free strand at the end of the molecule is trying to maximize (let's say by a quadratic decrease with distance up to some radius)?

\bibliography{dna}

\end{document}